\newtheorem{theorem}{Theorem}[section]
\newtheorem{corollary}[theorem]{Corollary}
\newtheorem{lemma}[theorem]{Lemma}
\theoremstyle{definition}
\newtheorem{definition}[theorem]{Definition}
\newtheorem{remark}[theorem]{Remark}
\def\Snospace~{\S{}}
\newcommand{\eg}{{\it e.g.}\xspace}
\newcommand{\ie}{{\it i.e.}\xspace}
\newcommand{\BEAS}{\begin{eqnarray*}}
\newcommand{\EEAS}{\end{eqnarray*}}
\newcommand{\BEA}{\begin{eqnarray}}
\newcommand{\EEA}{\end{eqnarray}}
\newcommand{\BIT}{\begin{itemize}}
\newcommand{\EIT}{\end{itemize}}
\newcommand{\BEQ}{\begin{equation}}
\newcommand{\EEQ}{\end{equation}}
\newcommand{\BEQS}{\begin{equation*}}
\newcommand{\EEQS}{\end{equation*}}
\newcommand{\BNUM}{\begin{enumerate}}
\newcommand{\ENUM}{\end{enumerate}}
\newcommand{\reals}{{\mathbb{R}}}
\DeclareMathOperator{\argmax}{arg\,max}
\DeclareMathOperator{\argmin}{arg\,min}
\newcommand{\idp}{\epsilon}
\newcommand{\idpa}{{\idp_a}} 
\newcommand{\idpi}{{\idp_i}} 
\newcommand{\idpv}{{\bm{\idp}}}
\newcommand{\ndp}{\nu}
\newcommand{\ndpa}{\nu_a}
\newcommand{\ba}{x_a} 
\newcommand{\dba}{x_a} 
\newcommand{\fdb}{{\bf x}} 
\newcommand{\pdb}{{\bf y}} 
\newcommand{\dbi}{{x_i}} 
\newcommand{\adj}[1]{{#1}'} 
\newcommand{\adb}{{\adj{\fdb}}} 
\newcommand{\bg}{x} 
\newcommand{\m}{p} 
\newcommand{\jd}{\mu}  
\newcommand{\cdu}{\jd^1}  
\newcommand{\cdl}{\jd^0}  
\newcommand{\bn}{\{0,1\}^n}
\newcommand{\vals}{X}
\newcommand{\val}{z}
\newcommand{\M}{{\mathcal{M}}}
\newcommand{\infadd}{{\xi}}
\newcommand{\infmul}{{\gamma}}
\newcommand{\infmulmtx}{\Gamma}
\newcommand{\eps}{\varepsilon}
\newcommand{\outcomes}{{\mathcal{O}}}
\newcommand{\outc}{o}
\DeclareMathOperator{\supp}{supp}
\newcommand{\estim}{\kappa}
\newcommand{\lip}{{\rho}}
\newcommand{\cd}[1]{{\pi^{#1}}}
\newcommand{\invmtx}{\Phi}
\newcommand{\invmulmtx}{\Phi}
\newcommand{\invmulentry}{\phi}
\newcommand{\cmtx}{\Upsilon}
\newcommand{\expect}{{\mathbb{E}}}
\newcommand{\given}{ \mid }
\newcommand{\avg}{\tau}
\newcommand{\avgi}{\tau_i}
\newcommand{\avgf}{\avg f}
\newcommand{\avgg}{\avg g}
\newcommand{\omf}{1\!-\!f}
\newcommand{\prat}{{\rho}}
\newcommand{\jds}{{\mathscr D}}
\newcommand{\graphs}{{\mathscr G}}
\newcommand{\transpose}{\intercal}
\newcommand{\state}{{\bm{\sigma}}}
\newcommand{\extfld}{{\bm{h}}}
\newcommand{\gavg}[1]{{\left\langle #1 \right\rangle}}
\newcommand{\infpriv}{inferential privacy\xspace}
\newcommand{\NS}{{\mathcal{NS}}}
\newcommand{\jj}{J}
\newif\ifextabs
\newif\ify
\newif\ifproof
\begin{document}
\title{Inferential Privacy Guarantees for Differentially Private Mechanisms}
\author{
  Arpita Ghosh\thanks{%
    Cornell University, Ithaca, NY, USA. 
    \texttt{arpitaghosh@cornell.edu}
  }
 \and 
 Robert Kleinberg\thanks{%
   Cornell University, Ithaca, NY, USA, and 
   Microsoft Research New England, Cambridge, MA, USA.
   \texttt{robert.kleinberg@cornell.edu}
 }
}
\date{}

\begin{titlepage}
\maketitle
\thispagestyle{empty}

\begin{abstract}
The correlations and network structure amongst individuals in datasets today---whether explicitly articulated, or deduced from biological or behavioral connections---pose new issues around privacy guarantees, because of inferences that can be made about one individual from another's data. This motivates quantifying privacy in networked contexts in terms of `inferential privacy'---which measures the change in beliefs about an individual's data from the result of a computation---as originally proposed by Dalenius in the 1970's. Inferential privacy is implied by differential privacy when data are independent, but can be much worse when data are correlated; indeed, simple examples, as well as a general impossibility theorem of Dwork and Naor, preclude the possibility of achieving non-trivial inferential privacy when the adversary can have arbitrary auxiliary information. In this paper, we ask how differential privacy guarantees translate to guarantees on inferential privacy in networked contexts: specifically, under what limitations on the adversary's information about correlations, modeled as a prior distribution over datasets, can we deduce an inferential guarantee from a differential one? 

We prove two main results. The first result pertains to distributions that satisfy a natural positive-affiliation condition, and gives an upper bound on the inferential privacy guarantee for any differentially private mechanism. This upper bound is matched by a simple mechanism that adds Laplace noise to the sum of the data. The second result pertains to distributions that have weak correlations, defined in terms of a suitable ``influence matrix''. The result provides an upper bound for inferential privacy in terms of the differential privacy parameter and the spectral norm of this matrix.
\end{abstract}
\end{titlepage}

\ify
\section{Introduction}
\label{sec:intro}

Privacy has always been a central issue in the discourse surrounding the collection and use of personal data. As the nature of data collected online grows richer, however, fundamentally new privacy issues emerge. In a thought-provoking piece entitled ``Networked Rights and Networked Harms'' \cite{LB15}, the sociologists Karen Levy and danah boyd argue that the `networks' surrounding data today---whether articulated (as in explicitly declared friendships on social networks), behavioral (as in connections inferred from observed behavior), or biological (as in genetic databases)---raise conceptually new questions that current privacy law and policy cannot address. \citeauthor{LB15} present case studies to demonstrate how the current individual-centric legal frameworks for privacy do not provide a means to account for the networked contexts now surrounding personal data.

An analogous question arises on the formal front. One of computer science's fundamental contributions to the public debate about private data---most prominently via the literature on {\em differential privacy\footnote{Differential privacy, which measures privacy via the relative amount of new information disclosed about an individual's data by her participation in a dataset, has emerged as the primary theoretical framework for quantifying privacy loss. }}~\cite{DPbook}---has been to  provide a means to {\em measure} privacy loss, which enables evaluating the privacy implications of proposed data analyses and disclosures in quantitative terms. However, differential privacy focuses on the privacy loss to an individual by {\em her} contribution to a dataset, and therefore---by design---does not capture all of the privacy losses from inferences that could be made about one person's data due to its correlations with {\em other} data in networked contexts. For instance, the privacy implications of a database such as 23andme for one individual depend not just on that person's own data and the computation performed, but also on her siblings' data.

In this paper, we look to understand the implications of such `networked' data for formal privacy guarantees. How much can be learnt about a single individual from the result of a computation on  correlated data, and how does this relate to the differential privacy guarantee of the computation? 
\paragraph{Inferential privacy.} 
A natural way of assessing whether a mechanism $\M$ protects the privacy of an individual is to ask, ``Is it possible that someone,
after observing the mechanism's output, will learn a lot about the individual's private data?'' In other words, what is the {\em inferential privacy}---the largest possible ratio between the posterior and prior beliefs about an individual's data after observing the result of a computation on the database? (This quantity is identical to the differential privacy parameter of the mechanism when individuals' data are independent; see \S\ref{sec:def-ndp} and \cite{KS14}.) 

The inferential privacy guarantee will depend, of course, on both the nature of the correlations in the database and on the precise mechanism used to perform the computation. Instead of seeking to {\em design} algorithms that achieve a particular inferential privacy guarantee---which would necessitate choosing a particular computational objective and correlation structure---we instead seek to {\em analyze} the inferential privacy guarantees provided by differentially private algorithms. Specifically, we ask the following question: consider the class of all mechanisms providing a certain differential privacy guarantee, say $\eps$. What is the worst possible inferential privacy guarantee for a mechanism in this class?

This question is pertinent to a policy-maker who can prescribe that analysts provide some degree of differential privacy to individuals while releasing their results, but cannot control how---\ie, using what specific algorithm---the analyst will provide this guarantee. In other words, rather than an algorithm designer who wants to design an inferential privacy-preserving algorithm (for a particular scenario), this question adopts the perspective of a policy-maker who can set privacy standards that analysts must obey, but is agnostic to the analysts' computational objectives. We choose the differential privacy guarantee as our measure of privacy for many reasons: it is, at present, the only widely-agreed-upon privacy guarantee known to provide strong protections even against arbitrary side information; there is a vast toolbox of differentially private algorithms and a well-understood set of composition rules for combining them to yield new ones; finally, differential privacy is now beginning to make its way into policy and legal frameworks as a potential means for quantifying privacy loss. 

Measuring privacy loss via inferential privacy
formalizes Dalenius's~\cite{Dal77} desideratum that ``access to a statistical database should not enable one to learn anything about an individual that could not be learned without access''. 
While it is well known\footnote{see, \eg, \cite{DworkNaor,DPbook}} that non-trivial inferential privacy guarantees are incompatible with non-trivial utility guarantees in the presence of arbitrary auxiliary information, our primary contribution is modeling and quantifying {\em what} degree of inferential privacy is in fact achievable under a {\em particular} form of auxiliary information, such as that resulting from a known correlation structure or a limited set of such structures. For example, as noted earlier, if the individuals' rows in the database are conditionally independent given the adversary's auxiliary information, then the inferential privacy guarantee for any individual collapses to her differential privacy guarantee. At the other extreme, when all individuals' data are perfectly correlated, the inferential privacy parameter can exceed the differential privacy parameter by a factor of $n$ (the number of individuals in the database) as we will see below. What happens for correlations that lie somewhere in between these two extremes? Do product distributions belong to a broader class of distributions with {\em benign correlations} which ensure that an individual's \infpriv is not much worse than her differential privacy? \bkedit{A key 
contribution of our paper (\autoref{t-bm})
answers this question affirmatively while linking it to 
a well-known sufficient condition for `correlation decay'
in mathematical physics.}

\paragraph{Correlations in networked datasets and their privacy consequences.} 
We start with a caricature example to begin exploring how one might address these questions in a formal framework. Consider a database which contains an individual Athena and her (hypothetical) identical twin Adina, who is so identical to Athena that the rows in the database corresponding to Athena and Adina are identical in (the databases corresponding to) every possible state of the world. 
A differential privacy guarantee of $\idp$ to all database participants translates to an inferential privacy guarantee of only $2\idp$ to Athena (and her twin), since the ``neighboring'' database where Athena and Adina are different simply cannot exist.\footnote{Differential privacy guarantees that the probability of an outcome $\outc$ changes by at most a factor $e^{\idp}$ amongst databases at Hamming distance one, so that if $\fdb_1, \fdb_2$, and $\fdb_3$ denote the databases where the bits of Athena and Adina are $(0,0)$, $(1,0)$ and $(1,1)$ respectively, differential privacy guarantees that
$
\Pr(\outc|\fdb_1) \leq e^{\idp}\cdot \Pr(\outc|\fdb_2) \leq e^{2 \idp}\cdot \Pr(\outc|\fdb_3).
$
From here, a simple calculation using Bayes' Law---see equation~\eqref{eq:def-ndp.2} in \cref{sec:def-ndp}---implies
that:
$
\frac{\Pr(Athena=1|o)/\Pr(Athena=0|o)}{\Pr(Athena=1)/\Pr(Athena=0)} \leq e^{2\idp}, 
$
so that the inferential privacy guarantee is $2\idp$.} 

The erosion of Athena's privacy becomes even more extreme if the database contains $n>2$ individuals and they are all clones of Athena; a generalization of the preceding calculation now shows that the inferential privacy parameter is $n\eps$. However, in reality one is unlikely to participate in a database with many identical clones of oneself. Instead, it is interesting to consider cases with non-extreme correlations. For example, suppose now that the database contains data from Zeus and all of his descendants, and that every child's bit matches the parent's bit with probability $p > \frac12$. The degree of privacy afforded to Zeus now depends on many aspects of the model: the strength of the correlation ($p$), the number of individuals in the database ($n$), and structural properties of the tree of family relationships---its branching factor and depth, for instance. Which of these parameters contribute most crucially to inferential privacy? Is Zeus more likely to be implicated by his strong correlation with a few close relatives, or by a diffuse ``dragnet'' of correlations with his distant offspring?

In general, of course, networked databases, and the corresponding inferential privacy guarantees, do not come with as neat or convenient a correlation structure as in this example. In full generality, we can represent the idea of networked similarity via a joint distribution on databases that gives the prior probability of each particular combination of bits. So, for example, a world where all individuals in the database are ``twins'' would correspond to a joint distribution which has non-zero probability only on the all-zeros and all-ones databases, whereas a world where everyone's data is independent has multiplicative probabilities for each database. 

Such a model of correlations allows capturing a rich variety of networked contexts: in addition to situations where a single database contains sensitive information about $n$ individuals whose data have known correlations, it also captures the situation---perhaps closest to reality---where there are multiple databases to which multiple individuals contribute different (but correlated) pieces of information. In this latter interpretation, an inferential privacy guarantee limits the amount that an adversary may learn about {\em one} individual's contribution to {\em one} database, despite the correlations both across individuals and between a single individual's contributions to different databases.\footnote{We are grateful to Kobbi Nissim for suggesting this interpretation of our model.}

\ifextabs\else
In general, of course, networked databases, and the corresponding networked privacy guarantees, do not come with as neat or convenient a correlation structure as in this example: real databases with network structure---whether biological, behavioral or social---contain (i) more than one individual with (ii) degrees of `similarity' to each other that lie somewhere between these extremes of being either unrelated or identical. In other words, you and I and everyone we know has something like, but not quite, a twin---and many such semi-twins, in fact---in (and because of) the datasets containing information about us today. How much privacy do we really get from a given computation---and how can one begin to formalize privacy guarantees in such `networked' databases in a manner that yields analytical insight?

Now suppose that instead of containing the hypothetical twin Adina in addition to Athena, the database contains Athena's (mythical but non-hypothetical) relative, Zeus---now we have the feature of non-extreme correlation between individuals. We quantify the relation that Zeus and Athena are not identical, but still similar enough to each other to cause privacy concern, by saying that Athena and Zeus's bits are {\em the same with some probability} $p$ between $\frac{1}{2}$ and $1$---the extreme value of $\frac{1}{2}$ corresponds to independence
and the extreme of $1$ to perfect correlation. Generalizing still further, we can suppose that the database contains Zeus and all of his descendants, with $p$ now representing every child's probability of matching her parent's bit, a model that is commonly used for genetic correlations in evolutionary biology \citep{Cavender,Farris,Neyman}. 
\fi 

\paragraph{Our results.} 

Consider a policy-maker who 
\bkedit{specifies}
that an analyst must provide a certain differential privacy guarantee,
\bkedit{and wants to comprehend the \infpriv consequences of this policy for}
the population whose (correlated) data is being utilized. Our two main results can be interpreted as providing guidance to such a policy maker. The first result (\cref{t-pa}) supplies a closed-form expression for the inferential privacy guarantee as a function of the differential privacy parameter when data are {\em positively affiliated}\footnote{Positive affiliation (\cref{def:pa}) is a widely used notion of positive correlation amongst random variables. \bkedit{It is satisfied, for example, by graphical models whose edges encode positively-correlated conditional distributions on pairs of variables.}}\citep{MW82}.
The second result (\cref{t-bm}) allows understanding the behavior of the inferential privacy guarantee as a function of the degree of correlation in the population; it identifies a property of the joint distribution of data that ensures that the policy-maker can meet a given inferential privacy target via a differential privacy requirement that is a constant-factor scaling of that target. 

Among all mechanisms with a given differential privacy guarantee, which ones yield the worst inferential privacy when data are correlated? Our first main result, \cref{t-pa}, answers this question when data are positively affiliated, in addition to giving a closed-form expression for the inferential privacy guarantee. The answer takes the following form: we identify a simple property of mechanisms (\cref{def:max-biased}) such that any mechanism satisfying the property achieves the worst-case guarantee. Strikingly, the form of the worst-case mechanism does not depend on the joint distribution of the data, but {\em only} on the fact that the distribution satisfies positive affiliation.
We also provide one example of such a mechanism: a ``noisy-sum mechanism'' that simply adds Laplace noise to the sum of the data. 
This illustrates that the worst inferential privacy violations occur even with one of the most standard mechanisms for implementing differential privacy, rather than some contrived mechanisms.



The aforementioned results provide a sharp bound on the inferential privacy guarantee for positively affiliated distributions, but they say little about whether this bound is large or small in comparison to the differential privacy guarantee. Our second main result fills this gap: it provides an upper bound on the \infpriv guarantee when a {\em bounded affiliation} condition is satisfied on the correlations between individuals' rows in a database. Representing the strengths of these correlations by an {\em influence matrix} $\infmulmtx$, \cref{t-bm} asserts that if all row sums of this matrix are bounded by $1-\delta$ then every individual's \infpriv is bounded by $2 \idp/\delta$, regardless of whether or not the data are positively affiliated. Thus, \cref{t-bm} shows that in order to satisfy $\ndp$-\infpriv against all distributions with $(1-\delta)$-bounded affiliation, it suffices for the policy-maker to set $\idp = \delta \ndp / 2$. We complement this result with an example showing that the ratio of \infpriv to differential privacy can indeed be as large as $\Omega(\frac{1}{\delta})$, as the row sums of the influence matrix approach 1. Thus, the equivalence between inferential and differential privacy, $\ndp=\idp$, which holds for independent distributions, degrades gracefully to $\ndp = O(\idp)$ as one introduces correlation into the distribution, but only up to a point: as the row sums of the influence matrix approach 1, the ratio $\ndp/\idp$ can diverge to infinity, becoming unbounded when the row sums exceed 1. 

\paragraph{Techniques.} Our work exposes a formal connection between the analysis of inferential privacy in networked contexts and the analysis of spin systems in mathematical physics. In brief, application of a differentially private mechanism to correlated data is analogous to application of an external field to a spin system. Via this analogy, physical phenomena such as phase transitions can be seen to have consequences for data privacy: they imply that small variations in the amount of correlation between individuals' data, or in the differential privacy parameter of a mechanism, can sometimes have gigantic consequences for inferential privacy (\S\ref{sec:physics} elaborates on this point). Statistical physics also supplies the blueprint for \cref{t-bm} and its proof: our bounded affiliation condition can be regarded as a multiplicative analogue of Dobrushin's Uniqueness Condition \citep{Dob68,Dob70}, and our proof of \cref{t-bm} adapts the proof technique of the Dobrushin Comparison Theorem \citep{Dob70,Follmer82,Kunsch82} from the case of additive approximation to multiplicative approximation. \bkedit{Since Dobrushin's Uniqueness Condition is known to be one of the most
general conditions ensuring exponential decay of correlations 
in physics, our \autoref{t-bm} can informally be  interpreted as 
saying that differential privacy implies strong \infpriv guarantees when 
the structure of networked correlations is such that, conditional on the 
adversary's side information, the correlations between 
individuals' data decay rapidly as their distance in the network increases.}

\paragraph{Related work.}
\label{s-relwork}
\bkedit{Our paper adopts the term {\em \infpriv} as a convenient shorthand for a notion
that occurs in many prior works, dating back to \citet{Dal77}, which is
elsewhere sometimes called ``before/after privacy''~\cite{DPbook},
``semantic privacy''~\cite{KS14}, or ``noiseless privacy''~\cite{noiseless}. 
Dwork and McSherry observed that differentially private mechanisms supply \infpriv against 
adversaries whose prior is a product distribution; this was stated implicitly in~\cite{Dwork06} 
and formalized in~\cite{KS14}.
However, when adversaries can have arbitrary
auxiliary information, \infpriv becomes unattainable except by mechanisms that provide
little or no utility; see \cite{DworkNaor,nfl} for precise impossibility results along these 
lines. Responses to this predicament have varied: some works propose stricter notions
of privacy based on simulation-based semantics, \eg~{\em zero-knowledge privacy}~\cite{GLP_ZK}, 
others propose weaker notions based on restricting the set of prior distributions
that the adversary may have, \eg~{\em noiseless privacy}~\cite{noiseless}, 
and others incorporate aspects
of both responses, \eg~{\em coupled-world privacy}~\cite{bassily} and
the {\em Pufferfish} framework~\cite{pufferfish}. Our work is similar to some of the 
aforementioned ones in that we incorporate restrictions on the adversary's prior
distribution, however our goal is quite different:
rather than proposing a new privacy definition or a new class of
mechanisms, we quantify how effectively an existing class of mechanisms
($\eps$-differentially private mechanisms) achieves an existing privacy
goal (\infpriv).}

Relations between differential privacy and network analysis
have been studied by many authors---\eg \citep{KNRS13} and
the references therein---but this addresses a very different
way in which networks relate to privacy: the network in those
works is part of the data, whereas in ours it is a description of 
the auxiliary information.

The exponential mechanism of McSherry and Talwar~\cite{mt07}
can be interpreted in terms of Gibbs measures, and 
Huang and Kannan~\cite{HK12} leveraged this interpretation
and applied a non-trivial fact about free-energy minimization
to deduce consequences about incentive compatibility of
exponential mechanisms. Aside from their work, we are not
aware of other applications of statistical mechanics in
differential privacy.

\section{Defining Inferential Privacy}
\label{sec:def-ndp}

In this section we specify our notation and basic assumptions and 
definitions. A population of $n$ individuals is indexed
by the set $[n]=\{1,\ldots,n\}$. Individual $i$'s private 
data is represented by the element $\bg_i \in \vals$, where $\vals$ 
is a finite set. Except in \S\ref{sec:bounded} we will assume throughout, for simplicity,
that $\vals = \{0,1\}$, \ie each individual's private data is a 
single bit. When focusing on the networked 
privacy guarantee for a particular individual, we denote her index by $a \in [n]$
and sometimes refer to her as ``Athena''. 

A database is an $n$-tuple $\fdb \in \vals^n$ representing the private data
of each individual. As explained in \cref{sec:intro}, our model encodes
the `network' structure of the data using a probability distribution on
$\vals^n$; we denote this distribution by $\jd$. 
A computation performed on the database $\fdb$, whose outcome will be
disclosed to one or more parties, is called a {\em mechanism} and denoted
by $\M$. The set of possible outcomes of the computation
is $\outcomes$, and a generic outcome will be denoted by $\outc \in
\outcomes$. 
\%

\paragraph{Differential privacy~\cite{Dwork06,DMNS06,DPbook}.}
For a database $\fdb = (\bg_1,\ldots,\bg_n)$ 
and an individual $i \in [n]$, we use
$\fdb_{-i}$ to denote the $(n-1)$-tuple 
formed by omitting $\dbi$ from $\fdb$,
\ie $\fdb_{-i} = (\bg_1,\ldots,\bg_{i-1},\bg_{i+1},\ldots,\bg_n)$.
We define an equivalence relation $\sim_i$ by 
specifying that $\fdb \sim_i \adb \, \Leftrightarrow \,
\fdb_{-i} = \adb_{-i}$. For a mechanism $\M$ and individual $i$,
the differential privacy parameter $\idpi$ is defined by
\[
  e^\idpi = \max \left\{ \left. \frac{\Pr(\M(\fdb)=\outc)}{\Pr(\M(\adb) = \outc)} 
    \, \right| \,
    \fdb \sim_i \adb, \, \outc \in \outcomes \right\}.
\]
For any vector $\bm{\idp} = (\idp_1,\ldots,\idp_n)$ we say that 
$\M$ is $\bm{\idp}$-differentially private if 
the differential privacy parameter of $\M$ with respect to $i$ is at 
most $\idpi$, for every individual $i$. 

\paragraph{Inferential privacy.}
We define {\em \infpriv}
as an upper bound on the (multiplicative) change in 
$\frac{\Pr(\ba=\val_1)}{\Pr(\ba=\val_0)}$ when performing a Bayesian update
from the prior distribution $\jd$ to the posterior distribution
after observing $\M(\fdb) = \outc$. (If $\M$ has uncountably many potential
outcomes, we must instead consider doing a Bayesian update after observing
a positive-probability event $\M(\fdb) \in S$ for some set of outcomes $S$.) 
\begin{definition}
\label{def:ndp}
We say that mechanism $\M$ satisfies $\ndp$-\infpriv (with respect to individual $a$)
if the inequality $\frac{\Pr(\ba=\val_1 \given \M(\fdb) \in S)}{\Pr(\ba=\val_0 \given \M(\fdb) \in S)}
\leq e^{\ndp} \cdot \frac{\Pr(\ba=\val_1)}{\Pr(\ba=\val_0)}$
holds for all $\val_0,\val_1 \in X$ and all $S \subset \outcomes$ such that $\Pr(\M(\fdb) \in S)>0$. 
The {\em \infpriv parameter} of $\M$ is the smallest $\ndp$ with this property.
\end{definition}
\paragraph{Inferential versus differential privacy.}
A short calculation using Bayes' Law illuminates the relation between
these two privacy notions.
\begin{align}
\frac{\Pr(\ba=\val_1 \given \M(\fdb) \in S)}
     {\Pr(\ba=\val_0 \given \M(\fdb) \in S)} 
&=
\frac{\Pr(\M(\fdb) \in S \given \ba=\val_1)}
     {\Pr(\M(\fdb) \in S \given \ba=\val_0)}
\cdot
\frac{\Pr(\ba=\val_1)}{\Pr(\ba=\val_0)}.
\nonumber
\end{align}
Thus, the \infpriv parameter of mechanism $\M$ with respect
to individual $a$ is determined by:
\begin{equation} \label{eq:ndp}
e^{\ndp_a} = \sup \left\{ \left.
  \frac{\Pr(\M(\fdb) \in S \given \ba=\val_1)}
     {\Pr(\M(\fdb) \in S \given \ba=\val_0)}
  \, \right| \,
  \val_0,\val_1 \in X, \, \Pr(\M(\fdb) \in S) > 0 \right\}.
\end{equation}
Equivalently, if $\cdl,\cdu$ denote the conditional distributions
of $\fdb_{-a}$ given that $\ba=\val_0$ and $\ba=\val_1$, respectively,
then $\M$ is $\ndp_a$-inferentially private if 
\begin{align} \label{eq:def-ndp.1}
  \Pr(\M(\val_1,\pdb_1) \in S) &\leq e^{\ndp_a} \Pr(\M(\val_0,\pdb_0) \in S) \quad
  \mbox{when } \pdb_0 \sim \cdl, \, \pdb_1 \sim \cdu.\\
\intertext{For comparison, differential privacy asserts }
\label{eq:def-ndp.2}
  \Pr(\M(\val_1,\pdb) \in S) & \leq e^{\idpa} \Pr(\M(\val_0,\pdb) \in S) 
  \quad \forall \pdb.
\end{align}
When individuals' rows in the database are independent, 
$\cdl = \cdu$ and~\eqref{eq:def-ndp.2} implies~\eqref{eq:def-ndp.1}
with $\ndp_a = \idpa$ 
by averaging over $\pdb$. In other words, when bits are independent, 
$\idpa$-differential
privacy implies $\idpa$-inferential privacy. 
When bits are correlated, however, this implication breaks
down because the databases $\pdb_0,\pdb_1$ 
in~\eqref{eq:def-ndp.1} are sampled from different distributions. 
The `twins example' from \S\ref{sec:intro} illustrates
concretely why this makes a difference: if 
$\cdl$ and $\cdu$ are point-masses
on $(0,\ldots,0)$ and $(1,\ldots,1)$, respectively, 
then the \infpriv parameter of $\M$ is determined by the equation 
$e^{\ndp} = \sup_{S} 
\left\{ \frac{\Pr(\M(1,\ldots,1) \in S)}{\Pr(\M(0,\ldots,0) \in S)} \right\}$.
For an $\idp$-differentially-private mechanism this ratio may be as large 
as $e^{n \idp}$ since
the Hamming distance between $(0,\ldots,0)$ and $(1,\ldots,1)$ is $n$.

\fi 
\section{Positively Affiliated Distributions}
\label{sec:positively}
\label{s-pa}
Suppose a designer wants to ensure that Athena 
receives an \infpriv guarantee of 
$\ndp$, given a joint distribution $\jd$ on the 
data of individuals in the database. What is the 
largest differential privacy parameter $\idp$ that 
ensures this guarantee? The question is very
challenging even in the special case of
binary data (\ie, when $X=\{0,1\}$) because the ratio 
defining \infpriv (Equation~\ref{eq:def-ndp.1})
involves summing exponentially many terms in the
numerator and denominator. Determining the worst-case
value of this ratio over all differentially private
mechanisms $\M$ can be shown to be equivalent to solving a linear
program with exponentially many variables (the
probability of the event $\M(\fdb) \in S$ for every 
potential database $\fdb$) and exponentially many
constraints (a differential privacy constraint
for every pair of adjacent databases).

Our main result in this section answers this question when
individuals' data are 
binary-valued and {\em positively affiliated} \citep{FKG,MW82},
a widely used notion of positive correlation:
\cref{t-pa} gives a closed-form formula 
(Equation~\ref{eq:pa}) that one can invert to solve 
for the maximum differential privacy parameter $\idp$ 
that guarantees inferential privacy $\ndp$ when 
data are positively affiliated. The theorem
also characterizes the `extremal' mechanisms achieving the worst-case
\infpriv guarantee in~\eqref{eq:pa} as those satisfying a 
`maximally biased' property (\cref{def:max-biased}).
Intuitively, if one wanted to signal as strongly as possible 
that Athena's bit is 1 (resp., 0), a natural strategy---given 
that Athena's bit correlates positively with everyone else's---is 
to have a distinguished outcome (or set of outcomes) whose 
probability of being output by the mechanism increases with the 
number of 1's (resp., the number of 0's) in the database `as 
rapidly as possible', subject to differential privacy constraints. 
\cref{t-pa} establishes that this intuition is valid under 
the positive affiliation assumption. 
(Interestingly, the intuition is not valid if one merely
assumes that Athena's own bit is positively correlated
with every other individual's bit; see \cref{rmk:pos-corr}.)
\cref{lem:noisy-sum} provides one simple example
of a maximally-biased mechanism, namely a ``noisy-sum
mechanism'' that simply adds Laplace noise to the sum
of the bits in the database. Thus, the
worst-case guarantee in \cref{t-pa} is achieved 
not by contrived worst-case mechanisms, but by
one of the most standard mechanisms in the differential
privacy literature. 

We begin by defining positive affiliation, 
a concept that has proven extremely valuable in auction theory 
(the analysis of interdependent value auctions), 
statistical mechanics, and probabilistic combinatorics. 
Affiliation is a strong form of positive correlation between 
random variables: informally, positive affiliation means that 
if some individuals' bits are equal to 1 (or more generally, 
if their data is `large'), other individuals' bits are more 
likely to equal 1 as well (and similarly for 0).
We formally define positive affiliation for our setting below
and then state a key lemma concerning positively affiliated
distributions, the FKG inequality.

\begin{definition}[Positive affiliation] Given any two strings $\fdb_1, \fdb_2 \in \bn$, let $\fdb_1 \vee \fdb_2$ and $\fdb_1 \wedge \fdb_2$ 
denote their pointwise maximum and minimum, respectively. A  joint distribution $\jd$ on $\bn$ satisfies positive affiliation if  
\[ 
\jd(\fdb_1 \vee \fdb_2)\cdot\jd(\fdb_1 \wedge \fdb_2) ~~\geq~~ \jd(\fdb_1)\cdot\jd(\fdb_2)
\] 
for all possible pairs of strings $\fdb_1, \fdb_2$. Equivalently, $\jd$ satisfies positive affiliation if 
$\log \jd(\fdb) $ is a supermodular function of $\fdb \in \bn$.
\label{def:pa}
\end{definition} 


\begin{lemma}[FKG inequality; \citet{FKG}] \label{lem:fkg}
If $f,g,h$ are three real-valued functions on $\bn$ such that
$f$ and $g$ are monotone and $\log h$ is supermodular, then
\begin{equation} \label{eq:fkg}
  \left[ \sum_{\fdb} f(\fdb) g(\fdb) h(\fdb) \right] \,
  \left[ \sum_{\fdb} h(\fdb) \right] \,
  \geq \,
  \left[ \sum_{\fdb} f(\fdb) h(\fdb) \right] \,
  \left[ \sum_{\fdb} g(\fdb) h(\fdb) \right].
\end{equation}
\end{lemma}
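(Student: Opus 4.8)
The plan is to recall the standard inductive proof of the FKG inequality, after two harmless normalizations. First, the quantity $\bigl[\sum_{\fdb} f g h\bigr]\bigl[\sum_{\fdb} h\bigr] - \bigl[\sum_{\fdb} f h\bigr]\bigl[\sum_{\fdb} g h\bigr]$ is unchanged if we replace $f$ by $f+c$ or $g$ by $g+c$ for any constant $c$ (the added terms cancel), so we may assume $f,g\geq 0$; and it is homogeneous of degree two in $h$, so after rescaling we may assume $h$ is a strictly positive probability measure $\mu$ on $\bn$, with $\log\mu$ still supermodular (we only subtracted a constant from $\log h$). Taking $f,g$ monotone in the same direction, which we may take to be nondecreasing, the claim then reads $\mathbb{E}_\mu[fg]\geq \mathbb{E}_\mu[f]\,\mathbb{E}_\mu[g]$, and I would prove this by induction on $n$.

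The base case $n=1$ is a one-line computation: with $\mu_0=\mu(0)$ and $\mu_1=\mu(1)$, one gets $\mathbb{E}_\mu[fg]-\mathbb{E}_\mu[f]\,\mathbb{E}_\mu[g]=\mu_0\mu_1\bigl(f(1)-f(0)\bigr)\bigl(g(1)-g(0)\bigr)\geq 0$, and log-supermodularity imposes no constraint on a two-point chain. For the inductive step, write a generic point of $\bn$ as $(w,b)$ with $w\in\bno$ and $b\in\{0,1\}$ the last coordinate; let $p_b$ be the probability that the last coordinate equals $b$ under $\mu$, and let $\mu^b$ be the conditional law of $w$ given that the last coordinate equals $b$ (if some $p_b=0$ we are reduced to dimension $n-1$). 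Restricting the supermodular function $\log\mu$ to the face where the last coordinate equals $b$, and subtracting $\log p_b$, shows each $\mu^b$ is log-supermodular on $\bno$. Setting $F(b)=\mathbb{E}_{\mu^b}[f(\cdot,b)]$ and $G(b)=\mathbb{E}_{\mu^b}[g(\cdot,b)]$, we have $\mathbb{E}_\mu[f]=\sum_b p_b F(b)$, $\mathbb{E}_\mu[g]=\sum_b p_b G(b)$, and $\mathbb{E}_\mu[fg]=\sum_b p_b\,\mathbb{E}_{\mu^b}[f(\cdot,b)g(\cdot,b)]$; by the inductive hypothesis applied to the log-supermodular measure $\mu^b$ and the monotone functions $f(\cdot,b),g(\cdot,b)$ on $\bno$, the last sum is at least $\sum_b p_b F(b)G(b)$. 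It therefore suffices to show $\sum_b p_b F(b)G(b)\geq\bigl(\sum_b p_b F(b)\bigr)\bigl(\sum_b p_b G(b)\bigr)$, which by the base-case identity equals $p_0p_1\bigl(F(1)-F(0)\bigr)\bigl(G(1)-G(0)\bigr)$, and this is nonnegative provided $F$ and $G$ vary in the same direction in $b$.

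The crux is thus to show $F(1)\geq F(0)$ and, symmetrically, $G(1)\geq G(0)$. Split $F(1)-F(0)=\bigl(\mathbb{E}_{\mu^1}[f(\cdot,1)]-\mathbb{E}_{\mu^1}[f(\cdot,0)]\bigr)+\bigl(\mathbb{E}_{\mu^1}[f(\cdot,0)]-\mathbb{E}_{\mu^0}[f(\cdot,0)]\bigr)$: the first difference is $\geq 0$ because $f(w,1)\geq f(w,0)$ pointwise, and the second is $\geq 0$ as soon as $\mu^1$ stochastically dominates $\mu^0$ on $\bno$. This last point is exactly where log-supermodularity of $h$ is essential: the likelihood ratio $\mu^1(w)/\mu^0(w)$ is a nondecreasing function of $w$ — for $w\leq w'$ this is just the supermodularity inequality for $\log\mu$ at the incomparable pair $(w,1),(w',0)$, rearranged — and combined with the log-supermodularity of $\mu^0$ this verifies the hypothesis of Holley's inequality, which yields that $\mu^1$ stochastically dominates $\mu^0$. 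The same argument gives $G(1)\geq G(0)$, closing the induction.

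I expect the stochastic-domination step to be the main obstacle, since Holley's inequality belongs to the same family of correlation inequalities as the statement being proved. To keep the argument self-contained one can instead derive the whole package — the FKG inequality, Holley's inequality, and the monotonicity of conditional expectations used above — from the Ahlswede--Daykin four functions theorem, whose own proof is an induction on $n$ with an elementary (if slightly fiddly) two-point base case; alternatively, one can simply invoke FKG as a classical result. Everything else is bookkeeping.
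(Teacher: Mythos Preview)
The paper does not prove this lemma; it is stated with attribution to Fortuin--Kasteleyn--Ginibre and used as a black box. Your sketch is the standard inductive argument and is essentially correct, including the normalizations and the conditioning-on-the-last-coordinate decomposition.

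The one genuine subtlety is precisely the one you identify: the step showing $F(1)\geq F(0)$ appeals to stochastic domination of $\mu^0$ by $\mu^1$, which you justify via Holley's inequality. Your verification of Holley's hypothesis is fine (monotone likelihood ratio together with log-supermodularity of $\mu^0$ does give the Holley condition $\mu^1(x\vee y)\mu^0(x\wedge y)\geq \mu^1(x)\mu^0(y)$, since $\mu^1=r\mu^0$ with $r$ nondecreasing yields $r(x\vee y)\,\mu^0(x\vee y)\mu^0(x\wedge y)\geq r(x)\,\mu^0(x)\mu^0(y)$). But Holley sits at the same logical level as FKG, so taken literally the induction is circular. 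Your proposed resolution---deriving the whole package from the Ahlswede--Daykin four-functions theorem, whose own inductive proof is self-contained with an elementary two-point base case---is the standard and correct way to make this rigorous. Since the paper is content simply to cite the result, either route (self-contained via four functions, or citation) is acceptable here.
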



In order to state the main result of this section, \cref{t-pa},
we must define a property that characterizes the mechanisms whose
\infpriv parameter meets the worst-case bound stated in the theorem. 
We defer the task of describing a mechanism that satisfies the
definition (or even proving that such a mechanism exists) until
\cref{lem:noisy-sum} below.
\begin{definition} \label{def:max-biased}
For $z \in \{0,1\}$, a mechanism $\M$ mapping $\bn$ to outcome set $\outcomes$ is 
called {\em maximally $z$-biased}, with respect to a vector of differential privacy parameters
$\bm{\idp} = (\idp_1,\ldots,\idp_n)$, if there exists a set of outcomes $S \subset \outcomes$ 
such that $\Pr(\M(\fdb) \in S) \propto \prod_{i=1}^n e^{- \idpi |\dbi-z|}$
for all $\fdb \in \bn$. In this case, we call $S$ a {\em distinguished outcome set} for $\M$.
\end{definition}

%
%

\begin{theorem} 
\label{t-pa}
Suppose the joint distribution $\jd$ satisfies positive affiliation. Then
for any $z \in \{0,1\}$ and any vector of differential privacy parameters,
$\bm{\idp} = (\idp_1,\ldots,\idp_n)$, the maximum of the ratio 
{\small \begin{equation} 
\label{eq:t-pa.ratio}
  \frac{\Pr(\M(\fdb) \in S \given \ba=z)}
       {\Pr(\M(\fdb) \in S \given \ba \neq z)},
\end{equation}}%
over all $\bm{\idp}$-differentially private mechanisms $\M$ and
outcome sets $S$,
is attained when $\M$ is maximally \mbox{$z$-biased}, with
distinguished outcome set $S$. Therefore,
the \infpriv guarantee to individual $a$ in the presence of 
correlation structure $\jd$
and differential privacy parameters $\idp_1,\ldots,\idp_n$
is given by the formula
{\small
\begin{equation} \label{eq:pa} 
\ndp_a = \max_{z \in \{0,1\}} \left\{
                     \ln \left|
                           \frac{\sum_{\fdb = (z,\pdb)} \jd^{z}({\pdb}) \exp \left( -\sum_{i=1}^n \idpi |\dbi-z| \right)}
                                  {\sum_{\fdb = (1-z,\pdb)} \jd^{1-z}({\pdb}) \exp \left( -\sum_{i=1}^n \idpi |\dbi-z| \right)}
                          \right| \right\}.
\end{equation}}%
\end{theorem}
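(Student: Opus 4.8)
The plan is to fix $z$ (say $z=0$, the other case being symmetric) and show that among all $\bm{\idp}$-differentially private mechanisms, the ratio in \eqref{eq:t-pa.ratio} is maximized by a maximally $0$-biased mechanism. First I would rewrite the numerator and denominator of \eqref{eq:t-pa.ratio}. Writing $q(\fdb) = \Pr(\M(\fdb) \in S)$, the numerator is $\sum_{\pdb} \jd^{0}(\pdb)\, q(0,\pdb)$ and the denominator is $\sum_{\pdb}\jd^{1}(\pdb)\, q(1,\pdb)$, where $\jd^{0},\jd^{1}$ are the conditional laws of $\fdb_{-a}$. The key observation is that $\bm{\idp}$-differential privacy constrains $q$ to satisfy $q(\fdb) \le e^{\idpi}q(\adb)$ whenever $\fdb \sim_i \adb$; iterating along a shortest path between any two databases gives $q(\fdb) \le q(\adb)\exp\bigl(\sum_i \idpi |\dbi - \dbpi|\bigr)$. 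Consequently, once we normalize by fixing $q$ at some reference point, every feasible $q$ is pointwise dominated by the ``extremal'' function $q^*(\fdb) \propto \prod_i e^{-\idpi|\dbi|}$ — which is exactly the maximally $0$-biased mechanism — up to a multiplicative constant on each database. The subtlety is that $q^*$ does not pointwise dominate $q$ \emph{after} the same normalization simultaneously in numerator and denominator: we need the function that makes the numerator large \emph{relative to} the denominator.

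The heart of the argument is therefore a ratio-maximization/exchange argument, and this is where I expect the main obstacle to lie. I would argue that if $\M$ is not maximally $0$-biased, one can perturb $q$ toward $q^*$ without decreasing the ratio. Concretely, normalize so that $q(1,\pdb)$-weighted denominator equals $1$; then we want to maximize the linear functional $\sum_{\pdb}\jd^0(\pdb) q(0,\pdb)$ subject to the differential-privacy inequalities and the normalization — a linear program. At an optimal vertex the DP constraints are tight along a spanning structure, and I would show this forces $q \propto q^*$. The cleaner route, and the one I would actually pursue, is to invoke the FKG inequality (\cref{lem:fkg}) directly: positive affiliation says $\log\jd$ is supermodular, hence (after conditioning) $\log\jd^0$ and $\log\jd^1$ inherit the relevant supermodularity, and $\prod_i e^{-\idpi|\dbi|}$ is a monotone function of $\fdb$. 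FKG then lets us compare the $\jd^1$-average of $q$ against its $\jd^0$-average: because $\jd^1$ stochastically dominates $\jd^0$ under positive affiliation and the extremal $q^*$ is monotone decreasing in the coordinates, replacing $q$ by $q^*$ can only help the numerator-to-denominator ratio. I would make this precise by writing the ratio as $\frac{\expect_{\jd^0}[\,q(0,\cdot)\,]}{\expect_{\jd^1}[\,q(1,\cdot)\,]}$ and bounding $q(0,\pdb)\le C\prod_i e^{-\idpi y_i}$, $q(1,\pdb)\ge C\,e^{-\idpa}\prod_i e^{-\idpi y_i}$ for the extremal witness, then checking that no monotone-decreasing $q$ beats the one that saturates every DP inequality.

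Once extremality is established, the second part — deriving formula \eqref{eq:pa} — is a direct substitution. For a maximally $z$-biased mechanism with distinguished set $S$, $\Pr(\M(\fdb)\in S) \propto \prod_i e^{-\idpi|\dbi - z|}$, so
\[
\frac{\Pr(\M(\fdb)\in S \given \ba=z)}{\Pr(\M(\fdb)\in S \given \ba\neq z)}
= \frac{\sum_{\fdb=(z,\pdb)} \jd^z(\pdb)\exp\!\bigl(-\sum_i \idpi|\dbi-z|\bigr)}{\sum_{\fdb=(1-z,\pdb)} \jd^{1-z}(\pdb)\exp\!\bigl(-\sum_i \idpi|\dbi-z|\bigr)},
\]
and taking logarithms and maximizing over $z\in\{0,1\}$ (with an absolute value to capture both directions of the Bayesian update in \cref{def:ndp}) yields \eqref{eq:pa}. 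I would close by noting that \cref{lem:noisy-sum}, proved below, exhibits an honest mechanism — Laplace noise on $\sum_i \dbi$ — realizing the maximally-biased property, so the worst case is attained within the feasible class and the supremum in \eqref{eq:ndp} is actually a maximum.
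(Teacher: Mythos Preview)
Your proposal correctly identifies FKG as the tool and correctly sets up the ratio, but there is a genuine gap in the FKG application. You try two routes: pointwise bounds on $q$ and a stochastic-domination argument comparing $\jd^0$ to $\jd^1$ for ``monotone-decreasing $q$''. Neither closes. The pointwise bounds go the wrong way (DP gives $q(0,\pdb)\le q(\mathbf{0})\prod_i e^{+\idpi y_i}$, not $e^{-\idpi y_i}$), and more fundamentally a general $\bm{\idp}$-DP mechanism need \emph{not} produce a monotone $q$: DP only bounds the ratio $q(\fdb)/q(\adb)$ in both directions, so $q$ can increase or decrease along any coordinate. Your ``checking that no monotone-decreasing $q$ beats the one that saturates every DP inequality'' is thus restricting to a subclass that does not contain all competitors.

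The missing idea---and the paper's key step---is that although $q$ itself is not monotone, the product $f(\fdb)\coloneqq q(\fdb)\,e^{\idpv\cdot\fdb}$ \emph{is} monotone non-decreasing: if $\adb$ is obtained from $\fdb$ by flipping coordinate $i$ from $0$ to $1$, DP gives $q(\adb)\ge e^{-\idpi}q(\fdb)$, hence $f(\adb)\ge f(\fdb)$. One then applies FKG not with the measure $\jd$ but with $h(\fdb)=\jd(\fdb)e^{-\idpv\cdot\fdb}$ (still log-supermodular, as a log-supermodular function plus a linear function), taking $g(\fdb)=\dba$ as the second monotone function. After cross-multiplying the desired inequality and adding a symmetric term to both sides, FKG applied to $(f,g,h)$ yields exactly the required comparison. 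Your conditioning on $\dba$ and invoking domination of $\jd^0$ by $\jd^1$ is replaced by the choice $g=\dba$ inside a single FKG application over all of $\bn$. Once you see that $q\cdot e^{\idpv\cdot\fdb}$ is the monotone object forced by DP, the rest of your outline (symmetry in $z$, substitution to get \eqref{eq:pa}) goes through as you wrote it.
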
 
\begin{proof}
Suppose $z=0$ and consider any $\bm{\idp}$-differentially private mechanism $\M$
and outcome set $S$. Letting $\m(\fdb) = \Pr(\M(\fdb) \in S)$, we have the identity
{\small
\begin{equation} \label{eq:t-pa.0}
   \frac{\Pr(\M(\fdb) \in S \given \ba=0)}{\Pr(\M(\fdb) \in S \given \ba=1)} =
   \left. \frac{\Pr(\M(\fdb) \in S \mbox{ and } \ba=0)}{\Pr(\M(\fdb) \in S \mbox{ and } \ba=1)} 
   \right/  \frac{\Pr(\ba=0)}{\Pr(\ba=1)} =
   \frac{\sum_{\fdb = (0,\pdb)} \cdl(\pdb) \m(\fdb)}{\sum_{\fdb=(1,\pdb)} \cdu(\pdb) \m(\fdb )}.
\end{equation} }
When $\M$ is maximially 0-biased, with distinguished outcome set $S$,  
the right side of~\eqref{eq:t-pa.0}
is equal to $\frac{\sum_{\fdb=(0,\pdb)} \cdl(\pdb) e^{- \idpv \cdot \fdb}}{\sum_{\fdb=(1,\pdb)} \cdu(\pdb) e^{-\idpv \cdot \fdb}}$.
Thus, the $z=0$ case of the theorem is equivalent to the assertion that 
{\small \begin{equation} \label{eq:t-pa.2}
      \frac{\sum_{\fdb = (0,\pdb)} \cdl(\pdb) \m(\fdb)}{\sum_{\fdb=(1,\pdb)} \cdu(\pdb) \m(\fdb )} \leq
     \frac{\sum_{\fdb=(0,\pdb)} \cdl(\pdb) e^{- \idpv \cdot \fdb}}{\sum_{\fdb=(1,\pdb)} \cdu(\pdb) e^{-\idpv \cdot \fdb}}.
\end{equation}}%
After cross-multiplying and simplifying, this becomes
{\small \begin{equation} \label{eq:t-pa.3}
     \left[ \sum_{\fdb = (0,\pdb)} \jd(\fdb) \m(\fdb) \right] \, \cdot \,
     \left[ \sum_{\fdb=(1,\pdb)} \jd(\fdb) e^{-\idpv \cdot \fdb} \right] \; \leq \;
     \left[ \sum_{\fdb = (1,\pdb)} \jd(\fdb) \m(\fdb) \right] \, \cdot \,
     \left[ \sum_{\fdb=(0,\pdb)} \jd(\fdb) e^{-\idpv \cdot \fdb} \right].
\end{equation}}%
If we add $\left[ \sum_{\fdb = (1,\pdb)} \jd(\fdb) \m(\fdb) \right] \, \cdot \,
\left[ \sum_{\fdb=(1,\pdb)} \jd(\fdb) e^{-\idpv \cdot \fdb} \right]$ to
both sides, we find that~\eqref{eq:t-pa.3} is equivalent to
{\small \begin{equation} \label{eq:t-pa.4}
     \left[ \sum_{\fdb \in \bn} \jd(\fdb) \m(\fdb) \right] \, \cdot \,
     \left[ \sum_{\fdb=(1,\pdb)} \jd(\fdb) e^{-\idpv \cdot \fdb} \right] \; \leq \;
     \left[ \sum_{\fdb = (1,\pdb)} \jd(\fdb) \m(\fdb) \right] \, \cdot \,
     \left[ \sum_{\fdb \in \bn} \jd(\fdb) e^{-\idpv \cdot \fdb} \right].
\end{equation}}%
To prove~\eqref{eq:t-pa.4} we will apply the FKG inequality. 
Set $h(\fdb) = \jd(x) e^{-\idpv \cdot \fdb}$ and note that
$\log h$ is the sum of $\log \jd$---a supermodular function---and
$(-\idpv) \cdot \fdb$, a linear function. Hence $\log h$ is 
supermodular. Now define $f(\fdb) =  \m(\fdb) e^{\idpv \cdot \fdb}$ and 
$g(\fdb) = \dba$. The differential
privacy constraint for $\m$ implies that 
$f$ is monotonically non-decreasing; observe that
$g$ is monotonically non-decreasing as well.
The FKG inequality implies
\begin{equation} \label{eq:t-pa.5}
  \left[ \sum_\fdb f(\fdb) h(\fdb) \right] \,
  \left[ \sum_\fdb g(\fdb) h(\fdb) \right] \leq
  \left[ \sum_\fdb f(\fdb) g(\fdb) h(\fdb) \right] \,
  \left[ \sum_\fdb h(\fdb) \right].
\end{equation}
Substituting the definitions of $f, \, g, \, h$ into~\eqref{eq:t-pa.5}
we readily see that it is equivalent to~\eqref{eq:t-pa.4}, which
completes the proof.
\end{proof}

\noindent
Finally, as promised at the start of this section, we show
that a noisy-sum mechanism that adds Laplace noise to the 
sum of the bits in the database is maximally $z$-biased
for every $z \in \{0,1\}$. Together with \cref{t-pa},
this shows that any \infpriv guarantee that can be proven
for the noisy-sum mechanism automatically extends to
a guarantee for all differentially private mechanisms,
when data are positively affiliated.
\begin{lemma} \label{lem:noisy-sum}
Suppose that all individuals have the same differential privacy parameter, \ie that 
$\bm{\idp} = (\idp,\idp,\ldots,\idp)$ for some $\idp>0$. Consider the {\em noisy-sum
mechanism} $\NS$ that samples a random $Y$ from the Laplace distribution with
scale parameter $1/\idp$ and outputs the sum $Y+\sum_{i=1}^n \dbi$. For all
$z \in \{0,1\}$ the mechanism $\NS$ is maximally $z$-biased.
\end{lemma}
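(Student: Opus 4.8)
The plan is to exhibit, for each $z \in \{0,1\}$, an explicit distinguished outcome set $S$ — a half-line in $\outcomes = \reals$ — and to verify directly that $\Pr(\NS(\fdb)\in S)$ has the required product form. The only property of the Laplace distribution I will use is its tail behaviour: if $Y$ has the Laplace distribution with scale $1/\idp$, then $\Pr(Y \geq t) = \tfrac12 e^{-\idp t}$ for every $t \geq 0$, and symmetrically $\Pr(Y \leq t) = \tfrac12 e^{\idp t}$ for every $t \leq 0$. Throughout, write $k = \sum_{i=1}^n \dbi$, and recall that $\bm\idp = (\idp,\dots,\idp)$, so the target in \cref{def:max-biased} is $\prod_{i=1}^n e^{-\idp |\dbi - z|}$.

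For $z = 1$, I would take $S = [n,\infty)$. Then $\NS(\fdb) = Y + k$ lies in $S$ exactly when $Y \geq n-k$, and since $0 \le k \le n$ the threshold $n-k$ is nonnegative, so the tail formula gives $\Pr(\NS(\fdb)\in S) = \tfrac12 e^{-\idp(n-k)} = \tfrac12 \prod_{i=1}^n e^{-\idp(1-\dbi)} = \tfrac12 \prod_{i=1}^n e^{-\idp|\dbi - 1|}$, which is proportional to $\prod_i e^{-\idp|\dbi-1|}$ with the $\fdb$-independent constant $\tfrac12$. For $z = 0$, I would symmetrically take $S = (-\infty,0]$: now $Y+k \in S$ iff $Y \le -k$, and since $-k \le 0$ the tail formula gives $\Pr(\NS(\fdb)\in S) = \tfrac12 e^{-\idp k} = \tfrac12 \prod_{i=1}^n e^{-\idp\dbi} = \tfrac12 \prod_{i=1}^n e^{-\idp|\dbi - 0|}$, as required. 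In both cases $\Pr(\NS(\fdb)\in S) > 0$, so $S$ is a legitimate distinguished outcome set and the definition of maximal $z$-bias is met.

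There is no genuine obstacle here; the one point worth emphasizing is \emph{why} the half-line must be placed at the extreme end of the range of the sum rather than somewhere in the middle. The clean exponential identity $\Pr(Y \geq t) = \tfrac12 e^{-\idp t}$ holds only for $t$ on one side of the origin, and because $k = \sum_i \dbi$ ranges over $\{0,1,\dots,n\}$, choosing the threshold at $n$ (resp. at $0$) guarantees that $n-k$ (resp. $-k$) never changes sign as $\fdb$ varies; hence the same branch of the Laplace CDF is used for every database and the dependence on $\fdb$ collapses to a single exponential. Had the threshold fallen strictly inside $(0,n)$, the tail probability would be a piecewise combination of $e^{-\idp t}$ and $1-\tfrac12 e^{\idp t}$ terms and would fail to be proportional to any product $\prod_i e^{-\idp|\dbi - z|}$. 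Combined with \cref{t-pa}, this shows that for positively affiliated data the worst-case \infpriv guarantee over all $\bm\idp$-differentially private mechanisms is already witnessed by this single, entirely standard mechanism.
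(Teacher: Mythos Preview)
Your proof is correct and is essentially identical to the paper's own argument: you choose the same distinguished outcome sets $S=(-\infty,0]$ for $z=0$ and $S=[n,\infty)$ for $z=1$ and verify the required proportionality via the Laplace tail formula. Your additional paragraph explaining why the threshold must sit at an extreme of the range of $\sum_i \dbi$ is a nice pedagogical touch that the paper omits, but the mathematical content is the same.
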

\begin{proof}
For any $\fdb \!\in\!  \bn$, let $|\fdb|  \! = \! \sum_{i=1}^n \dbi$. 
When $z\!=\!0$ and $\bm{\idp} \!=\! (\idp,\idp,\ldots,\idp)$, 
the definition of a maximally $z$-biased mechanism 
requires the existence of an outcome set $S$ such
that $\Pr(\NS(\fdb) \in S) \propto e^{-\idp |\fdb|}$.
For the set $S = (-\infty,0]$, the event $\NS(\fdb) \in S$ coincides with the event 
$Y \leq -|x|$. Since $Y$ is a Laplace random variable with scale parameter $1/\idp$,
this event has probability proportional to $e^{-\idp |x|}$, as desired.
When $z=1$ the proof of the lemma proceeds identically, using the set 
$S = [n,\infty)$.
\end{proof}

\begin{remark} \label{rmk:pos-corr}
Intuitively, one might expect \cref{t-pa} to hold
whenever the joint distribution $\jd$ is such that each
pair of bits is positively correlated, a weaker property
than positive affiliation which requires each pair of bits
to be positively correlated {\em even after conditioning on
any possible tuple of values for the remaining bits}. In
Appendix~\ref{app:pairwise} we present an example 
illustrating that the theorem's conclusion can be violated
(in fact, quite drastically violated) when one only assumes
pairwise positive correlation. The basic reason
is that when bits are pairwise positively correlated,
it may still be the case that one individual's bit correlates
much more strongly with a non-monotone function of the others'
bits than with any monotone function.
\end{remark}

\begin{remark} \label{rmk:ising}
The quantities appearing in \cref{t-pa} have precise analogues
in the physics of spin systems, and this analogy sheds light on
\infpriv. Appendix~\ref{sec:physics} delves into this connection in
detail; in this remark we merely sketch a dictionary for translating
between \infpriv and statistical mechanics and discuss some
consequences of this translation.

In brief, an adversary's prior distribution on $\bn$ corresponds 
to the Gibbs measure of a two-spin system with
Hamiltonian $H(\fdb) = - \ln \jd(\fdb)$. Under this
correspondence, positively affiliated distributions
correspond to ferromagnetic spin systems. The adversary's
posterior distribution after applying a maximally
0-biased (resp., maximally 1-biased) mechanism is
equivalent to the Gibbs measure of the spin system
after applying the external field $\frac12 \idpv$
(resp., $-\frac12 \idpv$). The worst-case \infpriv
guarantee for Athena in \cref{t-pa} is therefore equivalent 
(up to a bijective transformation) to the magnetization 
at Athena's site when the external field $\pm \idpv$
is applied to the spin system.

One of the interesting implications of this correspondence
concerns phase transitions. Statistical-mechanical systems such as
magnets are known to undergo sharp transitions in their
physical properties as one varies thermodynamic quantities
such as temperature and external field strength. Translating
these results from physics to the world of privacy using the
dictionary outlined above, one discovers that
\infpriv guarantees can undergo surprisingly sharp variations
as one varies a mechanism's differential privacy parameter or
an adversary's belief about the strength of correlations between 
individuals' bits in a database. \cref{t-ising} in the appendix
formalizes these observations about phase transitions in \infpriv.
\end{remark}

\ify
\section{Bounded Affiliation Distributions}
\label{sec:bounded}

In this section we present a general upper bound for 
\infpriv
that applies under a condition that we call {\em bounded affiliation}.
Roughly speaking, bounded affiliation requires that correlations
between individuals are sufficiently weak, in the sense that
the {\em combined influence} of all other individuals on any 
particular one is sufficiently small. A very similar 
criterion in the statistical mechanics literature,
{\em Dobrushin's uniqueness condition} \citep{Dob68,Dob70},
is identical to ours except that it defines 
``influence'' in terms of additive approximation 
and we define it multiplicatively (\autoref{def:inf}).
\citeauthor{Dob70} showed that this condition implies
uniqueness of the Gibbs measure for a specified collection of
conditional distributions. Its implications for correlation
decay \citep{Gross79,Follmer82,Kunsch82} and mixing times of Markov
chains \citep{AizenmanHolley,Weitz05,Hayes06} were subsequently
explored.
Indeed, our proof of network differential 
privacy under the assumption of bounded affiliation
draws heavily upon the methods of \citet{Dob70}, \citet{Gross79}, 
and \citet{Kunsch82} on decay of correlations under
Dobrushin's uniqueness condition.

Throughout this section (and its corresponding appendix) we assume
that each individual's private data belongs to a finite set $X$
rather than restricting to $X = \{0,1\}$. This assumption does
not add any complication to the theorem statements and proofs,
while giving our results much greater generality. 
We now define the notion of influence that is relevant to
our results on distributions with bounded affiliation.
\begin{definition} \label{def:inf}
If $\bg_0,\ldots,\bg_n$ are jointly distributed random variables,
the {\em multiplicative influence} of $\bg_j$ on $\bg_i$, denoted
by $\infmul_{ij}$, is defined by the equation
{\small \[
    e^{2 \infmul_{ij}} = \max \left\{ \left.
                    \frac{\Pr(\bg_i \in S \mid \fdb_{-i})}
                         {\Pr(\bg_i \in S \mid \adb_{-i})}
                    \; \right| \;
                    S \subseteq \supp(\bg_i), \,
                    \fdb_{-i} \sim_j \adb_{-i}
                    \right\}.
\] }
In other words, the influence of $\bg_j$ on $\bg_i$ is one-half of the 
(individual) differential privacy parameter of $\bg_i$ with
respect to $\bg_j$, when one regards $\bg_i$ as a randomized
function of the database $\fdb_{-i}$. When $i=j$ one adopts
the convention that $\infmul_{ij}=0$.
The {\em multiplicative influence matrix} 
is the matrix
$\infmulmtx = (\infmul_{ij})$.
\end{definition}

\begin{theorem} \label{t-bm}
Suppose that the joint distribution $\jd$ has a multiplicative
influence matrix $\infmulmtx$ whose spectral norm is strictly
less than 1. Let $\invmulmtx = (\invmulentry_{ij})$ denote the
matrix inverse of $I-\infmulmtx$. Then for any mechanism with
individual privacy 
parameters $\bm{\idp} = (\idpi)$, 
the \infpriv guarantee
satisfies
\ifextabs
\begin{equation} \label{eq:ndp-dobr-body}
\forall i \;\; 
\ndp_i \leq 2 \sum_{j=1}^{n} \invmulentry_{ij} \idp_j.
\end{equation}
\else
\begin{equation} \label{eq:ndp-dobr-body}
\forall i \;\; 
\ndp_i \leq 2 \sum_{j=1}^{n} \invmulentry_{ij} \idp_j.
\end{equation}
\fi
If the matrix of multiplicative influences 
satisfies 
\ifextabs
$
\forall i \;
\sum_{j=1}^{n} \infmul_{ij} \idp_j \leq (1-\delta) \idpi
$
\else
\begin{equation} \label{eq:ba-cond-body}
\forall i \;\;
\sum_{j=1}^{n} \infmul_{ij} \idp_j \leq (1-\delta) \idpi
\end{equation}
\fi
\xspace for some $\delta>0$, then 
\ifextabs
$\ndp_i \leq 2 \idpi / \delta$ for all $i$.
\else
\begin{equation} \label{eq:ndp-ba-body}
\forall i \;\; \ndp_i \leq 2 \idpi / \delta.
\end{equation}
\fi
\end{theorem}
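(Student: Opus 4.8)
The plan is to adapt the Dobrushin comparison method from the additive setting to the multiplicative one. Fix the individual $a$ of interest and recall from~\eqref{eq:def-ndp.1} that $\ndp_a$ is controlled by comparing $\Pr(\M(x_a=1,\pdb_1)\in S)$ against $\Pr(\M(x_a=0,\pdb_0)\in S)$ where $\pdb_1\sim\cdu$ and $\pdb_0\sim\cdl$ are drawn from the two conditional distributions of $\fdb_{-a}$. The key idea is that the only asymmetry between these two situations comes from (i) the direct effect of flipping Athena's own bit, which differential privacy bounds by $\idpa$, and (ii) the indirect effect, namely that $\cdl\ne\cdu$. To quantify (ii), I would introduce, for each individual $i$, a ``log-odds discrepancy'' $d_i$ measuring the worst-case multiplicative gap between the conditional law of $x_i$ given a configuration of the others and the conditional law of $x_i$ given a neighboring configuration in which one coordinate has been resampled according to whether we are in the $\cdl$-world or the $\cdu$-world. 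The multiplicative influence matrix $\infmulmtx$ is precisely the Lipschitz-type object that controls how a discrepancy at site $j$ propagates to site $i$ through one step of conditioning.

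Concretely, I would set up a coupling / telescoping argument: write the ratio of interest as a product (or sum of logs) over a sequence of single-site modifications transforming the $\cdl$-world into the $\cdu$-world, Athena's bit included. Each single-site modification at site $j$ contributes at most (roughly) the local differential-privacy cost $\idp_j$ plus the accumulated influence of previously-modified sites; bounding the total with a fixed-point / Neumann-series argument gives that the vector of discrepancies $\mathbf{d}=(d_i)$ satisfies, coordinatewise, $\mathbf{d}\le \infmulmtx\,\mathbf{d} + \idpv$ (this is where $\gamma_{ij}$ enters, via \autoref{def:inf} applied to the conditional law of $x_i$). Since the spectral norm of $\infmulmtx$ is strictly less than $1$ and $\infmulmtx$ has nonnegative entries, $I-\infmulmtx$ is invertible with nonnegative inverse $\invmulmtx=\sum_{k\ge0}\infmulmtx^k$, so iterating yields $\mathbf{d}\le \invmulmtx\,\idpv$, hence $d_i\le\sum_j\invmulentry_{ij}\idp_j$. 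Then $\ndp_i\le 2d_i$ after accounting for the fact that the worst-case ratio in~\eqref{eq:ndp}/\eqref{eq:def-ndp.1} compares $x_a=1$ against $x_a=0$ rather than against a ``resampled'' baseline — the factor $2$ is the same one that appears in \autoref{def:inf} relating influence to the differential privacy parameter. This gives~\eqref{eq:ndp-dobr-body}.

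For the second, cleaner bound~\eqref{eq:ndp-ba-body}, I would simply feed the hypothesis $\sum_j\gamma_{ij}\idp_j\le(1-\delta)\idpi$ into the series $\invmulmtx\idpv=\sum_{k\ge0}\infmulmtx^k\idpv$. The hypothesis says exactly that $\infmulmtx\idpv\le(1-\delta)\idpv$ coordinatewise; since $\infmulmtx\ge0$, applying this repeatedly gives $\infmulmtx^k\idpv\le(1-\delta)^k\idpv$, so $\sum_{k\ge0}\infmulmtx^k\idpv\le\frac{1}{\delta}\idpv$, i.e. $\sum_j\invmulentry_{ij}\idp_j\le\idpi/\delta$, and therefore $\ndp_i\le 2\idpi/\delta$. (One should also note the hypothesis~\eqref{eq:ba-cond-body} forces the row sums of $\infmulmtx$, in the weighted sense, to be $<1$, which in particular implies the spectral-norm condition needed for the first part, so the theorem is internally consistent.)

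The main obstacle I anticipate is making the telescoping/coupling argument of the first paragraph fully rigorous in the multiplicative regime: in the classical Dobrushin comparison theorem one works with total-variation distances and the triangle inequality, whereas here we need multiplicative (log-ratio) control, and one must be careful that resampling one coordinate at a time — and bounding each step by a $\log$-ratio that depends on the already-modified coordinates — actually composes to give the clean recursion $\mathbf d\le\infmulmtx\mathbf d+\idpv$ without losing constant factors or accumulating error across the (possibly exponentially many) intermediate configurations. Handling the general finite alphabet $X$ rather than $\{0,1\}$ adds bookkeeping but no essential difficulty, since \autoref{def:inf} is already stated for arbitrary finite $X$ and the FKG machinery from \autoref{sec:positively} is not used here at all.
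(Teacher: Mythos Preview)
Your derivation of the second bound from the first is correct and matches the paper exactly: the hypothesis $\infmulmtx\bm{\idp}\preceq(1-\delta)\bm{\idp}$ with $\infmulmtx\ge 0$ iterates to $\infmulmtx^k\bm{\idp}\preceq(1-\delta)^k\bm{\idp}$, so $\invmulmtx\bm{\idp}=\sum_{k\ge 0}\infmulmtx^k\bm{\idp}\preceq\frac{1}{\delta}\bm{\idp}$ and hence $\ndp_i\le 2\idpi/\delta$.

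For the main inequality~\eqref{eq:ndp-dobr-body}, however, your proposal has a genuine gap, and it is precisely the one you flag yourself. The paper does \emph{not} set up a coupling or telescoping over single-site modifications of configurations; instead it compares the posterior $\cd{1}=\Pr(\cdot\mid\M(\fdb)\in S)$ to the prior $\cd{2}=\jd$ via \emph{test functions}: a vector $\bm{\estim}$ is called a \emph{multiplicative estimate} if $|\ln\cd{1}(f)-\ln\cd{2}(f)|\le\sum_i\estim_i\,\lip_i(\ln f)$ for all positive $f$. The crux is showing that the affine map $T(\bm{\estim})=\tfrac{1}{2n}\bm{\idp}+(1-\tfrac1n)\bm{\estim}+\tfrac1n\infmulmtx\bm{\estim}$ preserves this property, and this rests on a nontrivial analytic inequality (their Lemma~\ref{lem:fg}): if $\sup A/\inf A\le e^{2a}$ and $\sup B/\inf B\le e^{2b}$ then $\expect[AB]/(\expect[A]\expect[B])\le e^{ab}$. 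This lemma is what makes multiplicative errors compose cleanly under the conditional-expectation (site-averaging) operators $\avgi$, replacing the role the triangle inequality plays in the additive Dobrushin argument; your telescoping sketch has no analogue of it. Iterating $T$ from $\bm{1}$ and using $\|\infmulmtx\|<1$ gives the fixed point $\tfrac12\invmulmtx\bm{\idp}$ as a multiplicative estimate. There is then a second idea you are missing: to extract $\ndp_i$ one would like to plug in the indicators $f=\indic[\dbi=\val_0]$ and $g=\indic[\dbi=\val_1]$, but these have $\lip_i(\ln f)=\infty$, so the paper first smooths them by $\avgf(\fdb)=\cd{2}(\val_0\mid\fdb_{-i})$, applies the estimate to $\avgf,\avgg$ (whose log-Lipschitz constants are $\le 2\infmul_{ik}$), and separately bounds the smoothing error $|\ln(\cd{1}(f)/\cd{1}(\avgf))|\le\idpi$ directly from differential privacy. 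The combination yields the factor $2$ and the bound $2\sum_j\invmulentry_{ij}\idp_j$. Your recursion $\mathbf d\le\infmulmtx\mathbf d+\bm{\idp}$ followed by ``$\ndp_i\le 2d_i$'' arrives at the right number but the route you propose does not, as you suspected, supply a mechanism for establishing that recursion in the multiplicative regime.
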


\begin{proof}[Proof sketch.]
Let $S$ be any set of
potential outcomes of the mechanism $\M$ such that
 $\Pr(\M(\fdb) \in S)>0$. Let $\cd{1}$ denote the 
conditional distribution on databases $\fdb \in X^n$,
given that $\M(\fdb) \in S$, and let
$\cd{2}$ denote the unconditional distribution $\jd$,
respectively.
For $i \in \{1,2\}$ and for any function $f : X^n
\to \reals$, let $\cd{i}(f)$
denote the expected value of $f$ under distribution
$\cd{i}$. Also define the Lipschitz constants
$
\lip_i(f) = \max 
            \{ f(\fdb) - f(\adj{\fdb}) 
               \; \mid \; \fdb \sim_i \adj{\fdb} \}.
$
\ifextabs
The heart of the proof lies in showing that if
$f$ takes values in $\reals_+$ then
\begin{equation} \label{eq:fest-body}
|\ln \cd{1}(f) - \ln \cd{2}(f)| \leq \frac12
\sum_{i,j=1}^n \invmtx_{ij} \idp_j \lip_i(\ln f).
\end{equation}
This is done by studying the set of all vectors
$\bm{\estim}$ that satisfy 
$|\ln \cd{1}(f) - \ln \cd{2}(f)| \leq 
 \sum_{i=1}^n \estim_i \lip_i(f)$
for all $f$, and showing that this
set is non-empty and is preserved by an affine transformation $T$
that is a contracting mapping of $\reals^n$ 
(when the spectral norm of $\infmulmtx$ is less than 1)
with fixed point $\frac12 \invmtx \bm{\idp}$.
To derive~\eqref{eq:ndp-dobr-body} 
from~\eqref{eq:fest-body}, use the definition
of $\ndp_i$ to choose two
distinct values $\val_0 \neq \val_1$ in $\vals$
such that $\ndp_i =
\left| \ln \left( \frac{
         \Pr(\dbi=\val_1 \mid \M(\fdb) \in S) \, / \, 
         \Pr(\dbi=\val_0 \mid \M(\fdb) \in S) 
       }{
         \Pr(\dbi=\val_1) \, / \, \Pr(\dbi=\val_0)
       }        
       \right)
\right| = 
\left| \ln \left( 
  \frac{ \cd{1}(f) / \cd{1}(g) }{ \cd{2}(f) / \cd{2}(g) }
\right) \right|,$ where $f,g$ are the indicator functions
of $\dbi=\val_0$ and $\dbi=\val_1$, respectively. Unfortunately
$\lip_i(\ln f) = \lip_i(\ln g) = \infty$ so direct application 
of~\eqref{eq:fest-body} is not useful; instead, we define a suitable
averaging operator $\avg$ to smooth out $f$ and $g$, thereby 
improving their Lipschitz constants and enabling application
of~\eqref{eq:fest-body}. A separate argument is then used to
bound the error introduced by smoothing $f$ and $g$ using $\avg$,
which completes the proof of~\eqref{eq:ndp-dobr-body}. 
Under the hypothesis that $\infmulmtx \bm{\idp} \preceq (1-\delta)
\bm{\idp}$, the relation $\bm{\ndp} \preceq \frac{2}{\delta} \bm{\idp}$
is easily derived from~\eqref{eq:ndp-dobr-body}
by applying the formula
$\invmtx = \sum_{m=0}^{\infty} \infmulmtx^m$.
The full proof is presented in Appendix~\ref{app:bounded}.
\else
Also define the Lipschitz constants
\begin{equation} \label{eq:lip}
\lip_i(f) = \max 
            \{ f(\fdb) - f(\adj{\fdb}) 
               \; \mid \; \fdb \sim_i \adj{\fdb} \}.
\end{equation}
Let us say that a vector 
$\bm{\estim} = (\estim_{i})$ is a {\em multiplicative estimate}
if for every function $f : X^n \to \reals_+$ we have
\begin{equation} \label{eq:est}
| \ln \cd{1}(f) - \ln \cd{2}(f) | \leq \sum_{i=1}^{n} \estim_i 
          \lip_i(\ln f).
\end{equation}
This definition is the multiplicative analogue of the notion of 
``estimate'' used by \citet{Dob70}, \citet{Follmer82}, and \citet{Kunsch82} 
in their proofs of the so-called Dobrushin Comparison Theorem.
Similar to those proofs, 
the heart of our proof will lie in showing that the affine transformation 
\begin{equation} \label{eq:T}
T(\bm{\estim}) = \tfrac1{2n} \bm{\idp} + 
  \left( 1 - \tfrac{1}{n} \right) \bm{\estim} + \tfrac1n \infmulmtx \bm{\estim}
\end{equation}
maps multiplicative estimates to multiplicative estimates. 
We defer the proof of this fact to \autoref{lem:T} 
in Appendix~\ref{app:bounded}. In the sequel let 
$\cmtx = \left( 1 - \tfrac{1}{n} \right)I + \tfrac1n \infmulmtx$,
so that $T(\bm{\estim}) = \tfrac{1}{2n} \bm{\idp} + \cmtx \bm{\estim}$.

The set of multiplicative estimates is non-empty.
For example, for any $f : X^n = \reals_+$ we have
\[
| \ln \cd{1}(f) - \ln \cd{2}(f) | \leq
| \sup(\ln f) - \inf(\ln f) | \leq
\sum_{i=1}^n \lip_i(\ln f),
\]
implying that $\estim_i=1$ for $i=1,\ldots,n$
defines a multiplicative estimate.
Iterating the mapping $T$ starting from an
arbitrary $\bm{\estim}$ we obtain the sequence
\[
\bm{\estim} \; \mapsto \;
\tfrac{1}{2n} \bm{\idp} + \cmtx \estim \; \mapsto \;
\tfrac{1}{2n} \bm{\idp} + \tfrac{1}{2n} \cmtx \bm{\idp} + \cmtx^2 \estim 
  \; \mapsto \;
\tfrac{1}{2n} \bm{\idp} + \tfrac{1}{2n} \cmtx \bm{\idp} + 
\tfrac{1}{2n} \cmtx^2 \bm{\idp} + \cmtx^3 \estim \; \mapsto \;
\cdots
\]
Under the assumption that $\infmulmtx$ has spectral norm strictly 
less than 1---and hence the convex
combination $\cmtx = \left( 1 - \tfrac{1}{n} \right) I + \tfrac{1}{n}
\infmulmtx$ also has spectral norm less than 1---this sequence 
converges to 
\[
  \tfrac{1}{2n} 
  \left( \sum_{n=0}^{\infty} \cmtx^n \right) \bm{\idp} 
  = \tfrac{1}{2n} (I - \Upsilon)^{-1} \bm{\idp} 
  = \tfrac{1}{2n} \left( \tfrac1n (I - \infmulmtx) \right)^{-1} \bm{\idp}
  = \tfrac12 \invmtx \bm{\idp},
\]
hence $\tfrac12 \invmtx \bm{\idp}$ is also a multiplicative estimate. In other
words, for any $f : \{0,1\}^{n+1} \to \reals_{+}$ it holds that
\begin{equation} \label{eq:fest}
\left| \ln \cd{1}(f) - \ln \cd{2}(f) \right| \leq 
\tfrac12 \sum_{i,j=1}^{n} \invmtx_{ij} \idp_j \lip_i(\ln f).
\end{equation}
To prove~\eqref{eq:ndp-dobr}, the most natural idea is to apply
this estimate to the function $f(\fdb) = \dbi$ \bkmargincomment{say why?}
but unfortunately $\lip_i(f) = \infty$, so we instead apply an 
averaging operator to smooth $f$ before invoking~\eqref{eq:fest};
the definition of this operator and the details of this calculation
are deferred to Appendix~\ref{app:bounded}.
\bkmargincomment{appendix!}
\bkcomment{This is the one place that I need the assumption
that $X = \{0,1\}$. In general we would want to replace $f$
and $1-f$ with functions that take the value 1 if and only if
$x_i$ takes the value $v_0$ or $v_1$, respectively. This 
requires a bit more notation and I am saving for later the
task of modifying this part.}
Noting that $\lip_k(f) = \lip_k(\omf) = 0$ for $k \neq i$, 
we see that~\eqref{eq:fest} implies
\[
\left| \ln \cd{1}(f) - \ln \cd{2}(f) \right| \leq
\tfrac12 \sum_{j=1}^n \invmtx_{ij} \idp_j 
  \qquad \text{and} \qquad
\left| \ln \cd{1}(\omf) - \ln \cd{2}(\omf) \right| \leq
\tfrac12 \sum_{j=1}^n \invmtx_{ij} \idp_j .
\]
Hence
\begin{align*}
\left| \ln \left( \frac{\Pr(\M(\fdb) \in S \mid \dbi=1)}
                {\Pr(\M(\fdb) \in S \mid \dbi=0)} \right) \right| 
 &=
\left| \ln \left( \frac{\Pr(\dbi=1 \mid \M(\fdb) \in S)}
                {\Pr(\dbi=0 \mid \M(\fdb) \in S)} 
           \, \cdot \,
           \frac{\Pr(\dbi=0)}{\Pr(\dbi=1)} 
\right) \right| \\
&=
\left| \ln \left( \frac{\cd{1}(f)}{\cd{1}(\omf)} \, \cdot \, 
                  \frac{\cd{2}(\omf)}{\cd{2}(f)} \right) \right| \\
&\leq
\left| \ln \cd{1}(f) - \ln \cd{2}(f) \right| + 
\left| \ln \cd{1}(\omf) - \ln \cd{2}(\omf) \right| \\
&\leq
\sum_{j=1}^n \invmtx_{ij} \idp_j
\end{align*}
which completes the proof of~\eqref{eq:ndp-dobr}.

To prove~\eqref{eq:ndp-ba}, we use the partial
ordering on vectors defined by $\bm{a} \preceq \bm{b}$ if 
and only if $a_i \leq b_i$ for all $i$. 
The matrix $\infmulmtx$ has non-negative entries,
so it preserves this ordering: if $\bm{a} \preceq \bm{b}$ then 
$
   \forall i \; \sum_{j} \infmul_{ij} a_j \leq \sum_{j} \infmul_{ij} b_j
$
and hence $\infmulmtx \bm{a} \preceq \infmulmtx{b}$. 
Rewriting the relation~\eqref{eq:ba-cond} in the form
 $\infmulmtx \bm{\idp} \preceq (1-\delta) \bm{\idp}$
and applying induction, we find that for all $n \geq 0$,
 $\infmulmtx^n \bm{\idp} \preceq (1-\delta)^n \bm{\idp}$.
Summing over $n$ yields
\[
 \invmtx \bm{\idp} = \sum_{n=0}^{\infty} \infmulmtx^n \bm{\idp} \preceq
 \sum_{n=0}^{\infty} (1-\delta)^n \bm{\idp} = \tfrac{1}{\delta} \bm{\idp}
\]
which, when combined with~\eqref{eq:ndp-dobr}, yields~\eqref{eq:ndp-ba}. 
\fi
\end{proof}
\noindent
The bound $\ndp_i \leq 2 \idpi / \delta$ in the theorem is tight
up to a constant factor. This is shown in \S\ref{sec:physics} by considering
an adversary whose prior is
the Ising model of a complete $d$-ary tree $T$ at inverse temperature 
$\beta = \tanh^{-1} \left( \frac{1-\delta}{d} \right)$
The entries of the influence matrix satisfy 
$\infmul_{ij} = \beta$ if $(i,j) \in E(T)$, 0 otherwise.
Thus, the row sum $\sum_{j=1}^{n} \infmul_{ij}$ is maximized
when $i$ is an internal node, with degree $d+1$, in which
case the row sum is $(d+1) \tanh^{-1} \left( \frac{1-\delta}{d} \right)
= 1 - \delta - o(1)$ as $d \to \infty$. 
In \S\ref{sec:physics} we apply \cref{t-pa} to show that the 
\infpriv guarantee for the Ising model
on a tree satisfies $\ndp = \Omega(\frac{\idp}{\delta})$, 
matching the upper bound in \autoref{t-bm} up to a constant factor.




\section{Conclusion} 

A number of immediate questions are prompted by our results, such as incorporating 
$(\eps,\delta)$-privacy into our analysis of inferential guarantees (for product 
distributions this was achieved in~\cite{KS14}) and
extending the analysis in \S\ref{sec:positively} to non-binary databases where an individual's data cannot be summarized by a single bit. A key challenge here is to find an analogue of positive affiliation for databases whose rows cannot naturally be interpreted as elements of a lattice. More excitingly, however, the scenario of datasets with networked correlations raises several broad directions for future work. 

\noindent
\bkedit{
{\bf Designing for inferential privacy:} Our work takes differentially private algorithms as a primitive and {\em analyzes} what inferential privacy is achievable with given differential privacy guarantees. This allows leveraging the vast body of work on, and adoption of, differentially private algorithms, while remaining agnostic to the data analyst's objective or utility function. However if one instead assumes a particular measure of utility, one can directly
investigate the design of inferential-privacy preserving algorithms to obtain stronger guarantees:
given some joint distribution(s) and utility objectives, what is the best inferential privacy achievable,
and what algorithms achieve it? 
}

\noindent
{\bf Inferential privacy and network structure:} An intriguing set of questions arises from returning to the original network structures that led to the model of correlated joint distributions. Note that our results in \autoref{t-pa} give the inferential privacy guarantee for a particular individual: how do inferential privacy guarantees depend on the {\em position} of an individual in the network (for instance, imagine the central individual in a large star graph versus the leaf nodes), and how does the relation between the correlations and the network structure play in? 

\paragraph{Acknowledgements}
The authors gratefully acknowledge helpful discussions with Christian Borgs, danah boyd, 
Jennifer Chayes, Cynthia Dwork, Kobbi Nissim, Adam Smith, Omer Tamuz, and Salil Vadhan. 

The authors acknowledge the support of NSF awards AF-1512964 and III-1513692, 
and ONR award N00014-15-1-2335. Robert Kleinberg gratefully acknowledges
the support of Microsoft Research New England, where he was employed while most of this
research took place.

\fi 

\bibliography{ndp}

\appendix

\section{Appendix to \S\ref{sec:positively}: Positively Affiliated Distributions}
\label{app:positively}

This appendix contains material accompanying \S\ref{sec:positively}
that was omitted from that section for space reasons.

\subsection{Pairwise positive correlation}
\label{app:pairwise}

A weaker condition than positive affiliation is {\em pairwise
positive correlation}. This property of a joint distribution
$\jd$ on databases $\fdb \in \bn$ requires that for each pair 
of indices $i,j \in [n]$, the (unconditional) 
marginal distribution of the bits $\bg_i, \bg_j$ satisfies
$$
  \expect[\bg_i \bg_j] \geq \expect[\bg_i] \cdot \expect[\bg_j].
$$
If the inequality is strict for every $i,j$ then we say
$\jd$ is {\em pairwise strictly positively correlated.}

Recall Theorem~\ref{t-pa}, which establishes that when 
a joint distribution 
$\jd$ satisfies positive affiliation then the worst-case
\infpriv guarantee is attained by any maximally $z$-biased
distribution. The intuition supporting the theorem statement
might seem to suggest that the same conclusion holds 
whenever $\jd$ satisfies pairwise positive correlation.
In this section we show that this is not the case: 
if $\jd$ satisfies pairwise positive correlation 
(or even strict pairwise positive correlation) there
may be a mechanism whose \infpriv guarantee is much
worse than that of any maximally $z$-biased mechanism.

Our construction applies when $n$ is of the form 
$n = 1+rs$ for two positive integers $r,s$. For a
database $\fdb \in \bn$ we will denote one of its
entries by $\dba$ and the others by $\bg_{ij}$ 
for $(i,j) \in [r] \times [s]$. The joint 
distribution $\jd$ is uniform over the solution
set of the system of congruences
\begin{equation} \label{eq:ppc.1}
   \dba + \sum_{j=1}^s \bg_{ij} \equiv 0 \pmod{2} 
   \qquad \mbox{for } i=1,\ldots,r
\end{equation}
Thus, to sample from $\jd$ one 
draws the bits $\dba$ and $\bg_{ij}$ for $(i,j) \in [r] \times [s-1]$
independently from the uniform distribution
on $\{0,1\}$, then one sets $\bg_{is}$ for 
all $i$ so as to satisfy~\eqref{eq:ppc.1}.

The distribution $\jd$ is pairwise independent,
hence it is pairwise positively correlated.
(The calculation of privacy parameters is much
easier in the pairwise-independent case. 
At the end of this section we apply a simple
continuity argument to modify the
example to one with pairwise strict positive
correlation without significantly changing the
privacy parameters.)

Let us first calculate the \infpriv for a 
mechanism $\M_1$ that calculates the number
of odd integers in the sequence
\begin{equation} \label{eq:ppc.2}
\dba, \, \sum_{j=1}^s \bg_{1j}, \, \sum_{j=1}^s \bg_{2j}, \, 
\ldots, \sum_{j=1}^s \bg_{rj}
\end{equation}
 and adds
Laplace noise (with scale parameter 
$1/\idp$) to the result. This is $\idp$-differentially
private since changing a single bit of $\fdb$
changes the parity of only one element of the 
sequence. However, when $\fdb$ is sampled
from $\jd$ the number of odd integers in 
the sequence~\eqref{eq:ppc.2} is either 0
if $\dba=0$ or $n$ if $\dba=1$. Hence
\begin{align*}
\Pr(\M_1(\fdb) \leq 0 \given \dba=0) &= \tfrac12 \\
\Pr(\M_1(\fdb) \leq 0 \given \dba=1) &= \tfrac12 e^{-(r+1)\idp}
\end{align*}
implying that the \infpriv parameter of $\M_1$ is
at least $(r+1)\idp$.

Now let us calculate the \infpriv parameter of a maximally
0-biased mechanism $\M_2$, with outcome $\outc \in \outcomes$ 
such that
$\Pr(\M_2(\fdb) = \outc \given \fdb) \propto e^{-\idp |\fdb|}$,
where $|\fdb|$ denotes the sum of the bits in $\fdb$.
Let $T_0$ (resp.\ $T_1$) denote the set of bit-strings
in $\{0,1\}^s$ having even (resp.\ odd) sum, and let
$T_0^r, \, T_1^r$ denote the $r^{\mathrm{th}}$ Cartesian
powers of these sets. The conditional distribution
of $(\bg_{ij})$ given $\dba=0$ is the uniform distribution
on $T^r_0$, and the conditional distribution
of $(\bg_{ij})$ given $\dba=1$ is the uniform distribution
on $T^r_1$. For $\pdb = (y_{ij}) \in \{0,1\}^{rs}$ and $i \in [r]$,
let $\pdb_{i\ast}$ denote the 
$s$-tuple $(y_{i1},\ldots,y_{is})$. We have
\begin{align}
\nonumber
\Pr(\M_2(\fdb) = \outc \given \dba = 0) &= 
\sum_{\fdb = (0,\pdb)} \Pr(\M_2(\fdb) = \outc \given \fdb) \cdot 
                    \Pr(\fdb \given \dba=0) 
  \\ &=
\nonumber
\sum_{\pdb \in T_0^r} e^{-\idp |\pdb|} \cdot 2^{-r(s-1)} 
  \\ &=
\nonumber
\sum_{\pdb \in T_0^r} 
\prod_{i=1}^r \left( e^{-\idp |\pdb_{i\ast}|} \cdot 2^{1-s} \right)
  \\ &=
\label{eq:ppc.3}
\left( 2^{(1-s)} \sum_{\bm{z} \in T_0} e^{-\idp |\bm{z}|} \right)^r.
\end{align}
Similarly,
\begin{equation} \label{eq:ppc.4}
\Pr(\M_2(\fdb) = \outc \given \dba = 1) = 
  e^{-\idp} \cdot \left( 2^{(1-s)} \sum_{\bm{z} \in T_1} e^{-\idp |\bm{z}|} \right)^r.
\end{equation}
(The extra factor of $e^{-\idp}$ on the right side comes from
the fact that $\dba=1$, which inflates the exponent in the
expression $e^{-\idp |\fdb|}$ by $\idp$.)
To evaluate the expressions on the right sides 
of~\eqref{eq:ppc.3}-\eqref{eq:ppc.4}, it is useful
to let $A_0 = \sum_{\bm{z} \in T_0} e^{-\idp |\bm{z}|}$
and $A_1 = \sum_{\bm{z} \in T_1} e^{-\idp |\bm{z}|}$. Then
we find that 
\begin{align*}
A_0 + A_1 &= 
\sum_{\bm{z} \in \{0,1\}^s} e^{-\idp |\bm{z}|} =
(1 + e^{-\idp})^s \\
A_0 - A_1 &=
\sum_{\bm{z} \in \{0,1\}^s} (-1)^|\bm{z}| \cdot e^{-\idp |\bm{z}|} =
(1 - e^{-\idp})^s \\
A_0 &=
\frac12 \left[ (1 + e^{-\idp})^s + (1 - e^{-\idp})^s \right] \\
A_1 &= 
\frac12 \left[ (1 + e^{-\idp})^s - (1 - e^{-\idp})^s \right].
\end{align*}
Substituting these expressions into~\eqref{eq:ppc.3}-\eqref{eq:ppc.4}
we may conclude that
\begin{equation}
\frac{ \Pr(\M_2(\fdb) = \outc \given \dba = 0) }
     { \Pr(\M_2(\fdb) = \outc \given \dba = 1) } 
  =
\frac{ e^{\idp} \cdot A_0^r } { A_1^r }
  = 
e^{\idp} \left[ \frac{ (1 + e^{-\idp})^s + (1 - e^{-\idp})^s }
                    { (1 + e^{-\idp})^s - (1 - e^{-\idp})^s }
        \right]^r.
\end{equation}
The \infpriv parameter of $\M_2$ is therefore given by
\begin{align*}
\ndp &= \idp + r \ln \left[ \frac{ (1 + e^{-\idp})^s + (1 - e^{-\idp})^s }
                    { (1 + e^{-\idp})^s - (1 - e^{-\idp})^s }
        \right] 
  \\ & < 
\idp + r \ln \left[ 1 + 2 \left( \frac{1-e^{-\idp}}{1+e^{-\idp}} \right)^s \right]
  \\ & =
\idp + r \ln \left[ 1 + 2 \tanh^s(\idp) \right] <
\idp + 2 r \idp^s.
\end{align*}
Comparing the inferential privacy parameters of $\M_1$ and $\M_2$,
they are $(r+1)\idp$ and $\idp + 2r \idp^s$, respectively, so the
inferential privacy parameter of $\M_1$ exceeds that of the maximally
0-biased mechanism, $\M_2$, by an unbounded factor as $r,s \to \infty$.

Under the distribution $\jd$ we have analyzed thus far, the bits of 
$\fdb$ are pairwise independent. However, we may take a convex
combination of $\jd$ with any distribution in which all pairs of bits
are strictly positively correlated---for example, a distribution
that assigns equal probability to the two databases
$(1,\ldots,1)$ and $(0,\ldots,0)$ and zero probability 
to all others. In this way we obtain a distribution $\jd'$ which
satisfies pairwise strict positive correlation and may can be made
arbitrarily close to $\jd$ by varying the mixture parameter of the
convex combination. Since the \infpriv parameter of a mechanism 
with respect to a given prior distribution is
a continuous function of that distribution,
it follows that the \infpriv parameters of $\M_1$
and $\M_2$ can remain arbitrarily close to the values calculated
above while imposing a requirement that the prior on
$\fdb$ satisfies pairwise strict positive correlation.



\subsection{Connection to Ferromagnetic Spin Systems}
\label{sec:physics}

\label{sec:gibbs}


The quantities appearing in \cref{t-pa}
have precise analogues in the physics of spin
systems, and this analogy sheds light on
\infpriv. In statistical mechanics, a
two-spin system composed of $n$ sites has a 
state space $\{ \pm 1 \}^n$ and an {\em energy
function} or {\em Hamiltonian}, $H : \{ \pm 1 \}^n \to \reals$.
The Gibbs measure of the spin system is a probability
distribution assigning to each
state a probability proportional to $e^{-\beta H(\state)}$
where $\beta > 0$ is a parameter called the {\em inverse
temperature}. Application of an {\em external field} 
$\extfld \in \reals^n$ to the spin system is modeled by 
subtracting a linear function from the Hamiltonian, so that it becomes
$H(\state) - \extfld \cdot \state$. The probability of
state $\state$ under the Gibbs measure then becomes
\[
\Pr(\state) = 
 e^{\beta [\extfld \cdot \state - H(\state)] } / Z(\beta,\extfld),
\]
where $Z(\cdot)$ is the {\em partition function}
\[
   Z(\beta,\extfld) = \sum_{\state \in \{\pm 1\}^n} 
                       e^{\beta [\sum_i h_i \sigma_i - H(\state)] } .
\]

Databases $\fdb \in \{0,1\}^n$ are in one-to-one correspondence with states
$\state \in \{\pm 1\}^n$ under the mapping $\sigma_i = (-1)^{\bg_i}$
and its inverse mapping $\bg_i = \frac12 (1-\sigma)$.
Any joint distribution
$\jd = \{0,1\}^n$ has a corresponding Hamiltonian
$H(\state) = - \ln \jd(\fdb)$ whose Gibbs
distribution (at $\beta=1$) equals $\jd$. The positive
affiliation condition is equivalent to requiring
that $H$ is submodular, a property which is expressed
by saying that the spin system is {\em ferromagnetic}.

For a maximally $0$-biased mechanism $\M$ with distinguished
outcome set $S$, the probabilities
$\m(\fdb) = \Pr(\M(\fdb) \in S)$ satisfy
$
  \m(\fdb) \propto \exp(- \sum_{i=1}^n \idpi \bg_i)
      = \exp(- \tfrac{n}{2} + \tfrac12 \sum_i \idpi \sigma_i),
$
so 
$$
  \jd(\fdb) \m(\fdb) \propto e^{- \tfrac{n}{2} + \tfrac12 \sum_i \idpi \sigma_i
                            - H(\state)}.
$$
Application of the mechanism $\M$ is thus analogous to
application of the external field $\frac12 \bm{\idp}$
at inverse temperature 1. 
(The additive constant $-\frac{n}{2}$ in the Hamiltonian
is irrelevant, since the Gibbs measure is unchanged by an
additive shift in the Hamiltonian.) Similarly, applying
a maximally 1-biased mechanism 
is analogous to applying
the external field $- \frac12 \bm{\idp}$
at inverse temperature 1. 

Let $\prat = \frac{\jd(\ba=1)}{\jd(\ba=0)}$ denote the prior 
probability ratio for Athena's bit. 
For the networked
privacy guarantee in \autoref{t-pa}, when
the maximum on the right side of~\eqref{eq:pa} is achieved
by a maximally 0-biased mechanism, we have
\begin{align} \nonumber
\frac{e^{\ndp_a}-\prat}{e^{\ndp_a}+\prat} 
& = 
 \frac{\sum_{\fdb = (0,\pdb)} \jd(\fdb) \m(\fdb) -
                         \sum_{\fdb = (1,\pdb)} \jd(\fdb) \m(\fdb)}
                        {\sum_{\fdb = (0,\pdb)} \jd(\fdb) \m(\fdb) +
                         \sum_{\fdb = (1,\pdb)} \jd(\fdb) \m(\fdb)}
\label{eq:pa2}
= 
\ifextabs
 \frac{\sum_{\state} \sigma_a e^{\bm{\idp} \cdot \state/2 - H(\state)}}
      {\sum_{\state} e^{\bm{\idp} \cdot \state/2 - H(\state)}}
\else
 \frac{\sum_{\state} \sigma_a e^{\bm{\idp} \cdot \state/2 - H(\state)}}
                        {Z(1,\bm{\idp}/2)} 
\fi
  = 
 \expect[\sigma_a \mid \extfld = \tfrac{\bm{\idp}}{2}],
\end{align}
where the operator $\expect[\cdot \mid \extfld = \tfrac{\bm{\idp}}{2}]$ 
denotes the expectation under the Gibbs measure corresponding to 
external field $\extfld = \tfrac{\bm{\idp}}{2}$. A similar calculation
in the case that a maximally 1-biased mechanism maximizes the right side of~\eqref{eq:pa} 
yields the relation $\frac{e^{\ndp_a}-\prat^{-1}}{e^{\ndp_a}-\prat^{-1}} =
\expect[- \sigma_a \mid \extfld = -\tfrac{\bm{\idp}}{2}]$. Combining these two
cases, we arrive at:
\begin{equation} \label{eq:gibbs-ndp}
  \ndp_a = \max \left\{ \ln \prat + \ln \left( 
                                      \frac{1 + \expect[\sigma_a \mid \extfld = \bm{\idp} / 2]}
                                               {1 - \expect[\sigma_a \mid \extfld = \bm{\idp} / 2]}
	                           \right), \;
                                     -\ln \prat  - \ln \left(
                                     \frac{1 + \expect[\sigma_a \mid \extfld = -\bm{\idp}/2]}
                                              {1 - \expect[\sigma_a \mid \extfld = -\bm{\idp}/2]}
                                     \right)
                          \right\}.
\end{equation}
We will refer to $\expect[\sigma_a]$ as the {\em  magnetization
at site $a$}, by analogy with the usual definition of magnetization in
statistical mechanics as the average
$\frac1n \sum_{i=1}^n \expect \left[ \sigma_i \right]$.
Equation~\eqref{eq:gibbs-ndp} thus shows that the \infpriv guarantee
for a positively affiliated distribution is completely determined by the magnetization at site $a$
when an external field of strength $\pm \bm{\idp}/2$ is applied.

\subsubsection{Ising models and phase transitions}
\label{sec:ising}

Let us now apply this circle of ideas to analyze the ``Zeus's family tree''
example from \autoref{sec:intro}.
Represent Zeus and his progeny as the nodes of a
rooted tree, and suppose that the joint distribution of the
individuals' bits is defined by the following
sampling rule: sample the bits in top-down order
(from root to leaves), setting the root's bit to
0 or 1 with equal probability and each other node's
bit equal to the parent's value with probability
$p > \frac12$ and the opposite value otherwise.
This leads to a probability distribution $\jd$ in which
the probability of any $\fdb \in \{0,1\}^{V(T)}$ is
proportional to $p^{a(\fdb)} (1-p)^{b(\fdb)}$ where 
$a(\fdb)$ denotes the number of tree edges whose
endpoints receive the same label, and $b(\fdb)$ is
the number of edges whose endpoints receive
opposite labels. Letting $J = \tanh^{-1}(2p-1)$
so that $\ln(p) = \ln(1-p) + 2 J$, and associating
$\state \in \{\pm 1\}^n$ to $\fdb \in \{0,1\}^n$ via
$\sigma_i = (-1)^{\bg_i}$ as before, we find that
up to an additive constant, 
\ifextabs
$\ln \jd(\fdb) = J \sum_{(i,j) \in E} \sigma_i \sigma_j,$
\else
\[
\ln \jd(\fdb) = J \sum_{(i,j) \in E} \sigma_i \sigma_j,
\]
\fi
where $E$ denotes the edge set of the tree.
Hence, the joint distribution of Zeus's family tree is equivalent
to the Gibbs measure of the Hamiltonian 
$H(\state) = - J \sum_{(i,j) \in E} \sigma_i \sigma_j$.
Models whose Hamiltonian takes this form (for any graph,
not just trees) are known as {\em Ising models} 
(with {\em interaction strength $J$}) and are
among the most widely studied in mathematical physics.

Ising models are known to undergo {\em
phase transitions} as one varies the inverse temperature or
external field. For example, in an infinite two-dimensional
lattice or $\Delta$-regular tree, there is a phenomenon known
as {\em spontaneous magnetization} where the magnetization
does not converge to zero as the external field converges to
zero from above, but this phenomenon only occurs if the
inverse temperature is above a critical value, $\beta_c$,
that is equal to $\ln(1+\sqrt{2})$ for the two-dimensional
lattice and to $\frac12 \ln( 1 + \frac{2}{\Delta-2})$ for the
$\Delta$-regular tree.
This phenomenon of phase transitions has consequences
for \infpriv, as articulated in \cref{t-ising} below. 
To state the theorem it is useful to make the following
definition.

\begin{definition} \label{def:enforces}
Let $\jds$ be a family of joint distributions on
$\{0,1\}^*$, with each distribution $\jd \in \jds$ being
supported on $\bn$ for a specific value $n=n(\jd)$.
For a differential privacy parameter $\idp>0$, let
$\ndp(\idp,\jds)$ denote the supremum,
over all joint distributions
$\jd \in \jds$,
of the \infpriv guarantee corresponding to differential
privacy parameter $\idp$. 
We say that $\ndp$ is 
{\em differentially enforceable} with respect to $\jds$
if there exists $\idp>0$ such that $\ndp \leq \ndp(\idp,\jds)$.
\end{definition}
In other words, to say that $\ndp$ is differentially enforceable 
means that a
regulator can ensure $\ndp$-\infpriv for the individuals participating
in a datasest by mandating that
an analyst must satisfy $\idp$-differential privacy
when releasing the results of an analysis performed on
the dataset.

\begin{theorem} \label{t-ising}
For a family of graphs $\graphs$ and a given $J>0$, let
$\jds$ be the family of Ising models with interaction strength $J$
and zero external field on graphs in $\graphs$.
Then
\begin{enumerate}[a.]
\item \label{sens-cor}
{\bf (Sensitivity to strength of correlations.)}
If $\graphs$ is the set of trees of maximum degree $\Delta=d+1$
and $J = \tanh^{-1} \left( \frac{1-\delta}{d} \right)$ 
for some $\delta > 0$,
then every $\ndp>0$ is differentially enforceable, and in fact
$\ndp(\idp,\jds) = \Theta(\idp/\delta)$ for $0 < \delta < \idp \ll 1$.
On the other hand, if $J > \tanh^{-1} \left( \frac{1}{d} \right)$ then 
the set of all differentially enforceable
$\ndp$ has a strictly positive infimum, $\ndp_{\min}(J,\Delta)$.
\item \label{sens-idp}
{\bf (Sensitivity to differential privacy parameter.)}
For any $0 < \idp_0 < \idp_1$ and any $1 < r < R$, there exists
a joint distribution $\jd$ whose \infpriv guarantee
satisfies $\ndp/\idp < r$ when $\idp=\idp_0$ but 
$\ndp/\idp > R$ when $\idp = \idp_1$.
\end{enumerate}
\end{theorem}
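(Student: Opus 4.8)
The plan is to push everything through the magnetization identity~\eqref{eq:gibbs-ndp}. The distributions in $\jds$ are Ising models with zero external field, so by the $\sigma\mapsto-\sigma$ symmetry the prior ratio is $\prat=1$ and~\eqref{eq:gibbs-ndp} reduces to $\ndp_a=\ln\frac{1+m_a}{1-m_a}=2\tanh^{-1}(m_a)$, where $m_a=\expect[\sigma_a]$ is the magnetization at Athena's site under a uniform external field of strength $\idp/2$. On a tree, belief propagation computes $m_a$ exactly: rooting $T$ at $a$ and writing $\psi_J(x)=\ln\frac{\cosh(x/2+J)}{\cosh(x/2-J)}$, one gets $m_a=\tanh(x_a/2)$ with $x_v=\idp+\sum_{u\text{ a child of }v}\psi_J(x_u)$ and $x_v=\idp$ at leaves, hence $\ndp_a=x_a$. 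Elementary calculus shows $\psi_J$ is odd, increasing, concave on $[0,\infty)$, with $\psi_J(0)=0$, $\psi_J'(0)=\tanh J$ and range $(-2J,2J)$; since $\psi_J\ge 0$ on $[0,\infty)$ the quantities $x_v$ only increase as one adds children or depth, so $\ndp(\idp,\jds)$ --- the supremum over trees of maximum degree $\Delta=d+1$ --- equals $y+\psi_J(y)$, where $y$ is the relevant fixed point of $y=\idp+d\,\psi_J(y)$. The rest of the proof studies this fixed point.

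For part~(\ref{sens-cor}): when $J=\tanh^{-1}(\tfrac{1-\delta}{d})$ we have $d\tanh J=1-\delta<1$, so $y=\idp+d\psi_J(y)$ has a unique fixed point, which $\to0$ as $\idp\to0$; hence $\ndp(\idp,\jds)\to0$ and every $\ndp>0$ is differentially enforceable. For the upper bound use concavity, $\psi_J(x)\le(\tanh J)x$ on $[0,\infty)$: then $y\le\idp+(1-\delta)y$, i.e.\ $y\le\idp/\delta$, and $\ndp(\idp,\jds)=y+\psi_J(y)\le\idp/\delta+2J=O(\idp/\delta)$ --- consistent with, and in the relevant range sharper than, the Dobrushin-type estimate of~\autoref{t-bm}. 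For the matching lower bound use the third-order expansion $\psi_J(x)=(\tanh J)x-c_Jx^3+O(x^5)$ with $c_J=\tfrac{1}{24}\operatorname{sech}^2(J)\tanh J=O(J)$, valid with a uniform remainder on $[0,\idp/\delta]$; substituting the a priori bound $y\le\idp/\delta$ into the cubic term gives $\delta y\ge\idp-O\!\big((\idp/\delta)^3\big)$, so $y=\tfrac{\idp}{\delta}(1-o(1))$ in the stated regime, yielding $\ndp(\idp,\jds)=\Theta(\idp/\delta)$ there. (As $d\to\infty$ the recursion collapses to the subcritical Curie--Weiss self-consistency $m=\tanh\!\big((1-\delta)m+\idp/2\big)$ with $\ndp_a=2\tanh^{-1}(m)$, whence $\ndp(\idp,\jds)/\idp=\big(1-(1-\delta)\,m/\tanh^{-1}m\big)^{-1}$.) For the second claim, when $J>\tanh^{-1}(\tfrac1d)$ we have $d\tanh J>1$, so the concave map $y\mapsto d\psi_J(y)$ has a strictly positive fixed point $y^{*}>0$; adding the constant $\idp$ shifts the fixed point of $y\mapsto\idp+d\psi_J(y)$ only to the right, so $y(\idp)\ge y^{*}$ for every $\idp>0$, hence $\ndp(\idp,\jds)\ge y^{*}$ for every $\idp>0$ and no $\ndp<y^{*}$ is differentially enforceable; thus $\ndp_{\min}(J,\Delta)\ge y^{*}>0$.

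For part~(\ref{sens-idp}) I would realize the sharp crossover predicted by the phase-transition analogy with an explicit ``rigid cluster with a built-in field.'' Take $n=k+1$ individuals --- Athena (bit $\sigma_a$) and a block of $k$ clones whose bits are constrained equal (common value $\tau$) --- and put $\jd(\sigma_a,\tau,\dots,\tau)\propto\exp(K\sigma_a\tau-h^{*}k\tau)$ on this support, with coupling $K>0$ and a built-in field $-h^{*}$ biasing the clones to $-1$; one checks the support is a sublattice and $\log\jd$ is supermodular on it, so $\jd$ is positively affiliated and~\eqref{eq:gibbs-ndp} applies. Summing out $\tau$ under a uniform field $h$ gives $\Pr(\sigma_a=\pm1)\propto e^{\pm h}\cosh\!\big(K\pm(h-h^{*})k\big)$; writing $\chi(x)=\ln\cosh(K+x)-\ln\cosh(K-x)$ (odd, increasing, range $(-2K,2K)$, with $\chi(x)\to\pm2K$ at an exponential rate), one obtains $\ln\prat=\chi(h^{*}k)$ and, from~\eqref{eq:gibbs-ndp},
\[
\ndp_a(\idp)=\idp+\max\!\Big\{\,\chi(h^{*}k)+\chi\big((\tfrac{\idp}{2}-h^{*})k\big),\ -\chi(h^{*}k)+\chi\big((\tfrac{\idp}{2}+h^{*})k\big)\Big\}.
\]
Now fix $h^{*}\in(\idp_0/2,\idp_1/2)$, then $K$ large enough that $4K>(R-1)\idp_1$, and finally $k$ large enough that $\chi$ is within $\tfrac14(r-1)\idp_0$ of $\pm2K$ on every relevant argument. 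For $\idp=\idp_0<2h^{*}$ the arguments $(\tfrac{\idp_0}{2}-h^{*})k$ and $h^{*}k$ have opposite signs, so both bracketed terms are $\approx0$ and $\ndp_a(\idp_0)\le\idp_0+\tfrac12(r-1)\idp_0<r\,\idp_0$; for $\idp=\idp_1>2h^{*}$ all arguments are large and positive, the first bracketed term is $\approx4K$, and $\ndp_a(\idp_1)\ge\idp_1+4K-o(1)>R\,\idp_1$.

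I expect part~(\ref{sens-idp}) to be the main obstacle: one must exhibit a positively affiliated prior whose \infpriv parameter grows superlinearly in $\idp$, whereas the belief-propagation and mean-field recursions underlying ordinary Ising trees and cliques are concave, which forces $\ndp_a(\idp)/\idp$ to be decreasing --- so a naive candidate cannot work, and the ``rigid cluster'' device is what makes the crossover both arbitrarily sharp and located at a prescribed value $\idp=2h^{*}$. Once it is set up, the estimates reduce to exponential-tail bounds on $\chi$. A secondary annoyance is pinning down precisely which window of $(\idp,\delta)$ makes $\ndp(\idp,\jds)$ genuinely $\Theta(\idp/\delta)$ rather than merely $O(\idp/\delta)$: the honest statement is that the two-sided bound holds while $\idp/\delta$ stays bounded, with $\ndp/\idp$ interpolating between $\Theta(1/\delta)$ and $1$ as $\idp$ grows.
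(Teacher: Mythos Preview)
Your treatment of part~(\ref{sens-cor}) is essentially the paper's argument in log-odds coordinates: your belief-propagation variable $x_v$ is exactly $\ln$ of the partition-function ratio the paper calls $x_n$, and the recursion $x_v=\idp+\sum_u\psi_J(x_u)$ is the logarithm of the paper's recursion $x_{n+1}=y(x_n)$ (so your fixed point $y$ equals $\ln x(J,\idp/2)$ and $y+\psi_J(y)=\tfrac{d+1}{d}\ln x(J,\idp/2)-\tfrac{\idp}{d}$, matching \cref{cor:bethe-ndp}). Your monotonicity step ``$x_v$ increases with children/depth because $\psi_J\ge 0$'' is a clean substitute for the paper's FKG-based \cref{lem:subtree}, and your concavity bound $\psi_J(x)\le(\tanh J)x$ gives the upper bound directly, which the paper obtains only implicitly via \autoref{t-bm}. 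Both your lower-bound Taylor argument and the paper's implicit-differentiation argument really establish $\ndp/\idp\to c/\delta$ as $\idp\to 0$ for fixed $\delta$; neither pins down the regime ``$0<\delta<\idp\ll1$'' with full rigor, and your closing caveat about this is apt.

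For part~(\ref{sens-idp}) you take a genuinely different route. The paper stays within the Ising-on-trees framework: it uses the Ising model on a deep $d$-ary tree with a \emph{nonzero} external field $h_0\in(\idp_0/2,\idp_1/2)$ and large interaction strength $J$, so that the one-sided limits $w(J,h)$ jump across $h=0$; then at $\idp_0$ both field arguments $h_0\pm\idp_0/2$ lie on the same side of the discontinuity (giving $\ndp/\idp$ bounded via a derivative estimate), while at $\idp_1$ they straddle it (giving $\ndp/\idp$ large). Your rigid-cluster construction is conceptually the same trick --- a built-in asymmetry $h^*\in(\idp_0/2,\idp_1/2)$ placing the crossover between the two scales --- but replaces the infinite-tree discontinuity by the simpler saturation $\chi(x)\to\pm 2K$ of a block of $k$ perfectly correlated clones. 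Your version is more elementary (explicit two-variable computation, no fixed-point analysis) and correct; the paper's buys thematic consistency with the Ising/Bethe-lattice machinery developed for part~(\ref{sens-cor}). One small point worth making explicit in your write-up: the clone distribution is supported on a proper sublattice of $\{0,1\}^{k+1}$, so you should note that the FKG step underlying \autoref{t-pa} (and hence~\eqref{eq:gibbs-ndp}) goes through for nonnegative log-supermodular weights, not only strictly positive ones.
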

\noindent 
Part (\ref{sens-idp}) is particularly striking 
because it implies, for instance, that when a policy-maker 
contemplates whether to mandate differential
privacy parameter $\idp=0.19$ or $\idp=0.2$, this seemingly
inconsequential decision could determine 
whether the \infpriv guarantee will be 
$\ndp = 0.2$ or $\ndp = 20$. 

\S\ref{sec:bethe} is devoted to
proving the theorem.
The proof combines known facts about phase transitions with some
calculations regarding magnetization of Ising models on
a tree subjected to an external field.

\subsubsection{The Bethe lattice and the proof of \cref{t-ising}}
\label{sec:bethe}

The infinite $\Delta$-regular tree is known in mathematical
physics as the {\em Bethe lattice with coordination number
$\Delta$}. Most of the results stated in \cref{t-ising} can
be derived by analyzing the Ising model on the Bethe lattice
and calculating the magnetization of the root when the 
lattice is subjected to an external field.
Throughout this section, we will use the notation
$\gavg{\sigma_a}$ to denote the expectation of the random
variable $\sigma_a$ under the distribution defined
by the Ising model with interaction strength $J$ at inverse temperature $1$ 
and external field $h$.

\begin{lemma} \label{lem:fixpt}
For given $\jj > 0$, $d \geq 2$, and $h \in \reals$, define a function
$y(x)$ by
\[
    y(x) = e^{2 h} \left( \frac{e^{\jj} x + e^{-\jj}}
                             {e^{\jj} + e^{-\jj} x}
                  \right)^d.
\]
The sequence $x_0,x_1,\ldots$ defined recursively by
$x_0=1$ and $x_{n+1} = y(x_n)$ for $n \geq 0$ converges
to a limit point $x = x(\jj,h)$. This limit point is determined
as follows.
\begin{itemize}
\item If $h=0$ then $x(\jj,h)=1$.
\item If $h>0$ then $x(\jj,h)$ is the unique solution 
of the equation $x=y(x)$ in the interval $(1,\infty)$.
\item If $h<0$ then $x(\jj,h)$ is the unique solution
of the equation $x=y(x)$ in the interval $(0,1)$.
\end{itemize}
The behavior of $x(\jj,h)$ near $h=0$ depends on the
value of $\jj$. 
If $\tanh(\jj) < \frac1d$, then $x(\jj,h)$ varies continuously
with $h$ and $\lim_{h \to 0} x(\jj,h) = 1$.
If $\tanh(\jj) > \frac1d$, then the function $x(\jj,h)$ is 
discontinuous at $h=0$, and it satisfies
$\lim_{h \searrow 0} x(\jj,h) > 1$
and $\lim_{h \nearrow 0} x(\jj,h) < 1$.
\end{lemma}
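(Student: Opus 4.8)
The plan is to reduce the lemma to a one-dimensional fixed-point analysis of $y$ and then read off the behavior near $h=0$ from the convexity structure of $\ln y$ in logarithmic coordinates. First I would record the basic properties of $y$ on $(0,\infty)$: it is continuous and strictly increasing (the inner fraction $\frac{e^Jx+e^{-J}}{e^J+e^{-J}x}$ has derivative proportional to $e^{2J}-e^{-2J}>0$, and $y$ is a positive constant times a positive power of it); it maps $(0,\infty)$ into the bounded interval $(e^{2h-2Jd},\,e^{2h+2Jd})$; and $y(1)=e^{2h}$. Since $y$ is increasing, the iterates $x_{n+1}=y(x_n)$ form a monotone bounded sequence --- increasing if $h>0$ (as $x_1=e^{2h}>1=x_0$), decreasing if $h<0$, constant if $h=0$ --- hence convergent to a fixed point $x(J,h)$ of $y$. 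A one-line induction (if $p$ is a fixed point and $x_n<p$ then $x_{n+1}<p$, and symmetrically) shows this limit is the \emph{smallest} fixed point in $[1,\infty)$ when $h>0$ and the \emph{largest} in $(0,1]$ when $h<0$; when $h=0$, $x_0=1$ is already a fixed point, so $x(J,0)=1$.

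Next I would change variables to $t=\ln x$, $\psi(t)=\ln y(e^t)=2h+d\ln\frac{e^{J+t}+e^{-J}}{e^J+e^{-J+t}}$, so that fixed points of $y$ correspond exactly to zeros of $G(t)=\psi(t)-t$. Differentiating gives $\psi'(t)=d\bigl(s(t+2J)-s(t-2J)\bigr)$, where $s$ is the logistic function, and $\psi''(t)=d\bigl(s'(t+2J)-s'(t-2J)\bigr)$; since $s'$ is even and strictly decreasing in $|u|$, it follows that $\psi''>0$ on $(-\infty,0)$, $\psi''(0)=0$, and $\psi''<0$ on $(0,\infty)$. Thus $\psi'$ is unimodal with a strict maximum $\psi'(0)=d\tanh J$ at the origin and $\psi'(t)\to 0$ as $t\to\pm\infty$; also $\psi(t)\to 2h\pm 2Jd$ as $t\to\pm\infty$, so $G(t)\to\mp\infty$, while $G(0)=2h$. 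The dichotomy in the lemma is exactly whether $\psi'(0)=d\tanh J$ is below or above $1$, i.e. whether $\tanh J<\tfrac1d$ or $\tanh J>\tfrac1d$.

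From here the two regimes follow. If $\tanh J<\tfrac1d$ then $\psi'<1$ everywhere, so $G$ is strictly decreasing from $+\infty$ to $-\infty$ and has a unique zero $t^*(h)$; the sign of $G(0)=2h$ shows $t^*(h)>0$ (so $x>1$) iff $h>0$, $<0$ iff $h<0$, and $=0$ iff $h=0$, which together with the first paragraph gives all the identification statements, and the implicit function theorem (using $\partial_t G(t^*;h)=\psi'(t^*)-1\neq0$) gives continuity of $t^*$, hence of $x(J,h)$, at $h=0$ with limit $1$. If instead $\tanh J>\tfrac1d$, then $\{t:\psi'(t)>1\}$ is an open interval $(t_-,t_+)\ni0$, so $G$ is decreasing on $(-\infty,t_-)$, increasing on $(t_-,t_+)$, decreasing on $(t_+,\infty)$; on $(0,\infty)$ it rises from $G(0)=2h$ to a local maximum at $t_+$ and falls to $-\infty$, which for every $h>0$ gives exactly one zero, lying in $(t_+,\infty)$, and none in $(0,t_+]$, so $y$ has a unique fixed point in $(1,\infty)$, and since the recursion increases from $1$ to a fixed point $\ge e^{2h}>1$ it converges to exactly that one. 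The case $h<0$ is symmetric, using that $x\mapsto 1/x$ conjugates $y$ to the same map with $h$ replaced by $-h$, so $x(J,-h)=1/x(J,h)$. Finally, at $h=0$ the zeros of $G(\cdot;0)$ are $t_{\mathrm{root},1}<0<t_{\mathrm{root},3}$ together with $0$ itself, with $t_{\mathrm{root},3}>t_+$ a simple zero; as $h\searrow0$ the relevant zero $r_3(h)\in(t_+,\infty)$ converges to $t_{\mathrm{root},3}$, so $\lim_{h\searrow0}x(J,h)=e^{t_{\mathrm{root},3}}>1$, and the conjugacy gives $\lim_{h\nearrow0}x(J,h)=e^{-t_{\mathrm{root},3}}<1$, proving the discontinuity.

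The main obstacle is the shape analysis of $\psi$ in the second paragraph --- establishing that $\psi'$ is unimodal with peak exactly $d\tanh J$ at the origin --- together with the bookkeeping in the supercritical case: even though $G$ may have two or three zeros, one must verify that \emph{exactly} one of them lies in the region $x>1$ for \emph{every} $h>0$ (not merely for small $h$, where some zeros can disappear as the graph of $G$ is shifted), so that the uniqueness-in-$(1,\infty)$ claim is global in $h$ while the discontinuity at $h=0$ is extracted as the one-sided limit of that distinguished zero.
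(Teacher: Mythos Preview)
Your proposal is correct and follows essentially the same strategy as the paper: establish monotone convergence of the iterates, then analyze the fixed points by studying $\ln y(x)-\ln x$ and counting its critical points, invoke the implicit function theorem for continuity when $\tanh J<1/d$, and track the distinguished root explicitly when $\tanh J>1/d$. The one substantive difference is your change of variables $t=\ln x$ and the logistic-function representation $\psi'(t)=d\bigl(s(t+2J)-s(t-2J)\bigr)$, which yields the unimodality of $\psi'$ (with peak $d\tanh J$) directly from the evenness and bell shape of $s'$; the paper instead computes $g'(x)$ as a rational function, reduces $g'(x)=0$ to a quadratic in $x$, and reads off the root structure from the discriminant and Vieta's formulas---your route is a bit cleaner and makes the threshold $\tanh J=1/d$ emerge more transparently, while the paper's buys an explicit algebraic handle on the two critical points (their product is $1$), but the two arguments are really the same proof in different coordinates.
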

\begin{proof}
Rewriting the formula for $y(x)$ as 
\[
  y(x) = e^{2 (h + \jj d)} \left[ 
         1 - \frac{e^{\jj} - e^{-3 \jj}}
                  {e^{\jj} + e^{-\jj} x}
         \right]^d
\]
it is clear that for $0 \leq x < \infty$, $y(x)$ is 
continuous and monotonically
increasing in $x$ and takes values between $e^{2 (h - \jj d)}$ and
$e^{2 (h+ \jj d)}$. Since $y$ is monotonic, the sequence
$x_0, x_1, x_2, \ldots$ defined in the lemma must be 
monotonic: if $x_0 \leq x_1$ then an easy induction
establishes that $x_n \leq x_{n+1}$ for all $n$, and 
likewise if $x_0 \geq x_1$ then $x_n \geq x_{n+1}$ for
all $n$. Any monotonic sequence in a closed, bounded interval
must converge to a limit, so the limit point $x(\jj,h)$ 
is well-defined. 

If $h=0$ then a trivial calculation shows that $x_n=1$ for all $n$,
and thus $x(\jj,h)=1$. For $h > 0$ or $h < 0$ we must show 
that $x(\jj,h)$ is the unique solution of $y(x)=x$ in the
interval $(1,\infty)$ or $(0,1)$, respectively. First note
that $y(1) = e^{2 \jj h}$, so the sequence $x_0,x_1,\ldots$
is monotonically increasing when $h>0$ and decreasing when
$h<0$. Thus $x=x(\jj,h)=\lim_{n \to \infty} x_n$ 
belongs to $(1,\infty)$ when $h>0$
and to $(0,1)$ when $h<0$. The continuity
of $y$ implies that
\[
  y(x) = \lim_{n \to \infty} y(x_n) = \lim_{n \to \infty} x_{n+1} = x.
\]
Thus, $x$ satisfies $x = y(x)$. It remains to show that this
equation has a unique solution in $(1,\infty)$ when $h>0$ and
a unique solution in $(0,1)$ when $h<0$.
 
A solution to $x=y(x)$ is also a solution to 
$\ln x - \ln y(x)=0$. The function $g(x) = \ln x - \ln y(x)$
has derivative
\begin{align*}
  g'(x) &= \frac1x - \frac{d e^{\jj}}{e^{\jj} x + e^{-\jj}} 
                  + \frac{d e^{-\jj}}{e^{\jj} + e^{-\jj} x} 
        = \frac1x - \frac{d (e^{2\jj} - e^{-2 \jj})}
                          {x^2 + (e^{2\jj} + e^{-2\jj})x + 1}
\end{align*}
The equation $g'(x)=0$ is equivalent to the quadratic equation
$x^2 - 2 [d \sinh(2 \jj) - \cosh(2 \jj)] x + 1 = 0$.
This has at most two real roots, and if it has any real roots
at all then all roots are real and 
their product is equal to 1. Therefore, it has
at most one root in the interval $(0,1)$ and at most
one root in the interval $(1,\infty)$. Furthermore,
$g'(x)$ is strictly positive at $x=0$ and as $x \to \infty$.
Summarizing 
this discussion, there exist positive numbers $x_0 \leq x_1$ such 
that $x_0 \cdot x_1 = 1$ and the
set $\{x \mid g'(x)>0\}$ intersects 
the intervals $(0,1)$ and $(1,\infty)$ in the
subintervals $(0,x_0)$ and $(x_1,\infty)$, respectively.

Now suppose $h > 0$. The set $\{x \mid x > 1 \mbox{ and } y(x)=x\}$
is non-empty; for example, it contains $x(\jj,h)$. Let $x_{\inf}$
denote the infimum of this set. 
By continuity, $y(x_{\inf})=x_{\inf}$. 
Since $g(x_{\inf})=0$ whereas $g(1) < 0$, we must 
have $g'(z) > 0$ for some $z$ in the interval $(1,x_{\inf})$. 
Recalling the number $x_1$ defined in the previous paragraph,
we must have $x_1 < z < x_{\inf}$. Consequently $g'$ is strictly positive 
throughout the interval $(x_{\inf},\infty)$, implying that there
are no other solutions of $g(x)=0$ in that interval. Thus,
$x_{\inf}$ is the unique solution of $y(x_{\inf})=x_{\inf}$ in
$(1,\infty)$, and $x(\jj,h) = x_{\inf}$. When $h<0$ an
analogous argument using 
$x_{\sup} = \sup \{ x \mid x < 1 \mbox{ and } y(x)=x\}$
proves that $x(\jj,h) = x_{\sup}$ is the unique
solution of $y(x)=x$ in the interval $(0,1)$.

To analyze the behavior of $x(\jj,h)$
near $h=0$, it is useful to first analyze the
zero set of $g'(x)$. Recall that $g'(x)=0$ if
and only if 
$x^2 - 2 [d \sinh(2 \jj) - \cosh(2 \jj)] x + 1 = 0$.
The discriminant test tells us that this quadratic equation 
has zero, one, or two real roots according to whether 
$d \sinh(2 \jj) - \cosh(2 \jj) - 1$ is less than, equal to, or 
greater than 0. Using the identities 
$\sinh(2 \jj) = 2 \sinh(\jj) \cosh(\jj) = 2 \cosh^2(\jj) \tanh(\jj)$
and $\cosh(2 \jj) = 2 \cosh^2(\jj) - 1$
we find that 
$d \sinh(2 \jj) - \cosh(2 \jj) - 1 = 
 2 \cosh^2(\jj) [ d \tanh(\jj) - 1 ]$.
So when $\tanh(\jj) > \frac1d$,
$g'(x) > 0$ for all $x$ and the equation
$y(x)=x$ has the unique solution $x(\jj,h)$.
Implicit differentiation, applied to the equation
$x(\jj,h) = y(x(\jj,h))$, yields:
\begin{equation} \label{eq:bethe.1}
\tfrac{\partial x}{\partial h} = \tfrac{\partial y}{\partial h} + 
                 \tfrac{\partial y}{\partial x} \cdot \tfrac{\partial x}{\partial h}
\end{equation}
which can be rearranged to yield
\begin{equation} \label{eq:bethe.2}
\tfrac{\partial x}{\partial h} = \tfrac{\partial y / \partial h}{1 - \partial y / \partial x}.
\end{equation}
The function $y$ is $C^\infty$ in the region $x>0$, and
at $h=0, x=1$ we have
\begin{align} 
\nonumber
\tfrac{\partial y}{\partial x} & = d \tanh(\jj) \\
\nonumber
\tfrac{\partial y}{\partial h} & = 2  \\
\label{eq:impl-diff}
\tfrac{\partial x}{\partial h} & = \tfrac{2}{1 - d \tanh(\jj)},
\end{align}
so when $\tanh(\jj) < \frac1d$ 
the implicit function theorem implies that $x(\jj,h)$ is a
differentiable, increasing function of $h$ in a neighborhood
of $h=0$.

When $\tanh(\jj) > \frac1d$ and $h=0$, we have 
$g(1)=0$ and $g'(1) < 0$, so for some sufficiently small $\delta>0$
we have $g(1+\delta) < 0, \, g(1-\delta) > 0$. On the other
hand, the fact that $\ln y(x)$ is bounded between 
$- 2 \jj d$ and $2 \jj d$ implies that 
$g(x) = \ln x - \ln y(x)$ tends to $-\infty$ as $x \to 0$
and to $\infty$ as $x \to \infty$. The intermediate value
theorem implies that there exist $x_+ \in (1+\delta, \infty)$
and $x_- \in (0,1-\delta)$ such that $g(x_+) = g(x_-)=0$.
In fact, the equation $g(x)=0$ can have at most three 
solutions since $g'(x)=0$ has only two solutions. So, the
entire solution set of $g(x)=0$ is $\{x_-, 1, x_+\}$.
Denote the function $y(x)$ in the case $h=0$ by
$y_0(x)$, to distinguish it from the case of general $h$;
similarly define $g_0(x) = \ln x - \ln y_0(x)$. Note that
$g_0(x) \leq 0$ when $1 \leq x \leq x_+$, so 
$x \leq y_0(x)$ on that interval. When $h>0$
we have $y(x) > y_0(x)$ for all $x$, hence
$y(x) > x$ for $1 \leq x \leq x_+$. As 
$x(\jj,h)$ is the unique solution of $y(x)=x$
in the interval $(1,\infty)$ it follows that $x(\jj,h) > x_+$.
On the other hand, for any
$\delta > 0$, we have $g_0(x_+ \,\! + \,\! \delta) > 0$ and hence, 
for sufficiently small $h > 0$, we also have 
$g(x_+ \,\! + \,\! \delta) > 0$. Since $g(x_+) < 0$ 
and $g(x(\jj,h))=0$, the intermediate
value theorem implies $x(\jj,h)$ belongs
to the interval $(x_+,x_+ \,\! + \,\! \delta)$ for all
sufficiently small $h>0$. In other words,
$\lim_{h \searrow 0} x(\jj,h) = x_+$.
The analogous argument for $h<0$ proves that 
$x(\jj,h) < x_-$ and that 
$\lim_{h \nearrow 0} x(\jj,h) = x_-$. 
\end{proof}

%
%

\begin{lemma} \label{lem:subtree}
If $T$ is a subtree of $T'$ and $a$ is any node of $T$,
let $\gavg{\sigma_a}_T$ and $\gavg{\sigma_a}_{T'}$ denote the
expectation of $\sigma_a$ in the Ising models on $T$ and $T'$,
respectively, with interaction
strength $\jj>0$. For $h > 0$ we have $\gavg{\sigma_a}_{T'} \geq \gavg{\sigma_a}_T$
while for $h < 0$ we have $\gavg{\sigma_a}_{T'} \leq \gavg{\sigma_a}_T$.
\end{lemma}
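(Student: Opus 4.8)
The plan is to reduce, via spin-flip symmetry, to the case $h>0$, then to the case where $T'$ has exactly one more vertex than $T$, and finally to integrate out that extra vertex --- which turns the Ising model on $T'$ into a \emph{monotone reweighting} of the Ising model on $T$ --- so that the FKG inequality (\autoref{lem:fkg}) closes the argument.

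First, the spin-flip $\sigma \mapsto -\sigma$ carries the Ising model with external field $h$ to the one with external field $-h$ and sends $\gavg{\sigma_a}$ to $-\gavg{\sigma_a}$; hence the $h<0$ case of the lemma is equivalent to the $h>0$ case, and I assume $h>0$ from now on. Next, since $T \subseteq T'$ are both trees, rooting $T'$ at any vertex of $T$ makes $V(T)$ ancestor-closed in $T'$, so a vertex of $V(T') \setminus V(T)$ of maximum depth has no children, i.e.\ is a leaf of $T'$. Peeling off such vertices one at a time exhibits $T'$ as the last term of a chain $T = T_0 \subset T_1 \subset \cdots \subset T_m = T'$ in which each $T_{k+1}$ is obtained from $T_k$ by attaching a single new leaf. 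So it suffices to prove $\gavg{\sigma_a}_{T'} \ge \gavg{\sigma_a}_{T}$ when $T' = T \cup \{w\}$ for a leaf $w$ of $T'$ adjacent to some $v \in V(T)$; the general case then follows by induction on $m$.

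For this base case, let $\mu_{T}(\sigma) \propto \exp\!\bigl( J \sum_{(i,j)\in E(T)} \sigma_i \sigma_j + h \sum_{i \in V(T)} \sigma_i \bigr)$ be the Ising density on $\{\pm 1\}^{V(T)}$. Summing the Ising weight on $\{\pm 1\}^{V(T')}$ over $\sigma_w \in \{\pm 1\}$ multiplies it by $\sum_{\sigma_w = \pm 1} e^{J\sigma_v \sigma_w + h\sigma_w} = 2\cosh(J\sigma_v + h)$, so the marginal of the $T'$-model on the coordinates $V(T)$ has density proportional to $\mu_T(\sigma)\,g(\sigma)$ with $g(\sigma) := \cosh(J\sigma_v + h)$. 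Since $J,h>0$ we have $\cosh(J+h) > \cosh(J-h)$, so $g$ (which depends on $\sigma$ only through $\sigma_v$) is nondecreasing in $\sigma$; and $\log\mu_T$ is supermodular, being a sum of the supermodular terms $J\sigma_i\sigma_j$ and the modular terms $h\sigma_i$. Applying the FKG inequality (\autoref{lem:fkg}) to the functions $\sigma_a$, $g$, and $\mu_T$ --- the statement is valid for either orientation of the coordinates, in particular for $\sigma \in \{\pm 1\}$ --- gives $\bigl[\sum_\sigma \sigma_a\,g\,\mu_T\bigr]\bigl[\sum_\sigma \mu_T\bigr] \ge \bigl[\sum_\sigma \sigma_a\,\mu_T\bigr]\bigl[\sum_\sigma g\,\mu_T\bigr]$, i.e.\ $\gavg{\sigma_a}_{T'} = \bigl(\sum_\sigma \sigma_a g \mu_T\bigr)/\bigl(\sum_\sigma g\mu_T\bigr) \ge \bigl(\sum_\sigma \sigma_a \mu_T\bigr)/\bigl(\sum_\sigma \mu_T\bigr) = \gavg{\sigma_a}_{T}$, which is the desired inequality. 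The only points requiring a little care are the peeling reduction (resting on ancestor-closedness of $V(T)$) and the monotonicity of the reweighting factor $g$, and both are immediate; an alternative route integrates out an entire hanging subtree at once, at the cost of first verifying that a ferromagnet with $h \ge 0$ induces a non-negative effective field at the attachment vertex, but the one-leaf version is cleaner and is the route I would take.
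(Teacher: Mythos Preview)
Your proof is correct and follows essentially the same approach as the paper: reduce to $h>0$ by spin-flip symmetry, reduce to the single-leaf case by peeling, integrate out the leaf to obtain the reweighting factor $\cosh(J\sigma_v+h)$, and apply the FKG inequality with the two monotone functions $\sigma_a$ and $\cosh(J\sigma_v+h)$ against the log-supermodular Ising weight. The only cosmetic difference is that the paper works in $\{0,1\}$ coordinates (so both functions are monotonically \emph{decreasing}) while you work directly in $\{\pm 1\}$ coordinates (both increasing); your justification of the peeling reduction via ancestor-closedness is slightly more explicit than the paper's one-line remark, but the content is identical.
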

\begin{proof}
It suffices to prove the lemma in the case that $h>0$ 
(since the $h<0$ case is symmetric under exchanging the 
signs $+1$ and $-1$) and that $T$ is obtained 
from $T'$ by deleting a single leaf node, $b$. The lemma then follows
by induction, since any subtree can be obtained from a tree by 
successively deleting leaves.

Let $c$ denote the parent of $b$ in $T'$, \ie, assume that 
$(b,c)$ is the unique edge of $T'$ containing $b$.
For state $\state \in \{\pm 1\}^{V(T)}$, 
$(\state,+)$ and $(\state,-)$ denote the states in $\{\pm 1\}^{V(T')}$
obtained by setting $\sigma_b=+1$ or $\sigma_b=-1$, respectively, 
while keeping the spin at every node of $T$ the same. 
If 
$H(\state) = -\jj \sum_{(i,j) \in E(T)} \sigma_i \sigma_j$
is the Hamiltonian of the Ising model on $T$, then the
Hamiltonian of the Ising model on $T'$ is given by
\begin{align*}
H'(\state,+) &= -\jj \sigma_c + H(\state) \\
H'(\state,-) &= \jj \sigma_c + H(\state).
\end{align*}
Thus, the partition functions $Z(\beta,\extfld), Z'(\beta,\extfld)$
of $T,T'$ respectively satisfy
\begin{align*}
Z(\beta,\extfld) &= \sum_{\state} e^{\beta[\extfld \cdot \state - H(\state)]} \\
Z'(\beta,\extfld) &= \sum_{\state} e^{\beta[\extfld \cdot (\state,+) -
                                           H'(\state,+)]} +
                                  e^{\beta[\extfld \cdot (\state,-) -
                                           H'(\state,-)]} \\
  &= 
  2 \sum_{\state} \cosh(\beta[h + J \sigma_c]) \,
            e^{\beta[\extfld \cdot \state - H(\state)]} .
\end{align*}
Furthermore, we have
\begin{align*}
\gavg{\sigma_a}_T &= \frac{1}{Z(\beta,\extfld)}
           \sum_{\state} \sigma_a \, e^{\beta[\extfld \cdot \state - H(\state)]} \\
\gavg{\sigma_a}_{T'} &= \frac{1}{Z'(\beta,\extfld)}
           \sum_{\state} \sigma_a  \,
                                  e^{\beta[\extfld \cdot (\state,+) -
                                           H'(\state,+)]} +
                                  e^{\beta[\extfld \cdot (\state,-) -
                                           H'(\state,-)]} \\
  &=
  \frac{2}{Z'(\beta,\extfld)}
  \sum_{\state} \sigma_a \, \cosh(\beta[h + J \sigma_c]) \,
            e^{\beta[\extfld \cdot \state - H(\state)]} .
\end{align*}
Associating to each $\fdb \in \bn$ a state $\state(\fdb) \in \{\pm 1\}^n$
via $\sigma_i = (-1)^{\dbi}$ as before, we find that  
the logarithm of the function $\fdb \mapsto 
e^{\beta[\extfld \cdot \state(\fdb) - H(\state(\fdb))]}$
is supermodular. Furthermore, the functions
$\fdb \mapsto (-1)^{\dba}$ and
$\fdb \mapsto 2 \cosh(\beta[h+(-1)^{\bg_c} J])$
are both monotonically decreasing. Thus, we may apply the
FKG inequality (\cref{lem:fkg}) to conclude that
\begin{align*} 
\lefteqn{
\left[
\sum_{\state} e^{\beta[\extfld \cdot \state - H(\state)]} \right]
\; \left[
2 \sum_{\state} \sigma_a \cosh(\beta[h + J \sigma_c]) 
            e^{\beta[\extfld \cdot \state - H(\state)]} \right]} \\
& \qquad \qquad \geq
\left[  2 \sum_{\state} \cosh(\beta[h + J \sigma_c]) \,
            e^{\beta[\extfld \cdot \state - H(\state)]} \right]
\; \left[
           \sum_{\state} \sigma_a e^{\beta[\extfld \cdot \state - H(\state)]} \right].
\end{align*}
Dividing both sides by $Z(\beta,\extfld) \cdot Z'(\beta,\extfld)$, we obtain
the inequality asserted in the lemma.
\end{proof}

\begin{lemma} \label{lem:bethe}
If $T$ is a finite tree of maximum degree $\Delta=d+1$, $a$ is any node of $T$,
and $\gavg{\sigma_a}$ denotes the expectation of $\sigma_a$ in the
Ising model on $T$ with interaction strength $\jj$, inverse temperature $1$, 
and external field $h>0$,
then 
\begin{equation} \label{eq:bethe}
  \ln \left( \frac{1 + \gavg{\sigma_a}}{1 - \gavg{\sigma_a}} \right)
  <
  \frac{d+1}{d} \ln x(\jj,h) - \frac{2 h}{d} .
\end{equation}
The difference between the left and right sides converges to 
zero as the distance from $a$ to the nearest node of degree
less than $\Delta$ tends
to infinity.
\end{lemma}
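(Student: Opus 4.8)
The plan is to reduce everything to the standard recursion for the Ising model on a tree and then feed it into \cref{lem:fixpt} and \cref{lem:subtree}. Root $T$ at $a$; for a node $v$ let $T_v$ be the subtree of $T$ hanging below $v$, and let $Z_v^{\pm}$ be the partition function of the Ising model on $T_v$ (interaction strength $\jj$, inverse temperature $1$, external field $h$) conditioned on $\sigma_v = \pm 1$. Writing $x_v = Z_v^+/Z_v^-$, a one-line computation that sums over the spin of each child of $v$ yields $x_v = e^{2h}\prod_{u} f(x_u)$, the product ranging over the children $u$ of $v$, where $f(x) = \frac{e^{\jj}x + e^{-\jj}}{e^{\jj} + e^{-\jj}x}$, and the empty-product convention makes $x_v = e^{2h}$ at a leaf. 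Since $y(x) = e^{2h} f(x)^d$ is exactly the map of \cref{lem:fixpt}, and $\frac{1+\gavg{\sigma_a}}{1-\gavg{\sigma_a}} = \frac{Z_a^+}{Z_a^-} = x_a$, the lemma becomes a statement purely about the number $x_a$.

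First I would prove, by induction on the height of $T_v$, that every non-root node $v$ of $T$ satisfies $e^{2h} \le x_v < x^{*}$, where $x^{*} = x(\jj,h)$ is the fixed point of \cref{lem:fixpt}. The ingredients are: $f$ is strictly increasing with $f(1)=1$; the increasing sequence $x_0=1, x_1, x_2, \dots$ of \cref{lem:fixpt} converges to $x^{*}$ and is not eventually constant, so $e^{2h} = x_1 < x^{*}$; every non-root node has at most $d$ children (one neighbour is its parent); and $e^{2h} f(x^{*})^d = y(x^{*}) = x^{*}$ with $f(x^{*}) > 1$ because $x^{*} > 1$. A product of at most $d$ factors, each strictly less than $f(x^{*})$ and at least $f(1)=1$, therefore keeps $e^{2h}\prod_u f(x_u)$ strictly below $x^{*}$ and at least $e^{2h}$. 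Applying the same estimate at the root, which has at most $d+1$ children, gives $x_a < e^{2h} f(x^{*})^{d+1} = x^{*} f(x^{*})$; substituting $f(x^{*}) = (x^{*} e^{-2h})^{1/d}$ (read off from $x^{*} = e^{2h} f(x^{*})^d$) yields $x_a < (x^{*})^{(d+1)/d} e^{-2h/d}$, and taking logarithms is precisely~\eqref{eq:bethe}.

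For the asymptotic claim I would compare $T$ with the depth-$m$ truncation $T^{(m)}$ of the infinite $\Delta$-regular tree rooted at $a$ (so $a$ has $d+1$ children and every deeper internal node has $d$ children, with leaves at depth $m$). Unwinding the homogeneous recursion on $T^{(m)}$ gives $x_a(T^{(m)}) = e^{2h} f(x_m)^{d+1} = x_{m+1} f(x_m)$ in the notation of \cref{lem:fixpt}, which converges to $x^{*} f(x^{*})$ as $m \to \infty$ by continuity of $f$ and the convergence $x_m \to x^{*}$. Now if every node of $T$ within distance $m$ of $a$ has degree exactly $\Delta$, then the radius-$m$ ball around $a$ is isomorphic to $T^{(m)}$, so $T^{(m)}$ is a subtree of $T$ containing $a$ (the remaining nodes all lie at distance $>m$ from $a$ and may be peeled off one leaf at a time). \cref{lem:subtree}, which for $h>0$ says $\gavg{\sigma_a}$ — hence $x_a$ — is nondecreasing under enlarging the tree, then sandwiches $x_{m+1} f(x_m) = x_a(T^{(m)}) \le x_a(T) < x^{*} f(x^{*})$. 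Letting the distance from $a$ to the nearest node of degree below $\Delta$ tend to infinity forces $m \to \infty$, so $x_a(T) \to x^{*} f(x^{*})$ and the gap in~\eqref{eq:bethe} closes.

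The only delicate points are bookkeeping: checking the child-count bounds ($\le d$ off the root, $\le d+1$ at the root), tracking the off-by-one between the height-indexed ratios on $T^{(m)}$ and the sequence $(x_n)$ of \cref{lem:fixpt}, and confirming that the radius-$m$ ball around $a$ really is a ``subtree'' in the peeling-leaves sense required by \cref{lem:subtree}. Beyond that there is no genuine obstacle: monotonicity of $f$, monotone convergence of $(x_n)$ to $x^{*}$, and tree-monotonicity of $\gavg{\sigma_a}$ supply everything.
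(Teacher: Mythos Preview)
Your proof is correct and essentially follows the paper's route: the same partition-function recursion $x_v = e^{2h}\prod_u f(x_u)$, the same appeal to \cref{lem:fixpt} for the convergence of the iterates, and the same use of \cref{lem:subtree} to sandwich $x_a(T)$ from below by $x_a(T^{(m)})$ (the paper's $T^*_m$). The one genuine difference is how the strict upper bound~\eqref{eq:bethe} is obtained. The paper embeds $T$ as a subtree of the larger regular tree $T^*_s$ (where $s$ is the distance from $a$ to the farthest leaf) and applies \cref{lem:subtree} a second time to get $x_a(T) \le x_a(T^*_s) < x^* f(x^*)$. You instead argue by direct induction on the subtree height that every non-root ratio satisfies $x_v < x^*$, and then bound the product at the root. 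Your version is marginally more self-contained for that step, since it needs \cref{lem:subtree} only once (for the lower sandwich), whereas the paper invokes it for both bounds; on the other hand, the paper's two-sided embedding makes the symmetry between the upper and lower estimates explicit. Either way the substance is the same recursion-and-monotonicity argument.
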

\begin{proof}
Define a sequence of rooted trees $T_0, T_1, \ldots$ recursively,
by stating that $T_0$ is a single node and $T_{n+1}$ consists of a 
root joined to $d = \Delta-1$ children, each of whom is the root
of a copy of $T_n$. Also define a sequence of trees $T^*_0, T^*_1, \ldots$,
by stating that $T^*_0 = T_0$ while for $n>0$, $T^*_n$ consists of a 
root joined to $\Delta$ children, each of whom is the root
of a copy of $T_{n-1}$. (In other words, $T^*_n$ is like $T_n$,
with the root modified to have $\Delta$  instead of $\Delta-1$
children.)

If $T$ is any tree of maximum degree $\Delta$ containing
a node labeled $a$, then let $r$ denote the distance 
from $a$ to the nearest
node of degree less than $\Delta$, and let $s$ denote 
the distance from $a$ to the farthest leaf. We can 
embed $T^*_{r}$ as a subtree of $T$ rooted at $a$, and
we can embed $T$ as a subtree of $T^*_{s}$ with $a$ at the
root. Applying \autoref{lem:subtree},
$$
  \gavg{\sigma_a}_{T^*_r} \leq
  \gavg{\sigma_a}_{T} \leq
  \gavg{\sigma_a}_{T^*_s}.
$$
To complete the proof we will show that 
$\ln \left( \frac{1+\gavg{\sigma_a}_{T^*_n}}{1-\gavg{\sigma_a}_{T^*_n}} 
\right)$ converges to $\frac{d+1}{d} \ln x(\jj,h) - \frac{2 h}{d}$ 
(from below) as $n \to \infty$. 

For any tree $T$ with root node $k$, let 
\begin{align*}
  Z^+(T) &= \sum_{\state : \sigma_k=+1} e^{\jj \sum_{i,j} \sigma_i \sigma_j
            +  h \sum_i \sigma_i}  \\
  Z^-(T) &= \sum_{\state : \sigma_k=-1} e^{\jj \sum_{i,j} \sigma_i \sigma_j
            + h \sum_i \sigma_i} 
\end{align*}
We have $\gavg{\sigma_k}_T = \frac{Z^+(T)-Z^-(T)}{Z^+(T) + Z^-(T)}$,
so $\frac{1+\gavg{\sigma_k}_T}{1-\gavg{\sigma_k}_T} = \frac{Z^+(T)}{Z^-(T)}$.
For the tree $T_n$ defined in the preceding paragraph, the quantity
$x_n = Z^+(T_{n})/Z^-(T_{n})$ satisfies the recurrence
\[
  x_{n+1} = \frac{e^{h} (e^{\jj} Z^+(T_{n}) + e^{-\jj} Z^-(T_n))^d}
                 {e^{-h} (e^{-\jj} Z^+(T_n) + e^{\jj} Z^-(T_n))^d}
         = e^{2 h} \left(
                    \frac{e^{\jj} x_n + e^{-\jj}}{e^{\jj} + e^{-\jj} x_n}
                  \right)^d
         = y(x_n)
\]
where the function $y(\cdot)$ is defined as in 
\cref{lem:fixpt}. Applying the conclusion of 
that lemma, we find that $x_n$ increases with $n$
and $x_n \to x(\jj,h)$ from below
as $n \to \infty$. Finally, for the quantity
$x^*_n = Z^+(T^*_n)/Z^-(T^*_n)$ we have
\[
  x^*_{n} = \frac{e^{h} (e^{\jj} Z^+(T_{n-1}) 
                               + e^{-\jj} Z^-(T_{n-1}))^{d+1}}
                 {e^{-h} (e^{-\jj} Z^+(T_{n-1}) 
                               + e^{\jj} Z^-(T_n))^{d+1}}
         = e^{2 h} \left(
                    \frac{e^{\jj} x_{n-1} + e^{-\jj}}{e^{\jj} 
                              + e^{-\jj} x_{n-1}}
                  \right)^{d+1}
         = e^{-2 h / d} x_n^{(d+1)/d}.
\]
Finally,
\[
 \ln \left( \frac{1 + \gavg{\sigma_a}}{1 - \gavg{\sigma_a}} \right)
    = 
 \ln x^*_n
    = 
 \tfrac{d+1}{d} \ln(x_n) - \tfrac{2 h}{d}.
\]
The lemma follows because
$\ln(x_n)$ increases with $n$
and converges to $\ln x(\jj,h)$ 
from below as 
$n \to \infty$.
\end{proof}

\begin{corollary} \label{cor:bethe-ndp}
Let $\jds$ denote the family of Ising models
with interaction strength $\jj$ and
zero external field on trees of maximum
degree $\Delta$. For any $\idp > 0$, 
\begin{equation} \label{eq:bethe-ndp}
  \ndp(\idp,\jds) = \tfrac{\Delta}{\Delta-1} 
                    \ln x \! \left(\jj,\tfrac{\idp}{2} \right) 
                     - \tfrac{\idp}{\Delta-1}.
\end{equation}
\end{corollary}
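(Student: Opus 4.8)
The plan is to combine the reformulation of worst-case \infpriv for positively affiliated priors---\cref{t-pa}, recast via the spin-system dictionary of \cref{sec:physics}---with the explicit Bethe-lattice computation in \cref{lem:bethe}. First I would fix an arbitrary tree $T$ of maximum degree $\Delta=d+1$, an arbitrary designated node $a$, and the uniform vector of differential privacy parameters $\bm{\idp}=(\idp,\dots,\idp)$. Since the zero-field Ising model on $T$ satisfies $\ln\jd(\fdb)=\jj\sum_{(i,j)\in E(T)}\sigma_i\sigma_j$ up to an additive constant, and this is a supermodular function of $\fdb$, the prior $\jd$ is positively affiliated, so \cref{t-pa}---equivalently equation~\eqref{eq:gibbs-ndp}---applies. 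The Hamiltonian $H(\state)=-\jj\sum_{(i,j)\in E(T)}\sigma_i\sigma_j$ is invariant under the global spin flip $\state\mapsto-\state$, which forces the prior ratio $\prat$ to equal $1$ and forces $\expect[\sigma_a\mid\extfld=-\bm{\idp}/2]=-\expect[\sigma_a\mid\extfld=\bm{\idp}/2]$. Substituting these two facts into~\eqref{eq:gibbs-ndp} makes its two branches coincide, collapsing the formula to $\ndp_a=\ln\frac{1+\gavg{\sigma_a}}{1-\gavg{\sigma_a}}$, where $\gavg{\sigma_a}$ is the magnetization at $a$ in the Ising model on $T$ with interaction strength $\jj$, inverse temperature $1$, and uniform external field $h=\idp/2>0$.

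With this reduction in hand, the rest is bookkeeping. Applying \cref{lem:bethe} with $d=\Delta-1$ and $h=\idp/2$ gives, for every such $T$ and $a$, the strict inequality $\ndp_a=\ln\frac{1+\gavg{\sigma_a}}{1-\gavg{\sigma_a}}<\frac{\Delta}{\Delta-1}\ln x(\jj,\idp/2)-\frac{\idp}{\Delta-1}$, where $x(\jj,\idp/2)$ is the fixed point supplied by \cref{lem:fixpt}. Since this bound is uniform over all trees of maximum degree $\Delta$ and over all choices of $a$, taking the supremum over $\jds$ yields $\ndp(\idp,\jds)\le\frac{\Delta}{\Delta-1}\ln x(\jj,\idp/2)-\frac{\idp}{\Delta-1}$. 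For the matching lower bound I would use the explicit sequence of trees $T^*_n$ from the proof of \cref{lem:bethe}: in $T^*_n$ the root has degree $\Delta$, every internal descendant also has degree $\Delta$, and all leaves lie at depth $n$. Taking $a$ to be the root, its distance to the nearest node of degree less than $\Delta$ is exactly $n$, so the final sentence of \cref{lem:bethe} guarantees $\ndp_a(T^*_n)\to\frac{\Delta}{\Delta-1}\ln x(\jj,\idp/2)-\frac{\idp}{\Delta-1}$ (monotonically from below) as $n\to\infty$. Hence the supremum over $\jds$ is at least the right-hand side, and the two bounds together give the asserted equality.

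I expect the only genuinely delicate point to be the reduction carried out in the first paragraph: one must check that the object computed by \cref{t-pa}---a worst case over all $\idp$-differentially private mechanisms, further maximized over the individual whose privacy is at stake---is exactly the magnetization-based expression in~\eqref{eq:gibbs-ndp}, and that the global-flip symmetry of the zero-external-field Ising model is applied correctly so that the maximum of two terms in~\eqref{eq:gibbs-ndp} really does reduce to a single expression in $\gavg{\sigma_a}$. Everything downstream of that is a direct appeal to \cref{lem:bethe} (which conveniently provides both the strict upper bound valid for every finite tree and the limiting value approached along $\{T^*_n\}$) together with \cref{lem:fixpt}.
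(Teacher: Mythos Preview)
Your proposal is correct and follows essentially the same route as the paper's proof: reduce via the spin-flip symmetry of the zero-field Ising model to $\ndp_a=\ln\frac{1+\gavg{\sigma_a}}{1-\gavg{\sigma_a}}$ and then invoke \cref{lem:bethe}. You are in fact more explicit than the paper, which compresses the supremum argument into the phrase ``as a direct consequence''; your separate treatment of the uniform strict upper bound and the approaching sequence $\{T^*_n\}$ spells out exactly why the supremum equals (rather than is merely bounded by) the right-hand side.
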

\begin{proof}
For the Ising model with zero external field, the
joint distribution $\jd$ is symmetric with respect
to flipping each bit of the database $\fdb$. This
implies two simplifications in the formula for
\infpriv, Equation~\eqref{eq:gibbs-ndp}. First,
the odds ratio $\rho = \frac{\jd(\dba=1)}{\jd(\dba=0)}$
is equal to 1. Second, both terms in the maximum 
on the right-hand side of the equation are equal,
so $\ndpa$ is equal to $\ln \left( 
\frac{1 + \gavg{\sigma_a}}{1 - \gavg{\sigma_a}} \right)$,
where the $\gavg \cdot$ denotes averaging over the 
Gibbs measure (at inverse temperature 1) 
of the Ising model with interaction 
strength $\jj$ and external field $\idp/2$. 
Applying
\cref{lem:bethe} we obtain~\eqref{eq:bethe-ndp}
as a direct consequence.
\end{proof}

\begin{proof}[Proof of \cref{t-ising}.]
For part (\ref{sens-cor}) of the theorem, 
\cref{cor:bethe-ndp} justifies focusing our attention
on the function $\ln x \! \left(\jj,h \right)$ where
$h = \frac{\idp}{2}$ and $\idp>0$ varies. 
In particular, when $\tanh(\jj) > \frac{1}{d}$,
we have 
\[ 
  \lim_{\idp \searrow 0} \ndp(\idp,\jds) 
  = \tfrac{\Delta}{\Delta-1} \lim_{h \searrow 0} \ln x(\jj,h/2)
  > 0
\]
by \cref{lem:fixpt}.
This implies that the set of differentially enforceable $\ndp$
has a strictly positive infimum, as claimed in
part (\ref{sens-cor}) of \cref{t-ising}.

When $\tanh(\jj) = \frac{1-\delta}{d}$, 
\cref{eq:impl-diff} for the partial
derivative $\frac{\partial x}{\partial h}$
implies that 
$\tfrac{\partial x}{\partial h} = \frac{2}{\delta}$ at $h=0,x=1$.
We now find that
\begin{align*}
\left. \tfrac{d \ndp(\idp,\jds)}{d \idp} \right|_{\idp=0} &=
\left( \tfrac{\Delta}{\Delta-1} \right) \tfrac{\partial}{\partial \idp}
\left[ \ln x \! \left( \jj, \tfrac{\idp}{2} \right) \right]_{\idp=0}
- \left( \tfrac{1}{\Delta-1}  \right)
\\
&=
\tfrac{\Delta}{\Delta-1} \cdot \tfrac{1}{x(\jj,0)}
\cdot \tfrac{1}{2} \cdot \left[ \tfrac{\partial x}{\partial h} \right]_{h=0}
- \left( \tfrac{1}{\Delta-1} \right) \\
&=
\tfrac{\Delta}{\Delta-1} \cdot 1
\cdot \tfrac{1}{2} \cdot \tfrac{2}{\delta}
- \left( \tfrac{1}{\Delta-1} \right) \\
&= 
\left( \tfrac{\Delta}{\Delta-1} \right) \tfrac{1}{\delta} - 
\left( \tfrac{1}{\Delta-1} \right)  > \frac{1}{\delta}.
\end{align*}
Thus, for sufficiently small $\idp>0$, 
we have $\ndp(\idp,\jds) > \idp/\delta$, 
which completes the proof of part (\ref{sens-cor}) of the theorem.

To prove part (\ref{sens-idp}) we consider rooted $d$-ary
trees for some fixed $d \geq 2$. As in the proof of 
\cref{lem:bethe} let $T_n$ denote the complete rooted $d$-ary
tree of depth $n$, with root node denoted by $a$. For 
$J > 0, h \in \reals$ define
\[
    w_n(J,h) = \ln \left( \frac{1 + \gavg{\sigma_a}}{1-\gavg{\sigma_a}} \right)
\]
where $\gavg{\sigma_a}$ denotes the expectation of $\sigma_a$
under the Ising model on $T_n$ with interaction strength $J$ and
external field $h$. In the proof of \cref{lem:bethe} we denoted
$\exp(w_n(J,h))$ by $x_n$ and proved that $x_n \to x(J,h)$ 
from below as $n \to \infty$.

Now consider an adversary whose prior $\jd$ is the Ising model
with interaction strength $J$ and external field $h_0$. Note that
the prior odds ratio $\rho = \frac{\jd(x_a=1)}{\jd(x_a=0)}$
satisfies
\[
  \ln \rho = \ln \left( \frac{\jd(\sigma_a = -1)}{\jd(\sigma_a = 1)} \right)
              = \ln \left( \frac{1 - \gavg{\sigma_a}}{1+\gavg{\sigma_a}} \right)
              = -w_n(J,h_0).
\]
Substituting this into \cref{eq:gibbs-ndp}, we see that 
for a given differential
privacy parameter $\eps$
the corresponding \infpriv guarantee is 
\begin{equation} \label{eq:sens-idp.1}
  \ndp(\eps) = \max \{ w_n(J,h_0 + \tfrac{\eps}{2}) - w_n(J,h_0),
            \; w_n(J,h_0) - w_n(J, h_0- \tfrac{\eps}{2}) \}.
\end{equation}
To prove \cref{t-ising}(\ref{sens-idp}) consider 
any $0 < \eps_0 < \eps_1$ and $1 < r < R$. 
In setting up the adversary's prior, choose a value of $h_0$ that 
satisfies $2h_0 - \eps_1 < 0 < 2h_0 - \eps_0$. 
We aim to show that for all sufficiently large $J$ and all
$n > n_0(J)$, we have
$\ndp(\eps_0) < r \cdot \eps_0$ but 
$\ndp(\eps_1) > R \cdot \eps_1$. 

Let $w(J,h) = \lim_{n \to \infty} w_n(J,h) = \ln x(J,h)$.
To prove that $\ndp(\eps_0) < r \cdot \eps_0$ but 
$\ndp(\eps_1) > R \cdot \eps_1$ for all sufficiently 
large $n$, it is sufficient to prove that 
\begin{align}
\label{eq:sens-ndp.2}
    \tfrac{w(J, h_0+ \frac12 \eps_0) - w(J,h_0)}{\eps_0} & < r \\
\label{eq:sens-ndp.3}
    \tfrac{w(J,h_0) - w(J, h_0 - \frac12 \eps_0)}{\eps_0} & < r \\
\label{eq:sens-ndp.4}
  \tfrac{w(J,h_0) - w(J,  h_0 - \frac12 \eps_1)}{\eps_1} & > R.
\end{align}
To prove~\eqref{eq:sens-ndp.2}-\eqref{eq:sens-ndp.3} 
we will show that 
$| \partial w / \partial h |$ is bounded above by $2r$ on
the interval $[h_0 - \frac12 \eps_0, h_0+\frac12 \eps_0]$
and apply the mean value theorem. Since $w(J,h) = \ln x(J,h)$
we have
\begin{equation} \label{eq:sens-ndp.5}
  \frac{\partial w}{\partial h} = 
  \frac{\partial x / \partial h}{x} = 
  \frac{(\partial y / \partial h) / x}{1 - \partial y / \partial x} = 
  \frac{2}{1 - \partial y / \partial x},
\end{equation}
where we have used \cref{eq:impl-diff} and the facts
that $\partial y / \partial h = 2 y$ and that
$y(x(J,h)) = x(J,h)$. Now, recalling the definition
of $y(x)$ in \cref{lem:fixpt}, we differentiate with
respect to $x$ and find that
\begin{align}
\nonumber
  \frac{\partial y}{\partial x} &= 
  e^{2h} \cdot d \left(
    \frac{e^{J} x + e^{-J}}{e^{J} + e^{-J} x}
  \right)^{d-1} \cdot
  \left( \frac{1 - e^{-4J}}{(e^J + e^{-J} x)^2} \right) \\
\nonumber
&=
  y(x) \cdot 
  \left( \frac{ d \cdot \left( 1 - e^{-4J} \right)}{ \left( e^J x + e^{-J} \right)
     \left( e^J + e^{-J} x \right)} \right) \\
\label{eq:sens-ndp.6}
& <
  \frac{y(x)}{x} \cdot \frac{d}{e^{2J}}.
\end{align}
Since $y(x)/x = 1$ when $x  = x(J,h)$, we may combine
\eqref{eq:sens-ndp.5} with~\eqref{eq:sens-ndp.6} to
conclude that whenever $J$ is large enough that 
$d e^{-2J} < 1 - \frac1r$, then the value of
$\partial y / \partial x$ at $x(J,h)$ is less than 
$1 - 1/r$ for all $h>0$, and consequently
$\partial w / \partial h$ is bounded above uniformly
by $r$, as desired.

Finally, to prove~\eqref{eq:sens-ndp.4} we note that
for $x = e^{2h+Jd}$ we have
\begin{equation} \label{eq:sens-ndp.7} 
  \frac{y(x)}{x} = e^{2h} \left( \frac{e^{J} x + e^{-J}}{e^{J} + e^{-J} x}
  \right)^{d} e^{-2h-Jd} 
  = \left( \frac{e^{2h+Jd} + e^{-2J}}{e^{2h + Jd - J} + e^J} \right)^{d} > 1
\end{equation}
for $J$ sufficiently large. Recalling from the proof
of \cref{lem:fixpt} that for $h>0$ we have $y(x) > x$ when $1 < x < x(J,h)$
and $y(x) < x$ when $x > x(J,h)$, we see that
$x(J,h) > e^{2h+Jd}$ provided that $h>0$ and $J$ is sufficiently
large. An analogous argument applying the $h<0$ case
of \cref{lem:fixpt} shows that $x(J,h) < e^{2h - Jd}$ 
for $h<0$ and $J$ sufficiently large. Recalling that
$w(J,h) = \ln x(J,h)$ and that 
$h_0 > 0 > h_0 - \eps_1/2$ we find that
\begin{align*}
  w(J,h_0) &> 2h_0 + Jd > Jd \\
  w(J,h_0 - \eps_1/2) &< 2 h_0 - \eps_1 - Jd < -Jd \\
  \tfrac{w(J,h_0) - w(J,  h_0 - \frac12 \eps_1)}{\eps_1} &>
  \tfrac{2Jd}{\eps_1} > R
\end{align*}
provided $J$ is sufficiently large. This establishes~\eqref{eq:sens-ndp.7} 
and concludes the proof of \cref{t-ising}(\ref{sens-idp}).
\end{proof}

\section{Appendix to \S\ref{sec:bounded}: Bounded Affiliation Distributions}
\label{app:bounded}

This appendix contains a full proof
of \autoref{t-bm}. The proof requires
developing a theory of ``multiplicative estimates''
that is the multiplicative analogue of the notion of 
``estimate'' used by \citet{Dob70}, \citet{Follmer82}, and \citet{Kunsch82} 
in their proofs of the so-called Dobrushin Comparison Theorem.
We define multiplicative estimates and 
build up the necessary machinery for dealing with them
in \autoref{sec:preserve}.
Then, in \autoref{sec:dobrushin} we 
prove \autoref{t-bm}.

\subsection{Multiplicative estimates}
\label{sec:preserve}

Let $S$ be any set of
potential outcomes of the mechanism $\M$ such that
 $\Pr(\M(\fdb) \in S)>0$. Let $\cd{1}$ denote the 
conditional distribution on databases $\fdb \in X^n$,
given that $\M(\fdb) \in S$, and let
$\cd{2}$ denote the unconditional distribution $\jd$,
respectively.

For $a \in \{1,2\}$ and for any function $f : X^n
\to \reals$, let $\cd{a}(f)$
denote the expected value of $f$ under distribution
$\cd{a}$. Also define the Lipschitz constants
\begin{equation} \label{eq:lip}
\lip_i(f) = \max 
            \{ f(\fdb) - f(\adj{\fdb}) 
               \; \mid \; \fdb \sim_i \adj{\fdb} \}.
\end{equation}
Let us say that a vector 
$\bm{\estim} = (\estim_{i})$ is a {\em multiplicative estimate}
if for every function $f : X^n \to \reals_+$ we have
\begin{equation} \label{eq:est}
| \ln \cd{1}(f) - \ln \cd{2}(f) | \leq \sum_{i=1}^{n} \estim_i 
          \lip_i(\ln f).
\end{equation}

This section is devoted to proving some basic facts
about multiplicative estimates that underpin the proof of
\autoref{t-bm}. 
To start, we need the following lemma.

\begin{lemma} \label{lem:fg}
Consider a probability space with two functions $A,B$ taking values
in the positive real numbers.
If $(\sup A)/(\inf A) \leq e^{2a}$ and 
$(\sup B)/(\inf B) \leq e^{2b}$ then
\begin{equation} \label{eq:lfg}
  \frac{\expect[AB]}{\expect[A] \, \expect[B]} 
    \leq 
  1 + \frac{(e^{2a}-1)(e^{2b}-1)}{(e^a + e^b)^2} 
    \leq
  e^{ab}.
\end{equation}
\end{lemma}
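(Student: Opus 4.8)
The plan is to prove the two inequalities of the statement separately, beginning with the (purely analytic) right-hand inequality and then the (probabilistic) left-hand one.

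For the right-hand inequality I would first record the algebraic identity
\[
1 + \frac{(e^{2a}-1)(e^{2b}-1)}{(e^a+e^b)^2}
\;=\; \frac{(e^a+e^b)^2 + (e^{2a}-1)(e^{2b}-1)}{(e^a+e^b)^2}
\;=\; \frac{(e^{a+b}+1)^2}{(e^a+e^b)^2}
\;=\; \left(\frac{\cosh\frac{a+b}{2}}{\cosh\frac{a-b}{2}}\right)^{\!2},
\]
where the middle equality is the numerator computation $(e^a+e^b)^2+(e^{2a}-1)(e^{2b}-1)=e^{2(a+b)}+2e^{a+b}+1$, and the last uses $e^{a+b}+1=2e^{(a+b)/2}\cosh\frac{a+b}{2}$ and $e^a+e^b=2e^{(a+b)/2}\cosh\frac{a-b}{2}$. (Here $a,b\ge 0$, since $\sup A\ge\inf A>0$ forces $e^{2a}\ge 1$.) It then remains to show $\cosh\frac{a+b}{2}\le e^{ab/2}\cosh\frac{a-b}{2}$; writing $u=\frac{a+b}{2}$ and $v=\frac{a-b}{2}$, so that $\frac{ab}{2}=\frac{u^2-v^2}{2}$, this is exactly $\phi(u)\le\phi(|v|)$ for the even function $\phi(t)=\cosh(t)\,e^{-t^2/2}$, and since $\frac{d}{dt}\log\phi(t)=\tanh t-t\le 0$ for $t\ge 0$ while $u\ge|v|\ge 0$, it is immediate.

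For the main (left-hand) inequality I would collapse the general probability space to an explicit one-parameter family by three successive reductions. Write $F(A,B)=\expect[AB]/(\expect[A]\expect[B])$. (i) $F$ is invariant under multiplying $A$ or $B$ by a positive constant, and $\sup A/\inf A\le e^{2a}$ places $A$ in the box $[\inf A,\,e^{2a}\inf A]$, so we may assume $A$ ranges over $[1,e^{2a}]$; for fixed $B$, $A\mapsto F(A,B)$ is a ratio of two positive linear functionals of $A$, hence quasilinear, so its maximum over this compact convex box is attained at an extreme point, i.e.\ at a $\{1,e^{2a}\}$-valued $A$. Applying the same argument to $B$ and rescaling, we may assume $A\in\{e^{-a},e^a\}$ and $B\in\{e^{-b},e^b\}$; one also checks that enlarging the ratio $\sup A/\inf A$ only enlarges the target bound, so taking this ratio to equal exactly $e^{2a}$ (resp.\ $e^{2b}$) is without loss. (ii) With $A,B$ two-valued, $\expect[A]\expect[B]$ depends only on $p=\Pr(A=e^a)$ and $q=\Pr(B=e^b)$, while $(s,t)\mapsto st$ is supermodular on $\{e^{-a},e^a\}\times\{e^{-b},e^b\}$ (indeed $e^{-a-b}+e^{a+b}\ge e^{b-a}+e^{a-b}$), so $\expect[AB]$, and hence $F$, is maximized by the comonotone coupling of the two marginals. (iii) Writing $F$ out for the comonotone coupling and viewing it as a function of $p$ with $q$ fixed (say $p\le q$), it is a ratio of affine functions of $p$, and a short computation shows $\partial F/\partial p$ has the sign of $(1-q)\sinh a\sinh b\ge 0$; hence the maximum is at $p=q$, i.e.\ $A$ and $B$ become perfectly correlated.

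It then remains to maximize, over $p\in[0,1]$,
\[
F(p)=\frac{p\,e^{a+b}+(1-p)e^{-a-b}}{\bigl(p\,e^a+(1-p)e^{-a}\bigr)\bigl(p\,e^b+(1-p)e^{-b}\bigr)}.
\]
Clearing the common factor $e^{-a-b}$ and substituting $\alpha=e^{2a}-1$, $\beta=e^{2b}-1$ rewrites this as $F(p)=1+\dfrac{p(1-p)\,\alpha\beta}{(1+p\alpha)(1+p\beta)}$, whose unique interior critical point satisfies $(\alpha+\beta+\alpha\beta)p^2+2p-1=0$; since $\alpha+\beta+\alpha\beta=e^{2(a+b)}-1$, this gives $p^\ast=\tfrac{1}{1+e^{a+b}}$, and substituting back yields $F(p^\ast)=1+\dfrac{\alpha\beta}{(e^a+e^b)^2}=1+\dfrac{(e^{2a}-1)(e^{2b}-1)}{(e^a+e^b)^2}$, exactly the desired bound. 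I expect the main obstacle to be reduction (i): making the extreme-point argument fully rigorous (weak-$\ast$ compactness of the box, continuity of the linear functionals, and Bauer's maximum principle for quasiconvex functions) and carrying out the bookkeeping that justifies assuming $\sup A/\inf A$ and $\sup B/\inf B$ are exactly $e^{2a}$ and $e^{2b}$; reductions (ii)–(iii) and the final optimization are routine, if slightly tedious.
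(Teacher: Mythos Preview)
Your proof is correct and takes a genuinely different route from the paper's.

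For the probabilistic (left-hand) inequality, the paper does not reduce to two-point distributions at all. Instead, after the same rescaling to $A\in[e^{-a},e^a]$, $B\in[e^{-b},e^b]$, it writes down a $2\times2$ matrix identity expressing $\begin{bmatrix}1&\expect[B]\\ \expect[A]&\expect[AB]\end{bmatrix}$ in terms of $\expect[M]$ for a matrix $M$ with nonnegative entries; nonnegativity of the entries of $\expect[M]$ yields directly
\[
\frac{\expect[AB]}{\expect[A]\,\expect[B]} \;\le\; 1+\min\bigl\{(e^a\alpha-1)(1-e^{-b}\beta),\;(e^b\beta-1)(1-e^{-a}\alpha)\bigr\},\qquad \alpha=\tfrac1{\expect[A]},\ \beta=\tfrac1{\expect[B]},
\]
and then optimizes this elementary expression over $(\alpha,\beta)$ in a rectangle. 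The virtue of that approach is that it sidesteps entirely the weak-$\ast$ compactness / Bauer maximum-principle issue you flag in reduction~(i): no extreme-point argument in function space is needed, and the bound is obtained pointwise for every pair $(A,B)$. Your approach, by contrast, actually identifies the extremal joint law (perfectly correlated two-point variables with $p^\ast=1/(1+e^{a+b})$), which is conceptually more informative and uses standard probabilistic tools (quasilinearity, supermodularity/comonotone coupling), at the cost of the infinite-dimensional step you already noted.

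For the analytic (right-hand) inequality, your closed-form identity
\[
1+\frac{(e^{2a}-1)(e^{2b}-1)}{(e^a+e^b)^2}=\left(\frac{\cosh\frac{a+b}{2}}{\cosh\frac{a-b}{2}}\right)^{\!2}
\]
followed by monotonicity of $t\mapsto\cosh(t)e^{-t^2/2}$ is slicker than the paper's route, which fixes $ab=x^2$, takes a gradient, shows the maximum along each level curve occurs at $a=b=x$ where the expression equals $\cosh^2 x$, and only then invokes $\cosh x\le e^{x^2/2}$. Both arguments ultimately rest on $\tanh t\le t$, but yours reaches it in one line rather than via a level-set optimization.
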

\begin{proof}
The hypotheses and conclusion of the lemma are invariant
under rescaling each of $A$ and $B$, so we may assume without
loss of generality that $A$ is supported in the interval
$[e^{-a},e^{a}]$ and that $B$ is supported in the interval
$[e^{-b},e^{b}]$. At each sample point $\omega$, the following
two equations hold:
\begin{align}
\label{eq:fg.1}
   \begin{bmatrix}
      \tfrac{e^{a} - A(\omega)}{e^a - e^{-a}} &
      \tfrac{A(\omega) - e^{-a}}{e^a - e^{-a}}
   \end{bmatrix} \;
   \begin{bmatrix} 1 & e^{-a} \\ 1 & e^{a} \end{bmatrix} 
 &=
   \begin{bmatrix} 1 & A(\omega) \end{bmatrix} \\
\label{eq:fg.2}
   \begin{bmatrix}
      \tfrac{e^{b} - B(\omega)}{e^a - e^{-b}} &
      \tfrac{B(\omega) - e^{-b}}{e^a - e^{-b}} 
   \end{bmatrix} \;
   \begin{bmatrix} 1 & e^{-b} \\ 1 & e^{b} \end{bmatrix} 
 &=
   \begin{bmatrix} 1 & B(\omega) \end{bmatrix} \\
\end{align}
Therefore, if we define the matrix-valued random variable
\[ 
    M(\omega) =    \begin{bmatrix}
      \tfrac{e^{a} - A(\omega)}{e^a - e^{-a}} &
      \tfrac{A(\omega) - e^{-a}}{e^a - e^{-a}} 
   \end{bmatrix}^{\transpose} \;
   \begin{bmatrix}
      \tfrac{e^{b} - B(\omega)}{e^b - e^{-b}} &
      \tfrac{B(\omega) - e^{-b}}{e^b - e^{-b}}
   \end{bmatrix}
\]
we have 
\[
     \begin{bmatrix} 1 & 1 \\ e^{-a} & e^{a} \end{bmatrix} \; 
     M(\omega) \;
     \begin{bmatrix} 1 & e^{-b} \\ 1 & e^{b} \end{bmatrix} =
     \begin{bmatrix} 1 \\ A(\omega) \end{bmatrix} \;
     \begin{bmatrix} 1 & B(\omega) \end{bmatrix} =
     \begin{bmatrix} 1 & B(\omega) \\ A(\omega) & A(\omega) B(\omega)
     \end{bmatrix} 
     .
\]
Integrating over $\omega$ we obtain
\begin{align}
\nonumber
  \begin{bmatrix} 1 & 1 \\ e^{-a} & e^{a} \end{bmatrix} \; 
   \expect[M] \;
  \begin{bmatrix} 1 & e^{-b} \\ 1 & e^{b} \end{bmatrix} 
 &=
  \begin{bmatrix} 1 & \expect[B] \\ \expect[A] & \expect[AB] \end{bmatrix} \\
\nonumber
  \expect[M] &= 
  \begin{bmatrix} 1 & 1 \\ e^{-a} & e^{a} \end{bmatrix}^{-1} \; 
  \begin{bmatrix} 1 & \expect[B] \\ \expect[A] & \expect[AB] \end{bmatrix} \;
  \begin{bmatrix} 1 & e^{-b} \\ 1 & e^{b} \end{bmatrix}^{-1} \\
\nonumber
  \left( e^{a} - e^{-a} \right) 
  \left( e^{b} - e^{-b} \right)
  \expect[M] 
 &=
  \begin{bmatrix} e^{a} & -1 \\ -e^{-a} & 1 \end{bmatrix} \; 
  \begin{bmatrix} 1 & \expect[B] \\ \expect[A] & \expect[AB] \end{bmatrix} \;
  \begin{bmatrix} e^{b} & -e^{-b} \\ -1 & 1 \end{bmatrix}  \\
\nonumber
 \lefteqn{\clap{\hspace*{0.35\textwidth} $=
  \begin{bmatrix} 
    e^{a+b} - e^a \expect[B] - e^b \expect[A] + \expect[AB] &
    -e^{a-b} + e^a \expect[B] + e^{-b} \expect[A] - \expect[AB] \\
    -e^{b-a} + e^{-a} \expect[B] + e^b \expect[A] - \expect[AB] &
    e^{-a-b} - e^{-a} \expect[B] - e^{-b} \expect[A] + \expect[AB]
  \end{bmatrix}.$}} 
\end{align}
Each entry of the matrix on the left side is non-negative, hence
the entries on the right side are non-negative as well. This tells
us that
\begin{align} 
\nonumber
  \expect[AB] & \leq \min \left\{
      e^{a} \expect[B] + e^{-b} \expect[A] - e^{a-b},
      e^{b} \expect[A] + e^{-a} \expect[B] - e^{b-a} 
      \right\} \\
\label{eq:fg.3}
  & = 
      \expect[A] \expect[B] +
      \min \left\{
        (e^{a} - \expect[A]) (\expect[B] - e^{-b}),
        (e^{b} - \expect[B]) (\expect[A] - e^{-a})
      \right\}.
\end{align}
Letting 
\[
\alpha = \tfrac{1}{\expect[A]}, \quad
\beta = \tfrac{1}{\expect[B]},
\]
we can multiply both sides of~\eqref{eq:fg.3} by $\alpha \beta$ to obtain
\begin{equation} \label{eq:fg.4}
\frac{\expect[AB]}{\expect[A] \, \expect[B]} \leq
1 + \min \{ (e^{a} \alpha - 1)(1 - e^{-b} \beta),
            (e^{b} \beta - 1)(1 - e^{-a} \alpha) \}.
\end{equation}
Denote the right side of~\eqref{eq:fg.4} by $G(\alpha,\beta)$.
We aim to find the maximum value of $G(\alpha,\beta)$ as
$(\alpha,\beta)$ ranges over the rectangle
$[e^{-a}, e^a] \times [e^{-b}, e^b]$. 
Note that $G \equiv 1$ on the boundary of this
rectangle, whereas $G > 1$ on the interior of the
rectangle. At any point of the interior where
$ (e^{a} \alpha - 1)(1 - e^{-b} \beta) >
            (e^{b} \beta - 1)(1 - e^{-a} \alpha)$
we have $\frac{\partial G}{\partial \beta} = e^b (1 - e^{-a} \alpha) > 0$,
and similarly at any point of the interior where
$ (e^{a} \alpha - 1)(1 - e^{-b} \beta) <
            (e^{b} \beta - 1)(1 - e^{-a} \alpha)$
we have $\frac{\partial G}{\partial \alpha} > 0$.
Therefore if $(\alpha,\beta)$ is a global maximum of $G$ we must have
$(e^{a} \alpha - 1)(1 - e^{-b} \beta) =
            (e^{b} \beta - 1)(1 - e^{-a} \alpha)$.
Let
\begin{equation} \label{eq:fg.5}
 r = \frac{e^a \alpha - 1}{1 - e^{-a} \alpha} = 
       \frac{e^b \beta - 1}{1 - e^{-b} \beta}.
\end{equation}
A manipulation using~\eqref{eq:fg.5} yields
\begin{equation} \label{eq:fg.6}
(e^{a} \alpha - 1)(1 - e^{-b} \beta) = 
\left( e^{2a}-1 \right) \left( e^{2b} - 1 \right) 
\frac{r}{\left( r + e^{2a} \right) \left( r + e^{2b} \right)}
\end{equation}
and by setting the derivative of the right side to zero we find
that it is maximized at $r=e^{a+b}$, when it equates to
$(e^{2a-1})(e^{2b}-1)(e^a+e^b)^{-2}$. Therefore,
\[
  \frac{\expect[AB]}{\expect[A] \, \expect[B]} \leq
  1 + \frac{\left( e^{2a}-1 \right) \left( e^{2b} - 1 \right) }
           {\left( e^a + e^b \right)^2},
\]
which establishes the first inequality in~\eqref{eq:lfg}.
The prove the second inequality, we
consider how $1 + (e^{2a}-1)(e^{2b}-1)(e^a+e^b)^{-2}$ varies
as we vary $a$ and $b$ while holding their product fixed
at some value, $x^2$.
To begin we compute the gradient of $1 + (e^{2a}-1)(e^{2b}-1)(e^a+e^b)^{-2}$.
\begin{align*}
  \nabla \left[ 1 + (e^{2a}-1)(e^{2b}-1)(e^a+e^b)^{-2} \right] &= 
  \begin{bmatrix}
     \left( 
       \frac{2e^{2a}(e^{2b}-1)}{(e^a+e^b)^2} - 
       \frac{2e^a (e^{2a}-1) (e^{2b} - 1)}{(e^a+e^b)^3} 
     \right) &
     \left( 
       \frac{2e^{2b}(e^{2a}-1)}{(e^a+e^b)^2} - 
       \frac{2e^b (e^{2a}-1) (e^{2b} - 1)}{(e^a+e^b)^3} 
     \right)
  \end{bmatrix}   
\\
  &= 
  \frac{2(e^{a+b}-1)}{(e^a+e^b)^3}
  \begin{bmatrix}
    e^a(e^{2b}-1) & e^b(e^{2a}-1)
  \end{bmatrix}
\\
  &=
  \frac{8 \left(1 - e^{-a-b} \right)}{(e^a+e^b)^3}
  \begin{bmatrix}
    \sinh b & \sinh a
  \end{bmatrix}
\end{align*}
Parameterizing the curve $ab = x^2$ by $a(t) = xt, b(t) = x/t$,
we have $\dot{a}(t) = a/t$ and $\dot{b}(t) = -b/t$, so
\begin{align*}
  \frac{d}{dt} \left[ 1 + (e^{2a}-1)(e^{2b}-1)(e^a+e^b)^{-2} \right] &=
  \frac{8 \left(1 - e^{-a-b} \right)}{(e^a+e^b)^3}
  \begin{bmatrix}
    \sinh b & \sinh a
  \end{bmatrix} \,
  \begin{bmatrix}
    a/t \\ -b/t
  \end{bmatrix} 
\\
  &=
  \frac{8 ab \left(1 - e^{-a-b} \right)}{(e^a+e^b)^3 t}
  \left( \frac{\sinh b}{b} - \frac{\sinh a}{a} \right).
\end{align*}
From the Taylor series $\frac{\sinh y}{y} = 
\sum_{i=0}^{\infty} \frac{1}{(2i+1)!} y^{2i}$
we see that $\frac{\sinh y}{y}$ is an increasing
function of $y \geq 0$, so along the curve $a(t) = xt, b(t)=x/t$,
the function $1 + (e^{2a}-1)(e^{2b}-1)(e^a+e^b)^{-2}$ increases
when $a < b$ (corresponding to $t<1$) and decreases when 
$a > b$ (corresponding to $t>1$), reaching its maximum when
$t=1$ and 
$a = b = x$. Hence
\begin{equation} \label{eq:fg.7}
1 + \frac{\left( e^{2a}-1 \right) \left( e^{2b}-1 \right)}
         {\left( e^a+e^b \right)^{2} } \leq
1 + \left( \frac{e^{2x}-1}{2 e^x} \right)^2 =
1 + \sinh^2 x = 
\cosh^2 x .
\end{equation}
Finally, by comparing Taylor series coefficients
we can see that $\cosh x \leq e^{x^2/2}$ for all $x \geq 0$,
and squaring both sides of this relation 
we obtain
\begin{equation} \label{eq:fg.8}
\cosh^2 x \leq e^{x^2} = e^{ab}.
\end{equation}
The second inequality in~\eqref{eq:lfg} follows by 
combining~\eqref{eq:fg.7} with~\eqref{eq:fg.8}.
\end{proof}

\begin{lemma} \label{lem:Ti}
If $\bm{\estim}$ is a multiplicative estimate, then for any $i \in 
\{1,\ldots,n\}$, the vector
$T_i(\bm{\estim})$ defined by
\begin{equation} \label{eq:Ti}
  (T_i(\bm{\estim}))_\ell = \begin{cases} 
    \estim_\ell & \mbox{if $\ell \neq i$} \\
    \tfrac{\idp_i}{2} + 
    \sum_{j=1}^n \infmul_{ij} \estim_j 
    & \mbox{if $\ell = i$}
  \end{cases}
\end{equation}
is also a multiplicative
estimate.
\end{lemma}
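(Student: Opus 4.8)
The plan is to prove \cref{lem:Ti} by a single-site comparison argument, in the spirit of the proofs of the Dobrushin comparison theorem but carried out multiplicatively with the help of \cref{lem:fg}. Fix a test function $f : X^n \to \reals_{+}$; we may assume $f>0$ everywhere, since otherwise $\lip_i(\ln f)=+\infty$ and the estimate~\eqref{eq:est} for $T_i(\bm{\estim})$ holds trivially. Let $f^{(1)}$ and $f^{(2)}$ be the functions obtained from $f$ by averaging out coordinate $i$ under $\cd{1}$ and $\cd{2}$ respectively,
\[
  f^{(a)}(\fdb) \;=\; \sum_{w}\, \cd{a}(\bg_i=w\mid\fdb_{-i})\, f(w,\fdb_{-i}),
\]
where $w$ ranges over the support of the conditional law of $\bg_i$ given $\fdb_{-i}$. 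Each $f^{(a)}$ depends on $\fdb_{-i}$ alone, so $\lip_i(\ln f^{(1)})=\lip_i(\ln f^{(2)})=0$, and the tower property gives $\cd{1}(f)=\cd{1}(f^{(1)})$ and $\cd{2}(f)=\cd{2}(f^{(2)})$. By the triangle inequality
\[
  |\ln\cd{1}(f)-\ln\cd{2}(f)| \;\le\; \underbrace{|\ln\cd{1}(f^{(1)})-\ln\cd{1}(f^{(2)})|}_{(\mathrm{I})} \;+\; \underbrace{|\ln\cd{1}(f^{(2)})-\ln\cd{2}(f^{(2)})|}_{(\mathrm{II})},
\]
so it suffices to prove $(\mathrm{I})\le\tfrac{\idp_i}{2}\lip_i(\ln f)$ and $(\mathrm{II})\le\sum_{\ell\ne i}\estim_\ell\lip_\ell(\ln f)+\bigl(\sum_{j=1}^{n}\infmul_{ij}\estim_j\bigr)\lip_i(\ln f)$; adding these and using $\infmul_{ii}=0$ is precisely~\eqref{eq:est} for the vector $T_i(\bm{\estim})$.

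For (I): for each fixed $\fdb_{-i}$ the numbers $f^{(1)}(\fdb)$ and $f^{(2)}(\fdb)$ are weighted averages of the \emph{same} values $v_w:=f(w,\fdb_{-i})$ against the weight vectors $\cd{1}(\bg_i=w\mid\fdb_{-i})$ and $\cd{2}(\bg_i=w\mid\fdb_{-i})$. Since $\cd{1}$ is $\jd$ conditioned on the positive-probability event $\M(\fdb)\in S$, the ratio of those weight vectors is proportional to $w\mapsto\Pr(\M(w,\fdb_{-i})\in S)$, which by $\idp_i$-differential privacy of $\M$ has $\sup/\inf\le e^{\idp_i}$ over $w$. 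Viewing this ratio as a random variable $A$ (with mean $1$ under $\cd{2}(\cdot\mid\fdb_{-i})$) and $v$ as a random variable $B$ whose range has $\sup/\inf\le e^{\lip_i(\ln f)}$, \cref{lem:fg} gives $\expect[AB]/(\expect[A]\expect[B])\le e^{(\idp_i/2)(\lip_i(\ln f)/2)}$; applying it once more after switching to the reference measure $\cd{1}(\cdot\mid\fdb_{-i})$ with $A$ replaced by $1/A$ yields the matching lower bound. Hence $|\ln f^{(1)}(\fdb)-\ln f^{(2)}(\fdb)|\le\tfrac14\idp_i\lip_i(\ln f)$ for every $\fdb$, and since $|\ln\cd{1}(g)-\ln\cd{1}(h)|\le\sup_{\fdb}|\ln g(\fdb)-\ln h(\fdb)|$, the same bound (with room to spare) holds for (I).

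For (II): as $\bm{\estim}$ is a multiplicative estimate and $\lip_i(\ln f^{(2)})=0$, we get $(\mathrm{II})\le\sum_{\ell\ne i}\estim_\ell\lip_\ell(\ln f^{(2)})$, so it remains to show $\lip_\ell(\ln f^{(2)})\le\lip_\ell(\ln f)+\infmul_{i\ell}\lip_i(\ln f)$ for $\ell\ne i$. Fixing $\fdb\sim_\ell\fdb'$ and writing $f^{(2)}(\fdb)=\sum_w p_wv_w$, $f^{(2)}(\fdb')=\sum_w p'_wv'_w$ with $p_w=\cd{2}(\bg_i=w\mid\fdb_{-i})$, $p'_w=\cd{2}(\bg_i=w\mid\fdb'_{-i})$, $v_w=f(w,\fdb_{-i})$, $v'_w=f(w,\fdb'_{-i})$, factor
\[
  \ln\frac{f^{(2)}(\fdb)}{f^{(2)}(\fdb')} \;=\; \ln\frac{\sum_w p_wv_w}{\sum_w p'_wv_w} \;+\; \ln\frac{\sum_w p'_wv_w}{\sum_w p'_wv'_w}.
\]
The second summand lies in $[-\lip_\ell(\ln f),\,\lip_\ell(\ln f)]$ because $(w,\fdb_{-i})\sim_\ell(w,\fdb'_{-i})$; the first is of the form $\ln\bigl(\expect[AB]/(\expect[A]\expect[B])\bigr)$ under $\cd{2}(\cdot\mid\fdb'_{-i})$ with $A_w=p_w/p'_w$ (mean $1$, and $\sup/\inf\le e^{4\infmul_{i\ell}}$ by \cref{def:inf} applied with $T=\{w\}$ in both directions) and $B_w=v_w$ ($\sup/\inf\le e^{\lip_i(\ln f)}$), so \cref{lem:fg} bounds it in absolute value by $\infmul_{i\ell}\lip_i(\ln f)$ (the lower bound again by the reference-measure switch). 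This establishes the Lipschitz bound, hence (II).

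The step I expect to require real care — where a naive calculation fails — is extracting the right constants from \cref{lem:fg}: the crude estimate $\expect[AB]/(\expect[A]\expect[B])\le\sup A/\inf A$ would contribute $2\infmul_{i\ell}$ instead of $\infmul_{i\ell}$ to the Lipschitz constant of $f^{(2)}$, which would replace the influence matrix $\infmulmtx$ by $2\infmulmtx$ and destroy the contraction property of the map $T$ on which \cref{t-bm} hinges; it is exactly the sharp ``geometric-mean'' bound $e^{ab}$ of \cref{lem:fg} that avoids this factor-of-two loss. The rest is routine bookkeeping: restricting all sums to supports and using $\cd{1}\ll\cd{2}$ to handle $\cd{a}$-null configurations of $\fdb_{-i}$ and zero-weight values of $\bg_i$, and passing from the pointwise bounds on $\ln f^{(1)}-\ln f^{(2)}$ to (I) via the elementary inequality quoted above.
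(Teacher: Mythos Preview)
Your proposal is correct and follows essentially the same route as the paper's proof: define the single-site conditional averages $f^{(1)}=\avgi^1 f$ and $f^{(2)}=\avgi^2 f$, split $|\ln\cd{1}(f)-\ln\cd{2}(f)|$ via the triangle inequality into a term comparing $\avgi^1 f$ with $\avgi^2 f$ under $\cd{1}$ and a term handled by applying the estimate hypothesis to $\avgi^2 f$, and bound both pieces using \cref{lem:fg}. Your Lipschitz bound $\lip_\ell(\ln f^{(2)})\le\lip_\ell(\ln f)+\infmul_{i\ell}\lip_i(\ln f)$ is exactly the paper's inequality~\eqref{eq:gross}, proved by the same factorisation. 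The only differences are cosmetic: you make the two-sided application of \cref{lem:fg} explicit via the reference-measure switch (the paper leaves this implicit), and in part~(I) you observe that the likelihood ratio is \emph{proportional} to $w\mapsto\Pr(\M(w,\fdb_{-i})\in S)$ and hence has $\sup/\inf\le e^{\idp_i}$ rather than the paper's looser $e^{2\idp_i}$, yielding $\tfrac14\idp_i\lip_i(\ln f)$ instead of $\tfrac12\idp_i\lip_i(\ln f)$ --- a genuine but harmless sharpening, since the lemma only needs $\tfrac{\idp_i}{2}$.
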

\begin{proof}
For a distribution $\cd{}$ on $X^{n}$, a database 
$\fdb \in X^{n}$, and an individual $i$, let 
$\cd{}(\cdot \mid \fdb_{-i})$ denote the conditional
distribution of $x_i$ given $\fdb_{-i}$. In other words,
$\cd{}(\cdot \mid \fdb_{-i})$ is the probability distribution
on $X$ given by 
\begin{equation} \label{eq:condprob}
  \cd{}(x \mid \fdb_{-i}) = \frac{\cd{}(x,\fdb_{-i})}
                                 {\sum_{y \in X} \cd{}(y,\fdb_{-i})}.
\end{equation}
Letting $W$ denote the vector space of real-valued functions on $X^n$,
we define an averaging operator $\avgi : W \to W$ 
which maps a function $f$ to the function 
$$\avgi f(\fdb) = \sum_{y \in X} f(y,\fdb_{-i}) \, \cd{}(y \mid \fdb_{-i}).$$
Equivalently, $\avgi f$ is the unique function 
satisfying:
\begin{enumerate}
\item The value $\avgi f(\fdb)$ depends only on $\fdb_{-i}$.
\item For any other function $g$ whose value $g(\fdb)$ depends only on
$\fdb_{-i}$, we have 
\begin{equation} \label{eq:condexp}
\cd{}(f g) = \cd{}((\avgi f) g).
\end{equation}
\end{enumerate}
Note that $\cd{}(f) = \cd{}(\avgi f)$, as can be seen from 
applying~\eqref{eq:condexp} to the constant function $g(\fdb)=1$.

For the distributions $\cd{1}$ and $\cd{2}$ defined earlier, 
let us denote the corresponding averaging operators by 
$\avgi^1$ and $\avgi^2$. Using the identities 
$\cd{1}(f) = \cd{1}(\avgi^1 f)$ and 
$\cd{2}(f) = \cd{2}(\avgi^2 f)$, we find that
\begin{equation} \label{eq:kunsch.1}
| \ln \cd{1}(f) - \ln \cd{2}(f) | \leq
| \ln \cd{1}(\avgi^1 f) - \ln \cd{1}(\avgi^2 f)| +
| \ln \cd{1}(\avgi^2 f) - \ln \cd{2}(\avgi^2 f)|.
\end{equation}
We bound the two terms on the right side separately.
For the first term, we write
\begin{equation} \label{eq:ab1}
  \left| \ln \left(
                 \frac{\cd{1}(\avgi^1 f)}{\cd{1}(\avgi^2 f)} 
             \right) \right | =
  \left| \ln \left(
                \frac{\sum_{\fdb} \cd{1}(\fdb) \avgi^1 f(\fdb)}
                     {\sum_{\fdb} \cd{1}(\fdb) \avgi^2 f(\fdb)}
             \right) \right| \leq
  \max_{\fdb} \left| \ln \left(
                            \frac{\avgi^1 f(\fdb)}{\avgi^2 f(\fdb)} 
                        \right) \right|.
\end{equation}
For any particular $\fdb$, we can bound the ratio
$\frac{\avgi^1 f(\fdb)}{\avgi^2 f(\fdb)}$ from above
using Lemma~\ref{lem:fg} applied to the probability
space $X$, under the distribution $\cd{2}(\cdot \mid \fdb_{-i})$.
Letting $A(x) = f(x,\fdb_{-i}),
B(x) = \cd{1}(x \mid \fdb_{-i})/\cd{2}(x \mid \fdb_{-i})$
we have 
\begin{align} 
  \expect[A(x) B(x)] &= \avgi^1 f(\fdb) \\
  \expect[A(x)]      &= \avgi^2 f(\fdb) \\
  \expect[B(x)]      &= 1 \\
  \tfrac{\sup A}{\inf A} &\leq e^{\lip_i(\ln f)} \\
\label{eq:ab2}
  \tfrac{\sup B}{\inf B} &\leq e^{2 \idp_i}.
\end{align}
The first four of these relations are straightforward,
and the last requires some justification. In the following
calculation we use the 
operator $\Pr(\cdot)$ to denote probabilities of events 
in the sample space where $\fdb$ is sampled from the 
original joint distribution $\jd$, and randomized
mechanism $\M$ is applied to $\fdb$. Starting from the 
definitions of $\cd{1}$ and $\cd{2}$, an application of
Bayes' Law yields the following calculation.
\begin{align*}
  \cd{1}(x \mid \fdb_{-i}) &= 
    \frac{\Pr(\fdb = (x,\fdb_{-i}) \mid \M(\fdb) \in S)}
         {\Pr(\fdb \in X \times \{\fdb_{-i}\} \mid \M(\fdb) \in S)} \\
  &= \frac{\Pr(\M(\fdb) \in S \mid \fdb = (x,\fdb_{-i}))}
          {\Pr(\M(\fdb) \in S \mid \fdb \in  X \times \{\fdb_{-i}\})} 
     \; \cdot \;
     \frac{\Pr(\fdb = (x,\fdb_{-i}))}{\Pr(\fdb \in X \times \{\fdb_{-i}\})}.
\end{align*}
The first factor on the right-hand side is between $e^{-\idp_i}$ 
and $e^{\idp_i}$,
while the second factor is equal to $\cd{2}(x \mid \fdb_{-i})$.
This completes the proof of~\eqref{eq:ab2}. By combining
Lemma~\ref{lem:fg} with the contents of~\eqref{eq:ab1}-\eqref{eq:ab2}
we obtain the bound
\begin{equation} \label{eq:kunsch.2}
| \ln \cd{1}(\avgi^1 f) - \ln \cd{1}(\avgi^2 f)| \leq
\tfrac12 \idp_i \lip_i(\ln f).
\end{equation}

To bound the second term in~\eqref{eq:kunsch.1}, 
we will make use of the 
following inequality, a multiplicative analogue 
of inequality (3.5) in \citep{Gross79}.
\begin{equation} \label{eq:gross}
\forall i,j \; \; 
\lip_j(\ln \avgi^{2} f) \leq \begin{cases}
  0 & \mbox{if $i=j$} \\
  \lip_j(\ln f) +  \infmul_{ij} \, \lip_i(\ln f) & \mbox{if $i \neq j$}.
\end{cases}
\end{equation}
The validity of~\eqref{eq:gross} is evident when $i=j$, since
the value $\avgi^2 f(\fdb)$ does not depend on $\bg_i$.
To prove~\eqref{eq:gross} when $i \neq j$, we use the definition of
$\lip_j(\cdot)$ to choose $\fdb,\adb \in X^{n}$
such that $\fdb \sim_j \adb$ and 
\begin{align} \nonumber
\lip_j(\ln \avgi^{2} f) &= \ln \avgi^{2} f(\adb) - \ln \avgi^{2} f(\fdb) \\
\nonumber
    &= \ln \left( \sum_{x \in X} \cd{2}(x \mid \adb_{-i}) f(x,\adb_{-i}) \right)
    - \ln \left( \sum_{x \in X} \cd{2}(x \mid \fdb_{-i}) f(x,\fdb_{-i}) \right) \\
\label{eq:gross.2}
    &\leq
       \left| 
        \ln \left( \frac{\sum_{x \in X} \cd{2}(x \mid \adb_{-i}) f(x,\adb_{-i})}
                        {\sum_{x \in X} \cd{2}(x \mid \adb_{-i}) f(x,\fdb_{-i})}
            \right) \right|
       +
       \left| 
        \ln \left( \frac{\sum_{x \in X} \cd{2}(x \mid \adb_{-i}) f(x,\fdb_{-i})}
                        {\sum_{x \in X} \cd{2}(x \mid \fdb_{-i}) f(x,\fdb_{-i})}
                  \right) \right|.
\end{align}
Using the fact that $\frac{f(x,\adb_{-i})}{f(x,\fdb_{-i})} \leq e^{\lip_j(\ln f)}$
for all $x \in X$, we see that the first term on the right side 
of~\eqref{eq:gross.2} is bounded
above by $\lip_j(\ln f)$. To bound the second term, we again make use of
Lemma~\ref{lem:fg}, this time substituting $A(x) = f(x,\fdb_{-i})$ and
$B(x) = \frac{\cd{2}(x \mid \adb_{-i})}{\cd{2}(x \mid \fdb_{-i})}$. Taking
expectations under
the probability distribution $\cd{2}(x \mid \fdb_{-i})$ we have
\begin{align*}
  \expect[A(x) B(x)] &= \sum_{x \in X} \cd{2}(x \mid \adb_{-i}) f(x,\fdb_{-i}) \\
  \expect[A(x)     ] &= \sum_{x \in X} \cd{2}(x \mid \fdb_{-i}) f(x,\fdb_{-i}) \\
  \expect[B(x)     ] &= 1 \\
  \tfrac{\sup A}{\inf A} &\leq e^{\lip_i(\ln f)} \\
  \tfrac{\sup B}{\inf B} &\leq e^{4 \infmul_{ij}},
\end{align*}
where the last line is justified by observing that the 
definition of $\infmul_{ij}$ ensures that $\sup B \leq e^{2 \infmul_{ij}}$
and $\inf B \geq e^{-2 \infmul_{ij}}$.
An application of Lemma~\ref{lem:fg} immediately implies
that the second term on the right side of~\eqref{eq:gross.2} is
bounded above by $ \infmul_{ij} \lip_i(\ln f)$.
This completes the proof of~\eqref{eq:gross}.

Now, our hypothesis that $\bm{\estim}$ is a multiplicative
estimate implies, by definition, that
\begin{equation} \label{eq:kunsch.3}
| \ln \cd{1}(\avgi^2 f) - \ln \cd{2}(\avgi^2 f) | \leq
\sum_{j=1}^{n} \estim_j \lip_j(\ln \avgi^2 f) \leq
\sum_{j \neq i} \estim_j \lip_j(\ln f) + 
\sum_{j=1}^{n} \estim_j \infmul_{ij} \, \lip_i(\ln f)
\end{equation}
where we have applied~\eqref{eq:gross} to derive the second
inequality. Combining~\eqref{eq:kunsch.1},~\eqref{eq:kunsch.2},
and~\eqref{eq:kunsch.3} we now have
\[
  | \ln \cd{1}(f) - \ln \cd{2}(f) | \leq 
  \sum_{j \neq i} \estim_j \lip_j(\ln f) +
  \left( \tfrac12 \idp_i + \sum_{j=1}^n \infmul_{ij} \estim_j \right) 
                \lip_i(\ln f).
\]
\end{proof}


\begin{lemma} \label{lem:T}
If $\bm{\estim}$ is a multiplicative estimate, then 
the vector
$T(\bm{\estim}) = \tfrac{1}{2n} \bm{\idp} + 
\left( 1 - \tfrac{1}{n} \right) \bm{\estim} +
\tfrac{1}{n} \infmulmtx \bm{\estim}$
is also a multiplicative
estimate.
\end{lemma}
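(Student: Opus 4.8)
The plan is to express $T$ as the arithmetic mean of the single-coordinate operators $T_i$ from \autoref{lem:Ti} and to exploit the (essentially trivial) fact that the set of multiplicative estimates is convex. First I would verify the identity
\[
  T(\bm{\estim}) = \frac{1}{n} \sum_{i=1}^{n} T_i(\bm{\estim}),
\]
which is a routine coordinate-wise check. Since $T_i(\bm{\estim})$ leaves the $\ell$-th coordinate of $\bm{\estim}$ unchanged when $i \neq \ell$ and replaces it by $\tfrac{\idp_\ell}{2} + \sum_{j} \infmul_{\ell j} \estim_j$ when $i = \ell$, the $\ell$-th coordinate of the right-hand side is $\frac1n\bigl[(n-1)\estim_\ell + \tfrac{\idp_\ell}{2} + \sum_{j} \infmul_{\ell j}\estim_j\bigr] = \bigl(1 - \tfrac1n\bigr)\estim_\ell + \tfrac{1}{2n}\idp_\ell + \tfrac1n (\infmulmtx\bm{\estim})_\ell$, which is exactly the $\ell$-th coordinate of $T(\bm{\estim})$.

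Next I would record the observation that the collection of multiplicative estimates is closed under (uniform) convex combinations. If $\bm{\estim}^{(1)},\ldots,\bm{\estim}^{(n)}$ are multiplicative estimates and $\bm{\estim} = \tfrac1n \sum_{i} \bm{\estim}^{(i)}$, then for every $f : X^n \to \reals_+$,
\[
  |\ln \cd{1}(f) - \ln \cd{2}(f)|
  = \frac1n \sum_{i=1}^{n} |\ln \cd{1}(f) - \ln \cd{2}(f)|
  \leq \frac1n \sum_{i=1}^{n} \sum_{\ell=1}^{n} \estim^{(i)}_\ell \, \lip_\ell(\ln f)
  = \sum_{\ell=1}^{n} \estim_\ell \, \lip_\ell(\ln f),
\]
so $\bm{\estim}$ satisfies \eqref{eq:est} and is itself a multiplicative estimate.

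Combining the two facts finishes the argument: by \autoref{lem:Ti} each $T_i(\bm{\estim})$ is a multiplicative estimate, and $T(\bm{\estim})$ is their average, hence a multiplicative estimate. The one point to be careful about is that in $\tfrac1n\sum_i T_i(\bm{\estim})$ the operators $T_i$ are all applied to the \emph{same} vector $\bm{\estim}$ rather than composed; once that is understood the lemma requires no further work. In other words, there is no real obstacle here: the substance of the Dobrushin-type argument lives entirely in \autoref{lem:Ti} (and in \autoref{lem:fg} behind it), and this lemma is merely the bookkeeping step that repackages the per-coordinate contractions into a single affine map $T(\bm{\estim}) = \tfrac{1}{2n}\bm{\idp} + \cmtx\bm{\estim}$ whose iteration can then be analyzed through the spectral radius of $\cmtx$.
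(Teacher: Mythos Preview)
Your proof is correct and follows exactly the same approach as the paper: express $T(\bm{\estim})$ as the average $\tfrac1n\sum_i T_i(\bm{\estim})$ and invoke the convexity of the set of multiplicative estimates together with \autoref{lem:Ti}. The paper's proof is a two-line version of yours, leaving the coordinate-wise check and the convexity observation to the reader, so you have simply filled in those details.
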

\begin{proof}
Averaging~\eqref{eq:Ti} over $i=1,\ldots,n$, we find that
$
  T(\bm{\estim}) = \tfrac1n T_i(\bm{\estim}).
$
The lemma follows because the set of multiplicative estimates
is closed under convex combinations, as is evident from the
definition of a multiplicative estimate.
\end{proof}

\begin{lemma} \label{lem:invmtx}
If 
the influence matrix $\infmulmtx$ has spectral norm
strictly less than 1, then $\frac12 \invmtx \bm{\idp}$
is an estimate.
\end{lemma}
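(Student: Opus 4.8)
The plan is to exhibit $\tfrac12 \invmtx \bm{\idp}$ as the limit of the orbit of an arbitrary multiplicative estimate under the affine map $T$ of \autoref{lem:T}, and then to note that the set of multiplicative estimates is closed, so it contains this limit.

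First I would record that the set of multiplicative estimates is non-empty: the all-ones vector $\bm{\estim} = (1,\ldots,1)$ works, since for any $f : X^n \to \reals_+$ one has $|\ln \cd{1}(f) - \ln \cd{2}(f)| \le \sup(\ln f) - \inf(\ln f) \le \sum_{i=1}^n \lip_i(\ln f)$. Starting from this $\bm{\estim}$ and writing $\cmtx = \left(1 - \tfrac1n\right) I + \tfrac1n \infmulmtx$, so that $T(\bm{\estim}) = \tfrac{1}{2n}\bm{\idp} + \cmtx \bm{\estim}$, I would iterate $T$; by \autoref{lem:T} every term of the resulting sequence is again a multiplicative estimate, and unwinding the recursion shows that the $k$-th iterate equals $\tfrac{1}{2n} \left( \sum_{m=0}^{k-1} \cmtx^m \right) \bm{\idp} + \cmtx^k \bm{\estim}$.

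Next I would pass to the limit. Since the operator ($2$-)norm is submultiplicative and $\|\cmtx\| \le \left(1 - \tfrac1n\right) + \tfrac1n \|\infmulmtx\| < 1$ whenever $\|\infmulmtx\| < 1$, we get $\cmtx^k \to 0$ and $\sum_{m=0}^{\infty} \cmtx^m = (I - \cmtx)^{-1}$. Because $I - \cmtx = \tfrac1n (I - \infmulmtx)$, this equals $n (I - \infmulmtx)^{-1} = n \invmtx$, so the iterates converge to $\tfrac{1}{2n} \cdot n \invmtx \bm{\idp} = \tfrac12 \invmtx \bm{\idp}$. To conclude, I would observe that for each fixed $f$ the defining inequality~\eqref{eq:est} cuts out a closed half-space in the variable $\bm{\estim}$ (its coefficients $\lip_i(\ln f)$ are finite), and the set of multiplicative estimates, being the intersection of these half-spaces over all $f : X^n \to \reals_+$, is therefore closed; hence it contains the limit $\tfrac12 \invmtx \bm{\idp}$.

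I do not expect a genuine obstacle here: the one substantive ingredient, that $T$ maps estimates to estimates, is exactly \autoref{lem:T}, and the remaining points---that the convex combination $\cmtx$ inherits spectral norm below $1$ from $\infmulmtx$, and that the estimate condition is closed in $\bm{\estim}$---are routine. It is also worth noting in passing that $\invmtx = \sum_{m \ge 0} \infmulmtx^m$ has non-negative entries because $\infmulmtx$ does, so the limiting vector $\tfrac12 \invmtx \bm{\idp}$ has non-negative components, as one expects of an estimate.
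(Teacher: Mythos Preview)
Your proposal is correct and follows essentially the same route as the paper: start from the all-ones vector as a multiplicative estimate, iterate the affine map $T$ using \autoref{lem:T}, use $\|\cmtx\| \le (1-\tfrac1n)+\tfrac1n\|\infmulmtx\|<1$ to pass to the limit $\tfrac{1}{2n}(I-\cmtx)^{-1}\bm{\idp} = \tfrac12\invmtx\bm{\idp}$, and conclude by closedness of the set of estimates. The only cosmetic difference is that you spell out closedness via intersecting half-spaces whereas the paper simply asserts that a limit of estimates is an estimate.
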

\begin{proof}
To begin, let us prove that 
the set of multiplicative estimates is non-empty; in fact, it
contains the vector $\bm{1} = (1,\ldots,1)$.
To see this, consider any $f : X^n = \reals_+$ and choose
$\fdb \in \argmax(f), \adb \in \argmin(f)$. Define a sequence
$(\fdb^{(k)})_{k=0}^n$ by the formula
\[
  \fdb^{(k)}_i = \begin{cases}
    \bg_i & \mbox{if $i > k$} \\
    \bg'_i & \mbox{if $i \leq k$}
  \end{cases}.
\]      
Note that $\fdb^{(0)} = \fdb$, $\fdb^{(n)} = \adb$,
and $\fdb^{(k-1)} \sim_k \fdb^{(k)}$ for $k=1,\ldots,n$.
Therefore,
\begin{align*}
| \ln \cd{1}(f) - \ln \cd{2}(f) | \leq
| \max(\ln f) - \min(\ln f) | & =
| \ln f(\fdb) - \ln f(\adb) | \\  & \leq
\sum_{k=1}^n | \ln f(\fdb^{(k-1)}) - \ln f(\fdb^{(k)}) | \leq
\sum_{k=1}^n \lip_k(\ln f),
\end{align*}
so $\bm{1}$ is an estimate as claimed.

Now let $\cmtx = \left( 1 - \frac1n \right) I + \frac1n \infmulmtx$.
Applying \autoref{lem:T} inductively, each element of the sequence
\[
  T^m(\bm{1}) = \frac{1}{2n} \left( \sum_{k=0}^{m-1} \cmtx^k 
                                  \right) \bm{\idp} + 
                     \cmtx^m \bm{1}
\]
is a multiplicative estimate. If $\| \infmulmtx \| < 1$ (where 
$\| \cdot \|$ denotes spectral norm) then $\cmtx$ also has spectral
norm less than 1 because it is a convex combination of $\infmulmtx$
and $I$. This implies that the sequence 
$(T^m(\bm{\estim}))_{m=0}^{\infty}$ converges to
$\frac{1}{2n} (I - \cmtx)^{-1} \bm{\idp}$.
Now,
\[
  (I - \cmtx)^{-1} = (\tfrac1n I - \tfrac1n \infmulmtx )^{-1} =
  n (I - \infmulmtx)^{-1},
\]
so the sequence $(T^m(\bm{\estim}))_{m=0}^{\infty}$
converges to $\frac12 \invmtx \bm{\idp}$. 
The proof concludes with the  observation that a
limit of multiplicative estimates is again a multiplicative
estimate.
\end{proof}

\subsection{Proof of \autoref{t-bm}}
\label{sec:dobrushin}
Let us begin by restating \autoref{t-bm}.

\begin{theorem}
Suppose that the joint distribution $\fdb$ has a multiplicative
influence matrix $\infmulmtx$ whose spectral norm is strictly
less than 1. Let $\invmulmtx = (\invmulentry_{ij})$ denote the
matrix inverse of $I-\infmulmtx$. Then for any mechanism with
individual privacy 
parameters $\bm{\idp} = (\idpi)$, 
the networked differential privacy guarantee
satisfies
\begin{equation} \label{eq:ndp-dobr}
\forall i \;\; 
\ndp_i \leq 2 \sum_{j=1}^{n} \invmulentry_{ij} \idp_j.
\end{equation}
If the matrix of multiplicative influences 
satisfies 
\begin{equation} \label{eq:ba-cond}
\forall i \;\;
\sum_{j=1}^{n} \infmul_{ij} \idp_j \leq (1-\delta) \idpi
\end{equation}
for some $\delta>0$, then 
\begin{equation} \label{eq:ndp-ba}
\forall i \;\; \ndp_i \leq 2 \idpi / \delta.
\end{equation}
\end{theorem}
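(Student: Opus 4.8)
The plan is to obtain both displayed inequalities from \autoref{lem:invmtx}, whose real content has already been established (via \autoref{lem:fg} and the fact that $T$ preserves multiplicative estimates); what remains is a short smoothing argument followed by a matrix computation. By \autoref{lem:invmtx}, since $\|\infmulmtx\|<1$, the vector $\tfrac12\invmtx\bm{\idp}$ is a multiplicative estimate, so for every $f:X^n\to\reals_+$,
\begin{equation} \label{eq:plan-fest}
  \bigl| \ln\cd{1}(f) - \ln\cd{2}(f) \bigr| \;\le\; \tfrac12 \sum_{k,j=1}^{n} \invmtx_{kj}\,\idp_j\,\lip_k(\ln f),
\end{equation}
where $\cd{1}$ is the posterior on databases given $\M(\fdb)\in S$ and $\cd{2}=\jd$. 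To bound $\ndp_i$ I would use the characterization~\eqref{eq:ndp} to choose $S$ with $\Pr(\M(\fdb)\in S)>0$ and values $\val_0\neq\val_1$ in $X$ so that $\ndp_i = \bigl|\ln\bigl(\tfrac{\cd{1}(f)/\cd{1}(g)}{\cd{2}(f)/\cd{2}(g)}\bigr)\bigr|$ with $f=\indic[\,x_i=\val_1\,]$ and $g=\indic[\,x_i=\val_0\,]$. Since $\lip_i(\ln f)=\lip_i(\ln g)=\infty$, one cannot apply~\eqref{eq:plan-fest} to $f$ or $g$ directly, and getting around this is the crux of the proof.

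I would resolve it by smoothing $f$ to its coordinate-$i$ average under $\jd$, i.e.\ the function $\avgi^2 f(\fdb)=\Pr_\jd(x_i=\val_1\mid\fdb_{-i})$, and likewise for $g$. A brief argument from bounded influence shows $\avgi^2 f$ is strictly positive on $\supp(\jd)$ (if it vanished somewhere it would vanish everywhere, forcing $\Pr_\jd(x_i=\val_1)=0$), so $\ln\avgi^2 f$ is well defined; moreover $\lip_i(\ln\avgi^2 f)=0$ and, by the definition of multiplicative influence, $\lip_k(\ln\avgi^2 f)\le 2\infmul_{ik}$ for $k\neq i$. Using $\cd{1}(f)=\cd{1}(\avgi^1 f)$ and $\cd{2}(f)=\cd{2}(\avgi^2 f)$, I would split
\[
  \ln\frac{\cd{1}(f)}{\cd{2}(f)} = \ln\frac{\cd{1}(\avgi^1 f)}{\cd{1}(\avgi^2 f)} + \ln\frac{\cd{1}(\avgi^2 f)}{\cd{2}(\avgi^2 f)},
\]
bound the first term by $\idp_i$ (the pointwise ratio $\avgi^1 f/\avgi^2 f$ lies in $[e^{-\idp_i},e^{\idp_i}]$ by Bayes' Law and $\idp_i$-differential privacy, exactly as in the proof of \autoref{lem:Ti}), and bound the second by applying~\eqref{eq:plan-fest} to $\avgi^2 f$ and then invoking $\infmulmtx\invmtx=\invmtx-I$ together with the non-negativity of the entries of $\invmtx$ and of $\bm{\idp}$:
\[
  \bigl|\ln\cd{1}(\avgi^2 f)-\ln\cd{2}(\avgi^2 f)\bigr| \le \tfrac12\sum_{k,j}\invmtx_{kj}\,\idp_j\,(2\infmul_{ik}) = \sum_j (\infmulmtx\invmtx)_{ij}\,\idp_j = \sum_j\invmtx_{ij}\idp_j-\idp_i.
\]
Adding the two bounds gives $\bigl|\ln\cd{1}(f)-\ln\cd{2}(f)\bigr|\le\sum_j\invmtx_{ij}\idp_j$; the identical estimate holds for $g$, and the triangle inequality then yields $\ndp_i\le 2\sum_j\invmtx_{ij}\idp_j$, which is~\eqref{eq:ndp-dobr}. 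I expect the delicate point to be exactly this telescoping: choosing the smoothed surrogate so that its log-Lipschitz constants are governed precisely by the rows of $\infmulmtx$, and checking that the $\idp_i$ error from switching between $\avgi^1$ and $\avgi^2$ cancels the $-\idp_i$ produced by $\infmulmtx\invmtx=\invmtx-I$.

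For~\eqref{eq:ndp-ba} I would work purely with the matrices. Hypothesis~\eqref{eq:ba-cond} says $\infmulmtx\bm{\idp}\preceq(1-\delta)\bm{\idp}$ in the coordinatewise order; since $\infmulmtx$ has non-negative entries it preserves that order, so by induction $\infmulmtx^m\bm{\idp}\preceq(1-\delta)^m\bm{\idp}$ for all $m\ge0$. Because $\|\infmulmtx\|<1$, the Neumann series $\invmtx=\sum_{m\ge0}\infmulmtx^m$ converges, and summing yields $\invmtx\bm{\idp}\preceq\bigl(\sum_{m\ge0}(1-\delta)^m\bigr)\bm{\idp}=\tfrac1\delta\bm{\idp}$. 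Combined with~\eqref{eq:ndp-dobr}, this gives $\ndp_i\le 2(\invmtx\bm{\idp})_i\le 2\idp_i/\delta$.
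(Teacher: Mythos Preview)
Your proposal is correct and follows essentially the same route as the paper: smooth the indicators $f,g$ to $\avgi^2 f,\avgi^2 g$, bound the $\cd{1}$-vs-$\cd{2}$ discrepancy on the smoothed functions via the multiplicative estimate $\tfrac12\invmtx\bm{\idp}$ together with $\lip_k(\ln\avgi^2 f)\le 2\infmul_{ik}$ and the identity $\infmulmtx\invmtx=\invmtx-I$, and absorb the remaining $\idp_i$ from comparing $\cd{1}(f)$ to $\cd{1}(\avgi^2 f)$; the derivation of~\eqref{eq:ndp-ba} from~\eqref{eq:ndp-dobr} via the Neumann series is likewise identical. The only cosmetic differences are that the paper works with the ratio $f/g$ directly rather than bounding $f$ and $g$ separately, and that it computes $\cd{1}(\avgg)/\cd{1}(g)$ by an explicit Bayes calculation rather than invoking the pointwise bound on $\avgi^1 f/\avgi^2 f$ (your reference to ``exactly as in the proof of \autoref{lem:Ti}'' is slightly off, since that lemma uses \autoref{lem:fg} to get $\tfrac12\idp_i\lip_i(\ln f)$, which would be infinite here---but your actual pointwise argument is correct and in fact simpler).
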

\begin{proof}

Above, in \cref{lem:invmtx}, we proved that
$\frac12 \invmtx \bm{\idp}$ is a multiplicative estimate.
In other
words, for any $f : X^{n} \to \reals_{+}$ it holds that
\begin{equation} \label{eq:fest}
\left| \ln \cd{1}(f) - \ln \cd{2}(f) \right| \leq 
\tfrac12 \sum_{i,j=1}^{n} \invmtx_{ij} \idp_j \lip_i(\ln f).
\end{equation}
To prove~\eqref{eq:ndp-dobr}, we are required to show the 
following:
if $\val_0,\val_1$ are any two distinct elements of $X$ such that 
$\Pr(\dbi = \val_0)$ and $\Pr(\dbi = \val_1)$ are both positive, then
\begin{equation} \label{eq:dobr-to-prove.1}
\left| \ln \left( \frac{
         \Pr(\dbi=\val_1 \mid \M(\fdb) \in S) \, / \, 
         \Pr(\dbi=\val_0 \mid \M(\fdb) \in S) 
       }{
         \Pr(\dbi=\val_1) \, / \, \Pr(\dbi=\val_0)
       }        
       \right)
\right|
    \leq
2 \sum_{j=1}^n \invmtx_{ij} \idp_j .
\end{equation}
We will do this by setting $f$ and $g$ to be the indicator
functions of the events $\dbi = \val_0$ and
$\dbi = \val_1$, respectively. Then~\eqref{eq:dobr-to-prove.1}
can be rewritten in the form
\begin{equation} \label{eq:dobr-to-prove.2}
\left| \ln \left( \frac{
          \cd{1}(f) / \cd{1}(g)
       }{
          \cd{2}(f) / \cd{2}(g)
       }  
       \right)
\right|      
    \leq
2 \sum_{j=1}^n \invmtx_{ij} \idp_j .
\end{equation}
If the Lipschitz constants of $f$ and $g$ satisfied
$\lip_j(\ln f) = \lip_j(\ln g) = 0$ for $j \neq i$ 
and $\lip_i(\ln f), \lip_i(\ln g) \leq 1$, 
then~\eqref{eq:dobr-to-prove.2} would follow
immediately by applying~\eqref{eq:fest} to $f$
and $g$ separately. Instead 
$\lip_i(f) = \lip_i(g) = \infty$
so we will have to be more indirect,
applying~\eqref{eq:fest} to $\avgf$ and $\avgg$
where $\avg$ is an averaging operator designed
to smooth out $f$ and $g$, thereby improving
their Lipschitz constants. Specifically, define
\begin{align*}
  \avgf(\fdb) = \cd{2}(\val_0,\fdb_{-i}), \qquad
  \avgg(\fdb) = \cd{2}(\val_1,\fdb_{-i}).
\end{align*}
It is 
useful to describe $\avgf$ and $\avgg$ in terms of the 
following sampling process: generate a coupled pair of
samples $(\fdb',\fdb'')$ by sampling $\fdb'$ from $\cd{2}$,
then resampling $\bg''_i$ from the conditional distribution 
$\cd{2}(\cdot \mid \fdb'_{-i})$, and then assembling
the database $\fdb'' = (\bg''_i, \fdb'_{-i})$. Then
$\avgf(\fdb)$ is the conditional probability that $\bg''_i = \val_0$
given that $\fdb' = \fdb$, and $\avgg$ is defined similarly
using $\val_1$ instead of $\val_0$. An important observation
is that the distribution of $(\fdb',\fdb'')$ is exchangeable,
\ie $(\fdb',\fdb'')$ and $(\fdb'',\fdb')$ have the same
probability. From this observation we can immediately
conclude that 
\begin{equation} \label{eq:dobr.1}
  \cd{2}(\avgf) = \cd{2}(f), \quad
  \cd{2}(\avgg) = \cd{2}(g),
\end{equation}
because $\cd{2}(\avgf)$ is the probability that $\bg''_i = \val_0$
whereas $\cd{2}(f)$ is the probability that $\bg'_i = \val_0$, and
similarly for $g$ and $\val_1$. Our strategy for 
proving~\eqref{eq:dobr-to-prove.2} will be to bound
the left side using
\begin{align} 
\nonumber
\left| \ln \left( \frac{
          \cd{1}(f) / \cd{1}(g)
       }{
          \cd{2}(f) / \cd{2}(g)
       }  
       \right)
\right|   
    & =
\left| \ln \left( \frac{
          \cd{1}(f) / \cd{1}(g)
       }{
          \cd{2}(\avgf) / \cd{2}(\avgg)
       }  
       \right)
\right| \\ 
\label{eq:dobr.2}
    & \leq
\left| \ln \left( \frac{
          \cd{1}(f) / \cd{1}(g)
       }{
          \cd{1}(\avgf) / \cd{1}(\avgg)
       }  
       \right)
\right| +
\left| \ln \left( \frac{
          \cd{1}(\avgf) / \cd{1}(\avgg)
       }{
          \cd{2}(\avgf) / \cd{2}(\avgg)
       }  
       \right)
\right| 
\end{align}
and to bound the two terms on the last line separately.
For the second term we will use~\eqref{eq:fest} applied
to $\avgf$ and $\avgg$ separately. This
requires us to bound the Lipschitz constants
$\lip_k(\ln \avgf)$ and $\lip_k(\ln \avgg)$.
Since $\avgf(\fdb)$ and $\avgg(\fdb)$ do not
depend on $\bg_i$, it is immediate that 
$\lip_f(\ln \avgf)=\lip_i(\ln \avgg)=0$.
For $k \neq i$, the definition of the multiplicative influence
parameter $\infmul_{ik}$ leads to the bounds
\begin{equation} \label{eq:dobr.3}
\lip_k(\ln \avgf), \, \lip_k(\ln \avgg) \leq 2 \infmul_{ik}.
\end{equation}
Note that~\eqref{eq:dobr.3} also holds when $k=i$ 
since $\infmul_{ii} = 0$.
Applying~\eqref{eq:fest} to $\avgf$ and $\avgg$ yields the bound
\begin{align} \nonumber
\left| \ln \left( \frac{
          \cd{1}(\avgf) / \cd{1}(\avgg)
       }{
          \cd{2}(\avgf) / \cd{2}(\avgg)
       }  
       \right)
\right| & \leq 
\left| \ln \cd{1}(\avgf) - \ln \cd{2}(\avgf) \right| +
\left| \ln \cd{1}(\avgg) - \ln \cd{2}(\avgg) \right| \\
\label{eq:dobr.4}
& \leq
2 \cdot \frac12 \cdot \sum_{k,j=1}^n (2 \infmul_{ik}) \invmtx_{kj} \idp_j 
= 2 (\infmulmtx \invmtx \bm{\idp})_i
\end{align}
Recalling that $\invmtx = (I-\infmulmtx)^{-1}$, we have
$(I - \infmulmtx) \invmtx = I$ and hence
$\infmulmtx \invmtx = \invmtx - I$. Thus, we can rewrite~\eqref{eq:dobr.4}
as
\begin{equation} \label{eq:dobr.5}
\left| \ln \left( \frac{
          \cd{1}(\avgf) / \cd{1}(\avgg)
       }{
          \cd{2}(\avgf) / \cd{2}(\avgg)
       }  
       \right)
\right|  \leq
2 \sum_{j=1}^n \invmtx_{ij} \idp_j \; - \; 2 \idp_i.
\end{equation}
Now we turn to bounding the first term
in~\eqref{eq:dobr.2}. Letting $\outc$ denote
the random variable representing the mechanism's
outcome, $\M(\fdb)$. Bayes' Law tells us that
\[
  \cd{1}(\fdb) = \frac{\cd{2}(\fdb) \Pr(\outc \in S \given \fdb)}
                      {\Pr(\outc \in S)}.
\]
Therefore,
\begin{align*}
\frac{\cd{1}(\avgg)}{\cd{1}(g)} =
\frac{\sum_{\fdb} \cd{2}(\val_1 \mid \fdb_{-i}) \cd{1}(\fdb)}
     {\sum_{\fdb} g(\fdb) \cd{1}(\fdb)} 
&=
\frac{\sum_{\fdb} \cd{2}(\val_1 \mid \fdb_{-i}) \cd{2}(\fdb) 
      \Pr(\outc \in S \given \fdb)}
     {\sum_{\fdb} g(\fdb) \cd{2}(\fdb) \Pr(\outc \in S \given \fdb)} \\
&=
\frac{\sum_{\fdb}  \cd{2}(\val_1 \mid \fdb_{-i}) \cd{2}(\fdb) 
      \Pr(\outc \in S \given \fdb)} 
     {\sum_{\fdb_{-i}} \cd{2}(\val_1,\fdb_{-i}) 
      \Pr(\outc \in S \given (\val_1,\fdb_{-i}))} \\
&=
\frac{\sum_{\fdb_{-i}} \cd{2}(\val_1 \mid \fdb_{-i})
      \sum_{\val \in \vals} \cd{2}(\val, \fdb_{-i})
          \Pr(\outc \in S \given (\val,\fdb_{-i}))} 
      {\sum_{\fdb_{-i}} \cd{2}(\val_1 \mid \fdb_{-i})
      \sum_{\val \in \vals} \cd{2}(\val, \fdb_{-i})
          \Pr(\outc \in S \given (\val_1,\fdb_{-i}))} .
\end{align*}
The right side lies between $e^{-\idpi}$ and $e^{\idpi}$
because each ratio 
$\frac{\Pr(\outc \in S \given (\val,\fdb_{-i}))}
      {\Pr(\outc \in S \given (\val_1,\fdb_{-i}))}$
lies between $e^{-\idpi}$ and $e^{\idpi}$. Thus,
\begin{equation} \label{eq:dobr.6}
\left| \ln \left( \frac{\cd{1}(\avgg)}{\cd{1}(g)} \right) \right|
  \leq \idpi 
\end{equation}
Similarly,
\begin{equation} \label{eq:dobr.7}
\left| \ln \left( \frac{\cd{1}(f)}{\cd{1}(\avgf)} \right) \right|
  \leq \idpi .
\end{equation}
Combining~\eqref{eq:dobr.6} with~\eqref{eq:dobr.7} yields the bound
\begin{equation} \label{eq:dobr.8}
\left| \ln \left( \frac{
          \cd{1}(f) / \cd{1}(g)
       }{
          \cd{1}(\avgf) / \cd{1}(\avgg)
       }  
       \right)
\right| \leq 2 \idpi.
\end{equation}
Combining~\eqref{eq:dobr.8} with~\eqref{eq:dobr.5} we obtain
the bound~\eqref{eq:dobr-to-prove.2}, which finishes the
proof of the first inequality in the theorem statement,
namely~\eqref{eq:ndp-dobr}.

To prove inequality~\eqref{eq:ndp-ba}, we use the partial
ordering on vectors defined by $\bm{a} \preceq \bm{b}$ if 
and only if $a_i \leq b_i$ for all $i$. 
The matrix $\infmulmtx$ has non-negative entries,
so it preserves this ordering: if $\bm{a} \preceq \bm{b}$ then 
$
   \forall i \; \sum_{j} \infmul_{ij} a_j \leq \sum_{j} \infmul_{ij} b_j
$
and hence $\infmulmtx \bm{a} \preceq \infmulmtx{b}$. 
Rewriting the relation~\eqref{eq:ba-cond} in the form
 $\infmulmtx \bm{\idp} \preceq (1-\delta) \bm{\idp}$
and applying induction, we find that for all $n \geq 0$,
 $\infmulmtx^n \bm{\idp} \preceq (1-\delta)^n \bm{\idp}$.
Summing over $n$ yields
\[
 \invmtx \bm{\idp} = \sum_{n=0}^{\infty} \infmulmtx^n \bm{\idp} \preceq
 \sum_{n=0}^{\infty} (1-\delta)^n \bm{\idp} = \tfrac{1}{\delta} \bm{\idp}
\]
which, when combined with~\eqref{eq:ndp-dobr}, yields~\eqref{eq:ndp-ba}. 
\end{proof}

\end{document}